\newif\ifsoda
\tikzset{>= stealth'}
\tikzstyle{vertex}=[circle, draw,fill=gray!20, inner sep=0pt, minimum size=16pt]
\tikzstyle{square}=[rectangle, draw,fill=gray!20, inner sep=0pt, minimum size=16pt]
\tikzstyle{svertex}=[circle, draw,fill=gray!20, inner sep=0pt, minimum size=12pt]
\tikzstyle{ssquare}=[rectangle, draw,fill=gray!20, inner sep=0pt, minimum size=12pt]
\tikzstyle{tvertex}=[circle, draw,fill=gray!20, inner sep=0pt, minimum size=8pt]
\tikzstyle{tsquare}=[rectangle, draw,fill=gray!20, inner sep=0pt, minimum size=6pt]
	\newtheorem{definition}{Definition}
	\newtheorem{claim}{Claim} 
	\newtheorem{remark}{Remark}[section]
	\newtheorem{theorem}{Theorem}[section]
	\newtheorem{lemma}[theorem]{Lemma}
	\newtheorem{fact}[theorem]{Fact}
	\newtheorem{definition}[theorem]{Definition}
	\newtheorem{claim}{Claim} 
	\theoremstyle{empty}
	\newtheorem{artificialtheorem}{sdfsd}
\newcommand{\jobs}{\ensuremath\mathcal{J}\xspace}
\newcommand{\machines}{\ensuremath\mathcal{M}\xspace}
\newcommand{\eps}{\varepsilon}
\newcommand{\bR}{\mathbb{R}}
\newcommand{\rb}[1]{\left( #1 \right)} 
\renewcommand{\sb}[1]{\left[ #1 \right]} 
\newcommand{\ys}{y^\star}
\newcommand{\istar}{i^\star}
\newcommand{\Size}[1]{\ensuremath{\text{size}(#1)}}
\newcommand{\Cost}[1]{\ensuremath{\text{cost}(#1)}}
\DeclareMathOperator{\cost}{\textrm{cost}}
\DeclareMathOperator{\size}{\textrm{size}}
\newcommand{\yin}{y^{\mathrm{in}}}
\newcommand{\yout}{y^{\mathrm{out}}}
\begin{document}

\ifsoda\else
	\begin{titlepage}
\fi
\title{Unrelated Machine Scheduling of Jobs with Uniform Smith Ratios}

\author{
Christos Kalaitzis\thanks{School of Computer and Communication Sciences, EPFL. \newline
Email: \texttt{\{christos.kalaitzis,ola.svensson,\ifsoda\newline\fi jakub.tarnawski\}@epfl.ch}. \newline
Supported by ERC Starting Grant 335288-OptApprox.}
\and
Ola Svensson\footnotemark[1]
\and
Jakub Tarnawski\footnotemark[1]
}

\maketitle
\ifsoda\else
	\thispagestyle{empty}
\fi

\begin{abstract}
We consider the classic problem of scheduling jobs on unrelated machines so as to minimize the weighted sum of completion times.  Recently, for a small constant $\varepsilon >0 $, Bansal et al.  gave  a $(3/2-\varepsilon)$-approximation algorithm improving upon the ``natural'' barrier of $3/2$ which follows from independent randomized rounding. In simplified terms, their result is obtained by an enhancement of independent randomized rounding via strong negative correlation properties.

In this work, we take a different approach and  propose to use the same elegant rounding scheme  for the weighted completion time objective as devised by Shmoys and Tardos for optimizing a linear function subject to makespan constraints. Our main result is a $1.21$-approximation algorithm for the natural special case where the weight of a job is proportional to its processing time (specifically, all jobs have the same Smith ratio), which expresses the notion that each unit of work has the same weight. In addition, as a direct consequence of the rounding, our algorithm also achieves a bi-criteria $2$-approximation for the makespan objective. Our technical contribution is a tight analysis of the expected cost of the solution compared to the one given by the Configuration-LP relaxation -- we reduce this task to that of understanding certain worst-case instances which are simple to analyze.
\end{abstract}

\ifsoda\else
	\end{titlepage}
\fi
\section{Introduction} \label{sec:intro}
We study the classic problem of scheduling jobs on unrelated machines so as to
minimize the weighted sum of completion times.  Formally, we are given a set
$\machines$ of machines, and a set $\jobs$ of jobs, each with a weight
$w_j \geq 0$, such that the processing time (also called \emph{size}) of job $j$ on machine $i$ is $p_{ij}\geq
0$. The objective is to find a schedule which minimizes the weighted completion
time, that is $\sum_{j\in\jobs} w_j C_j$, where $C_j$ denotes the completion
time of job $j$ in the constructed schedule. In the three-field notation
used in scheduling literature~\cite{Graham79}, this problem is denoted as $R
| | \sum w_j C_j$.

The weighted completion time objective, along with makespan and flow time
minimization, is one of the most relevant and well-studied objectives for
measuring the quality of service in scheduling. Already in 1956, Smith~\cite{Smith56}
showed a simple rule for minimizing this objective on a single machine:
schedule the jobs in non-increasing order of  $w_j/p_j$  (where $p_j$ denotes
the processing time of job $j$ on the single machine). This order is often
referred to as the Smith ordering of the jobs and the ratio $w_j/p_j$ is called
the Smith ratio of job $j$. In the case of parallel machines, the problem
becomes significantly harder. Already for identical machines (the processing
time of a job is the same on all machines), it is strongly
NP-hard, and for the more general unrelated machine model that we consider,
the problem is NP-hard to approximate within $1+\varepsilon$, for a small
$\varepsilon>0$~\cite{HoogeveenSW01}.

Skutella and Woeginger~\cite{SkutellaW99} settled the approximability for identical
machines by developing a~polynomial time approximation scheme.  That is, for
every $\varepsilon >0$, they gave a $(1+\varepsilon)$-approximation algorithm
for minimizing the weighted sum of completion times on identical parallel
machines.  

In contrast, it remains a  notorious open problem in scheduling theory to
settle the approximability in the unrelated machine model (see e.g. ``open
problem 8'' in~\cite{schuurman1999polynomial}).  First, Schulz and
Skutella~\cite{schulz2002scheduling} and independently Chudak~\cite{Chudak99} came up
with  $(3/2+\epsilon)$-approximation algorithms, employing a time-indexed LP
relaxation for the problem. Shortly thereafter, the approximation guarantee was
improved to $3/2$ by Skutella~\cite{Skutella01} and Sethuraman and
Squillante~\cite{SethuramanS99} using a clever convex quadratic programming
relaxation. All these results relied on designing a convex relaxation and then
applying \emph{independent} randomized rounding with the marginal probabilities
that were returned by the convex relaxation solution. The analysis of these
algorithms is in fact tight: it is not hard to see that any algorithm using
independent randomized rounding  cannot achieve a better approximation
guarantee than $3/2$.  Recently, Bansal et al.~\cite{BansalSS16} overcame this barrier by
designing a randomized rounding scheme that, informally, enhances independent randomized
rounding by introducing strong negative correlation properties.  Their
techniques yield a~$(3/2- \varepsilon)$-approximation algorithm with respect to
either a semidefinite programming relaxation introduced by them or the
Configuration-LP relaxation introduced in~\cite{SviridenkoW13}. Their rounding and analysis
improve and build upon methods used previously  for independent
randomized rounding. So a natural question motivating this work is: can a
different rounding approach yield significant improvements of the approximation
guarantee?

\subsection{Our results}
Departing from previous rounding approaches, we propose to use the same elegant
rounding scheme for the weighted completion time objective as devised by Shmoys
and Tardos~\cite{ShmoysT93} for optimizing a linear function subject to makespan constraints on
unrelated machines. We give a \emph{tight} analysis which shows that this approach gives a significantly improved
approximation guarantee in the special case where the Smith ratios of all jobs
that can be processed on a machine are uniform: that is, we have
$p_{ij}\in\{ w_j,\infty\}$ for all $i\in\machines$ and $j\in\jobs$.\footnote{This restriction could be seen as close to the restricted assignment problem. However, we remark that all
our results also apply to the more general (but also more notation-heavy) case where the weight of a job may
also depend on the machine. A general version of our assumption then becomes $p_{ij} \in \{\alpha_i w_{ij}, \infty\}$ for some
machine-dependent $\alpha_i >0$. Our results apply to this version because our analysis will be done locally for each machine $i$, and
therefore we will only require that the Smith ratios be uniform for each machine separately.}

This restriction, which has not been studied previously, captures the natural notion that any unit of work (processing time) on a fixed machine has the same weight.
It corresponds to the class of instances where the order of jobs on a machine does not matter. 
Compared to another natural restriction of $R | | \sum w_j C_j$ -- namely, the \emph{unweighted} sum of completion times $R | | \sum C_j$ -- it is both computationally harder (in fact, whereas the unweighted version is polynomial-time solvable \cite{Horn73,BrunoCoffman74}, our problem inherits all the known hardness characteristics of the general weighted version: see Section~\ref{sec:lower_bounds_and_hardness})
and more intuitive (it is reasonable to expect that larger jobs have larger significance).
Despite the negative results, our main theorem indicates that we obtain far better understanding of this version than what is known for the general case.

To emphasize that we are considering the case where the weight of a job is
proportional to its processing time,  we refer to this problem as {\boldmath$R || \sum
p_j C_j$} (with $p_j$ as opposed to $w_j$). With this notation, our main result
can be stated as follows:
\begin{theorem}\label{thm:main}
  For any small $\varepsilon >0$, there exists a $\frac{1+\sqrt{2}}{2} + \varepsilon < 1.21$-approximation algorithm for  $R | | \sum p_j C_j$.
  Moreover, the analysis is tight: there exists an instance for which our algorithm returns a schedule with objective value at least $\frac{1+\sqrt{2}}{2} - \varepsilon$ times the optimum value.
\end{theorem}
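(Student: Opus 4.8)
The plan is to design the algorithm in two stages and then analyze the expected cost machine-by-machine against the Configuration-LP relaxation. First I would solve the Configuration-LP (or an appropriate assignment LP with fractional job-to-machine assignments $x_{ij}$, plus enough information about the fractional schedule on each machine to recover a lower bound on the cost), using the standard techniques for approximately solving the Configuration-LP up to a $(1+\varepsilon)$ factor. From the fractional solution I would extract, for each machine $i$, a fractional assignment $(x_{ij})_{j}$ together with the LP's contribution $\LP_i$ to the objective on that machine. The second stage is to apply the Shmoys–Tardos rounding scheme to the bipartite fractional assignment: this produces an integral assignment whose marginals match the $x_{ij}$ and which, on each machine, assigns a set of jobs whose total size is tightly controlled (essentially, the rounding places at most $\lceil \sum_j x_{ij}\rceil$ "slots" worth of jobs on machine $i$, in a structured nested way). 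Because all Smith ratios on a machine are uniform ($p_{ij}\in\{w_j,\infty\}$), once the set $S_i$ of jobs on machine $i$ is fixed, the order does not matter and the cost on that machine is exactly $\sum_{j\in S_i} \binom{p_j+\text{(sizes before }j)}{}$—concretely $\frac12\big((\sum_{j\in S_i}p_j)^2 + \sum_{j\in S_i}p_j^2\big)$, so the per-machine cost is a simple quadratic functional of which jobs land there.

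The heart of the argument is then a \emph{local, per-machine} comparison: I would bound $\bE[\cost_i]$, the expected rounded cost on machine $i$, against $\LP_i$, and show that the ratio is at most $\frac{1+\sqrt2}{2}+\varepsilon$ for every machine, which by linearity of expectation gives the global guarantee. To do this I would first reduce to worst-case instances. The Shmoys–Tardos rounding, restricted to one machine, can be described by a distribution over subsets $S_i$ that depends only on the fractional values $x_{ij}$ and $p_j$; the expected cost is therefore a function of these parameters, and the LP cost $\LP_i$ is another (the LP's best fractional schedule given the marginals $x_{ij}$, which for uniform Smith ratios has a clean closed form, something like $\frac12\sum_j x_{ij}p_j\big(p_j + 2\sum_{k \text{ "before" } j} x_{ik}p_k\big)$ or the convex/Queyranne-style lower bound $\frac{1}{2}\left(\left(\sum_j x_{ij} p_j\right)^2 + \sum_j x_{ij} p_j^2\right)$ plus a correction). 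The claim is that the worst ratio is attained by very structured instances—e.g. a bounded number of job "classes," or even the continuous limit where many infinitesimal jobs of a few distinct sizes are fractionally assigned—so that maximizing $\bE[\cost_i]/\LP_i$ becomes a low-dimensional optimization problem. I expect the extremal instance to involve a handful of unit-size (relative) jobs whose fractional assignments sum to roughly the level where Shmoys–Tardos is forced to "overpack" a slot, and optimizing over that handful yields the constant $\frac{1+\sqrt2}{2}$, with $\sqrt2$ coming from balancing the quadratic cost of the extra slot against the LP value.

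For the tightness part I would simply exhibit the extremal instance found above (scaled to integral data and with many jobs so that concentration makes the randomized rounding's output close to its expectation with high probability), verify that the Configuration-LP value on it is what the local analysis predicts, and check that the expected—and hence, by concentration, the realized—cost of the algorithm's schedule is $\big(\frac{1+\sqrt2}{2}-\varepsilon\big)$ times that. The bi-criteria makespan statement is immediate from the Shmoys–Tardos guarantee: the rounding never increases the load on any machine by more than the largest job assigned there, giving the standard factor-$2$ bound relative to any makespan bound imposed as an LP constraint, with no extra work.

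The main obstacle I anticipate is the \emph{reduction to worst-case instances}: arguing rigorously that the supremum of the per-machine ratio $\bE[\cost_i]/\LP_i$ over \emph{all} configurations of $(x_{ij}, p_j)$ is attained (or approached) by a simple, low-dimensional family. This requires understanding the Shmoys–Tardos distribution over subsets precisely enough to write $\bE[\cost_i]$ in closed form, and then a convexity/exchange argument—splitting or merging jobs, shifting fractional mass between machines—to collapse an arbitrary instance to the canonical one without decreasing the ratio. Getting this reduction exactly right (so that the bound is \emph{tight} rather than merely an upper bound) is where the real care is needed; once the worst case is identified, the remaining optimization and the construction of the matching lower-bound instance should be routine.
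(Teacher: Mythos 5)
Your high-level plan (solve the Configuration-LP, round with Shmoys--Tardos, analyze machine-by-machine, reduce to worst-case instances) matches the paper's approach, and you correctly identify the hard part as the reduction to a low-dimensional worst case. However, there is a genuine gap in the benchmark you propose to compare against.

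You treat the per-machine LP contribution $\mathrm{LP}_i$ as a closed-form function of the marginals $x_{ij}$: you write it as a time-indexed bound $\frac12\sum_j x_{ij}p_j\bigl(p_j + 2\sum_{k}x_{ik}p_k\bigr)$ or the Queyranne-style bound $\frac12\bigl(\bigl(\sum_j x_{ij}p_j\bigr)^2 + \sum_j x_{ij}p_j^2\bigr)$. But the Configuration-LP's cost on a machine is $\sum_C y^\star_{iC}\cost(C) = \frac12\mathbb{E}[\size(C)^2] + \frac12\sum_j x_{ij}p_j^2$, which depends on the full distribution $y^\star_{i\cdot}$ over configurations, not just on the marginals. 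The Queyranne expression is the \emph{infimum} of this cost over all distributions with the given marginals (attained when $\size(C)$ is deterministic), so it is a strictly weaker lower bound. On the paper's tight instance (one job of size $\gamma$ with marginal $t$, plus liquid mass $\lambda$, with $t = 1-\tfrac1{\sqrt2}$, $\gamma=\tfrac12$, $\lambda=\tfrac{\sqrt2-1}{2}$), the algorithm's expected cost per machine is about $0.125$, the Configuration-LP cost is about $0.104$ (ratio $\approx 1.207 = \tfrac{1+\sqrt2}{2}$), but the Queyranne bound is only about $0.099$, giving a ratio $\approx 1.26 > 1.21$. So bounding the algorithm's expected cost against a marginal-only benchmark cannot yield the theorem.

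The paper avoids this by comparing the algorithm's \emph{output distribution over configurations} to the LP's \emph{input distribution over configurations} on the same machine; both are encoded as ``compatible function pairs'' $(f,g)$ with equal element-marginals. The LP distribution $g$ is then an additional adversarial degree of freedom (subject to compatibility), and the chain of transformations (swapping to make $f_1,f_r$ monotone, liquification, rearrangement to make $f_r$ constant and $f_1=g_1$, then eradicating mixed patterns in $g$) is applied jointly to $(f,g)$ so that the ratio $\cost(f)/\cost(g)$ only increases, culminating in a three-parameter family whose supremum is exactly $\tfrac{1+\sqrt2}{2}$. Your ``convexity/exchange argument'' intuition is in the right spirit, but to make it work you must carry the LP's configuration-level distribution through the reduction rather than replacing it with a marginal-based closed form.
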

We remark that the $\varepsilon$ in the approximation guarantee arises because
we can only solve the Configuration-LP relaxation (see Section\nobreakspace \ref {sec:prelim}) up to any desired
accuracy.

Interestingly enough, a similar problem (namely, scheduling jobs with uniform Smith ratios on \emph{identical parallel machines}) was studied by
Kawaguchi and Kyan \cite{DBLP:journals/siamcomp/KawaguchiK86}.\footnote{We would like to thank the anonymous \ifsoda\else SODA 2017 \fi reviewer for bringing this work
to our attention.} They achieve the same approximation ratio as we do, by using ideas in a similar direction as ours to
analyze a natural heuristic algorithm.

As we use the rounding algorithm by Shmoys and Tardos, a pleasant side-effect is
that our algorithm can also serve as a bi-criteria $(1+\sqrt{2})/2+\varepsilon$-approximation
for the $\sum p_j C_j$ objective and $2$-approximation for
the  makespan objective.\footnote{More precisely, given any makespan threshold $T > 0$, our algorithm will return a schedule with makespan at most $2T+\varepsilon$ and cost (i.e., sum of weighted completion times) within a factor $(1+\sqrt{2})/2+\varepsilon$ of the lowest-cost schedule among those with makespan at most $T$.} This bi-objective setting was previously studied by Kumar et al.~\cite{KumarMPS09}, who gave a bi-criteria $3/2$-approximation for the general weighted completion time objective and $2$-approximation for the makespan objective. 

Our main technical contribution is a tight analysis of the algorithm with
respect to the strong Configuration-LP relaxation. Configuration-LPs have been
used to design approximation algorithms for multiple important allocation
problems, often with great success; therefore, as first noted by Sviridenko and
Wiese~\cite{SviridenkoW13}, they constitute a promising direction to explore
in search for better algorithms for $R | | \sum w_j C_j$. We hope that our
analysis can give further insights as to how the Configuration-LP can be used to
achieve this.   

On a high level, our analysis proceeds as follows. A fractional solution to the
Configuration-LP  defines, for each machine, a local probability distribution of the set
of jobs (configuration) that will be processed by that machine. At the same time, the rounding
algorithm (naturally) produces a global distribution over such assignments, which
inherits certain constraints on the local
distribution for each machine. Therefore, focusing on a single machine, we will
compare the local input distribution (i.e., the one defined by the
Configuration-LP) to the {\em worst possible} (among ones satisfying said
constraints) local output distribution that could be returned by our randomized
rounding.\footnote{For example, if we consider all distributions that assign 2 jobs to a machine, each with probability $1/2$,
then the distribution which assigns either both jobs together or no job at all, each with probability $1/2$, is the worst possible
distribution, i.e., the one that maximizes the expected cost.} In order to analyze this ratio, we will have both 
distributions undergo a series of transformations that can only worsen
the guarantee, until we bring them into such a form that computing the exact
approximation ratio is possible. As the final form also naturally
corresponds to a scheduling instance, the tightness of our analysis follows
immediately.

\subsection{Lower Bounds and Hardness} \label{sec:lower_bounds_and_hardness}

\hspace*{\parindent}
All the known hardness features of the general problem $R || \sum w_j C_j$ transfer to our version $R | | \sum p_j C_j$.

First, it is implicit in the work of Skutella that both problems are APX-hard.\footnote{APX-hardness for $R | | \sum w_j C_j$ was first proved by Hoogeveen et al.~\cite{HoogeveenSW01}. Skutella~\cite[Section 7]{Skutella01} gives a different proof, where the reduction generates instances with all jobs having weights equal to processing times.}

Furthermore,
complementing the $(1+\sqrt{2})/2$ upper bound on the integrality gap of the Configuration-LP which follows from our algorithm, we have the
following lower bound, proved in Section\nobreakspace \ref{app:conflp-integrality-gap}:
\newcommand{\integralitygapconflp}{
	The integrality gap of the Configuration-LP for $R || \sum p_j C_j$ is at least $1.08$. 
}
\begin{theorem} \label{thm:gap}
	\integralitygapconflp
\end{theorem}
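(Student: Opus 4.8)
The plan is to exhibit an explicit instance $\mathcal{I}$ of $R || \sum p_j C_j$ together with (i) an upper bound on the optimal value $\LP(\mathcal{I})$ of the Configuration-LP, via an explicit fractional solution, and (ii) a lower bound on the value $\OPT(\mathcal{I})$ of the best integral schedule, via a combinatorial argument, such that the ratio $\OPT(\mathcal{I})/\LP(\mathcal{I})$ is at least $1.08$. The useful first observation is that, since all Smith ratios on a machine are equal, the cost of a configuration $C$ on a machine is $\tfrac12\big((\sum_{j\in C}p_j)^2+\sum_{j\in C}p_j^2\big)$, so both $\LP(\mathcal{I})$ and $\OPT(\mathcal{I})$ take the form $\tfrac12\sum_j p_j^2+\tfrac12\sum_i\mathbb{E}[Y_i^2]$, where $Y_i$ is the load of machine $i$ (a fixed number for an integral schedule, a random variable for an LP solution, distributed according to the per-machine configuration distribution). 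Everything therefore reduces to second moments of machine loads, and the instance should be one where the eligibility structure is ``balanceable'' fractionally but not integrally: one wants a fractional solution in which every machine's configuration has (almost) constant load equal to the global average, so that $\mathbb{E}[Y_i^2]\approx\ell_i^2$ with negligible variance, whereas integrally the indivisibility of the jobs forces a constant fraction of machines to deviate from the average. This rules out jobs of a single common size (there one can show, using that the LP's optimal configuration distribution on a machine only ever uses consecutive load values, that the relaxation is integral), so the construction will involve jobs of at least two distinct sizes on a suitably ``rigid'' bipartite/design-like incidence structure.

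For the upper bound on $\LP(\mathcal{I})$ I would simply write down the fractional solution: for each machine, specify a small family of configurations all of (almost) the same total load, mixed so that the induced marginals $x_{ij}=\Pr[j\in C_i]$ satisfy $\sum_i x_{ij}=1$ for every job $j$; then $\sum_i\mathbb{E}[Y_i^2]=\sum_i\ell_i^2$ plus tiny variance terms, and the cost is a direct computation. The bulk of the work is the lower bound on $\OPT(\mathcal{I})$, i.e.\ showing that \emph{every} integral assignment is bad. The cleanest route is to lower-bound $\sum_i L_i^2$ over all feasible integral load vectors purely from the structure: a counting/defect argument showing that, because the large jobs cannot be split, a prescribed number of machines must receive a load differing from the average by at least a fixed amount (or must be left underfilled), and that each such deviation contributes a fixed additive penalty to $\sum_i L_i^2$; summing the penalties gives $\OPT(\mathcal{I})\ge\tfrac12\sum_j p_j^2+\tfrac12(\text{balanced value})+\Omega(|\mathcal{I}|)$, and the constant $1.08$ falls out after optimizing the free parameters of the construction (the two sizes and the relative job counts).

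The step I expect to be the real obstacle is exactly this integral lower bound. In contrast with the approximation-algorithm analysis, which reasons about one machine in isolation, here one must control a global optimization over all integral schedules, and it is easy to be fooled: many natural small instances are \emph{nearly} tight but not tight, because a clever integral rebalancing (the analogue of a cyclic orientation) recovers the balanced configuration and closes the gap. The construction therefore has to make such rebalancing provably impossible, which is where one needs either an explicit certificate — e.g.\ an LP-duality or Hall-type infeasibility argument on the load-vector polytope — or a robust combinatorial obstruction that is insensitive to how the remaining small jobs are distributed. Once the lower bound is in place, tightness of the whole statement is immediate, since the instance is concrete.
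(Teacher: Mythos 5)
Your proposal is an outline of a strategy, not a proof; the concrete content the theorem needs is exactly the part you flag as ``the real obstacle'' and do not supply. You correctly identify the shape of the construction -- two distinct job sizes, an eligibility structure that can be balanced fractionally but not integrally, and an argument reducing everything to second moments of machine loads via $\cost(C)=\tfrac12\big((\sum_{j\in C}p_j)^2+\sum_{j\in C}p_j^2\big)$ -- and this is indeed consistent with what the paper does. But you never write down an instance, never verify a fractional solution, and explicitly defer the integral lower bound, so there is nothing to check: no ratio, no constant $1.08$.

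The paper's actual proof is far more elementary than what you anticipate. It takes $4$ machines $M_1,\dots,M_4$ and $6$ jobs $J_{ij}$, one per pair of machines, each restricted to its two endpoints; $J_{12}$ and $J_{34}$ have size $3$, the other four have size $1$. The LP assigns to each $M_i$, with probability $\tfrac12$ each, either the singleton large job or the pair of small jobs it can process, for LP cost $4\cdot(\tfrac12\cdot 9+\tfrac12\cdot 3)=24$. The integral lower bound is a short case analysis (not a defect/counting argument over a parametric family): WLOG $J_{12}\to M_1$, $J_{34}\to M_3$, and $J_{13}$ must join one of them giving $22$, after which the three remaining small jobs contribute at least $4$ because only $M_2,M_4$ are empty. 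This gives $\OPT\ge 26$ and gap $\ge 13/12>1.08$. You are over-engineering: no parameter optimization, no LP-duality or Hall-type certificate is needed -- a four-machine instance with brute-force casework suffices, precisely because the instance is small enough that ``global optimization over all integral schedules'' is a handful of cases. As written, your proposal has a genuine gap: it names the right ingredients but does not assemble them, and the difficulty you worry about evaporates once you commit to a small concrete instance.
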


Finally, recall that
the $\frac 32$-approximation algorithms for the general problem $R | | \sum_j w_j C_j$ by Skutella~\cite{Skutella01} and by Sethuraman and
Squillante~\cite{SethuramanS99} are based on \emph{independent} randomized rounding of a fractional solution to a convex programming relaxation. It
was shown by Bansal et al.~\cite{BansalSS16} that this relaxation has an integrality gap of $\frac 32$ and that, moreover, no independent
randomized rounding algorithm can have an approximation ratio better than $\frac 32$. We note that both of these claims also apply to our
version. Indeed, the integrality gap example of Bansal et al. can be modified by having the job of size $k^2$ and weight $1$ instead have
size $k$ and weight $k$ (see~\cite[Claim~2.1]{BansalSS16}). For the second claim, their problematic instance already has only unit sizes and
weights. Thus, to get better than $\frac 32$-approximation to our version, one can use neither independent randomized rounding nor the
relaxation of~\cite{Skutella01,SethuramanS99}.

\subsection{Outline}
The paper is structured as follows. In Section\nobreakspace \ref {sec:prelim}, we start by defining
the Configuration-LP.  Then, in Section\nobreakspace \ref {sec:algorithm}, we describe the randomized
rounding algorithm by Shmoys and Tardos~\cite{ShmoysT93}
applied to our setting. We analyze it
Section\nobreakspace \ref {sec:analysis}.  Finally, in Section\nobreakspace \ref{app:conflp-integrality-gap} we present the proof of the
lower bound on the integrality gap.

\section{The Configuration-LP} \label{sec:prelim}
As we are considering the case where $p_{ij} \in \{w_j, \infty\}$, we let $p_j
= w_j$. We further let $\jobs_i = \{j\in \jobs : p_{ij} = p_j\}$ denote the set
of jobs that can be assigned to machine $i\in \machines$. 

For intuition, consider an optimal schedule of the considered instance. Observe
that the schedule partitions the jobs into $|\machines|$ disjoint sets $\jobs = C_1 \cup C_2 \cup \dots \cup C_{|\machines|}$, where the jobs in $C_i$ are
scheduled on machine $i$ (so $C_i \subseteq \jobs_i$).  As described in the
introduction, the jobs  in $C_i$ are scheduled in an optimal way on machine $i$
by using a Smith ordering and, since we are considering the case where all jobs have the same Smith ratio, any ordering is optimal. The cost of scheduling the set of jobs $C_i$ on machine $i$ can therefore be written as 
\begin{align*}
  \cost(C_i)=\sum\limits_{j\in C_i} p^2_j+\sum\limits_{j\neq j'\in C_i}\frac{p_jp_{j'}}{2}.
\end{align*}
To see this, note that if we pick a random schedule/permutation of $C_i$, then
the expected completion time of job $j$ is $p_j  + \sum_{j' \neq j\in C_i}
\frac{p_{j'}}{2}$. The total cost of the considered schedule is
$\sum_{i\in \machines} \cost(C_i)$.

The Configuration-LP models, for each machine, the decision of which
configuration (set of jobs) that machine should process. Formally, we have
a variable $y_{iC}$ for each machine $i\in \machines$ and each configuration
$C\subseteq \jobs_i$ of jobs. The intended meaning of $y_{iC}$ is that it takes value $1$ if
$C$ is the set of jobs that machine $i$ processes, and it takes value $0$ otherwise. The constraints of
a solution are that each machine should
process at most one configuration and that each job should be processed exactly once.  The Configuration-LP can be compactly stated as follows: 
\[
\begin{array}{rrl}
  \min & \multicolumn{2}{l}{\quad  \displaystyle \sum_{i\in \machines}\sum_{C\subseteq \jobs_i}  y_{iC} \cost(C)}\\[7mm]
  \text{s.t.} & \displaystyle\sum_{C\subseteq \jobs_i} y_{iC} \leq 1& \quad \forall i\in \machines\,, \\[6mm]
& \displaystyle\sum_{i \in \machines} \displaystyle\sum_{C\subseteq \jobs_i: j\in C} y_{iC}= 1& \quad \forall j\in
\jobs\,,\\[3mm]
& \displaystyle y_{iC}\geq 0 &\quad \forall i\in \machines, \ C\subseteq \jobs_i\,.
\end{array}
\]
This linear program has an exponential number of variables and it is therefore
non-trivial to solve; however, Sviridenko and Wiese \cite{SviridenkoW13} showed
that, for any $\varepsilon >0$, there exists a polynomial-time algorithm that gives
a feasible solution to the relaxation whose cost is at most a factor $(1+\varepsilon)$ more
than the optimum.   Hence, the
Configuration-LP becomes a powerful tool that we use  to design a good
approximation algorithm for our problem.

\section{The Rounding Algorithm}
\label{sec:algorithm}
Here, we describe our approximation algorithm. We will analyze it in the next
section, yielding Theorem~\ref{thm:main}.  The first step of our algorithm is
to solve the Configuration-LP (approximately) to obtain a fractional
solution $\ys$. We then round this solution in order  to retrieve an integral
assignment of jobs to machines. The rounding algorithm that we employ is
the same as that used by Shmoys and Tardos~\cite{ShmoysT93}
(albeit applied to a fractional solution to the Configuration-LP, instead of the
so-called Assignment-LP). For completeness, we describe the
rounding scheme in Algorithm~\ref{alg:round}; see also Figure\nobreakspace \ref {fig:algorithm}.

\begin{algorithm2e*}[t!]
\SetKwInOut{Input}{Input}\SetKwInOut{Output}{Output}
\Input{Solution $\ys$ to the Configuration-LP}
\Output{Assignment of jobs to machines} 

\vspace{2mm}
1) Define $x\in \mathbb{R}^{\machines \times \jobs}$ as follows: $x_{ij}=\sum\limits_{C\subseteq \jobs_i:j\in C}\ys_{iC}$.

 2) Let $G=(U\cup V,E)$ be the complete bipartite graph where 
  \begin{itemize}\itemsep1mm
   \item the right-hand side consists of one vertex for each job $j$, i.e., $V=\{v_j:j\in\jobs\}$,
   \item the left-hand side consists of $\lceil \sum\limits_{j\in\jobs} x_{ij}\rceil$ vertices for each machine $i$, i.e.,\newline
   $U=\bigcup\limits_{i\in\machines}\{u_{i,t}:1\leq t\leq \lceil \sum\limits_{j\in\jobs} x_{ij}\rceil\}$.
  \end{itemize}

  \vspace{-2mm}
  3) Define a fractional solution $z$ to the bipartite matching LP for $G$ (initially set to $z=\bf{0}$) by repeating the following
procedure for every machine $i\in\machines$:
  \begin{itemize}\itemsep1mm
   \item Let $k=\lceil \sum_{j\in\jobs} x_{ij}\rceil$, and let $t$ be a variable originally set to 1.
   \item Iterate over all $j\in\jobs$ in \emph{non-increasing order}  in terms of $p_j$:
   \begin{itemize}
     \item[] If $x_{ij}+\sum\limits_{j'\in\jobs}z_{u_{i,t} v_{j'}} \le 1$, then set $z_{u_{i,t} v_j}=x_{ij}$.
     
     \item[] Else, set $z_{u_{i,t} v_j}=1-\sum\limits_{j'\in\jobs}z_{u_{i,t} v_{j'}}$,
     increment $t$, and set $z_{u_{i,t} v_j}=x_{ij}-z_{u_{i,t-1} v_j}$.
   \end{itemize}
  \end{itemize}
\vspace{-2mm}
4) Decompose $z$ into a convex combination of integral matchings $z=\sum_t \lambda_t z_t$ and sample one integral matching $z^*$ by
choosing the matching $z_t$ with probability $\lambda_t$.

\vspace{1mm}
5) Schedule $j\in\jobs$ on $i\in\machines$ iff $z^*_{u_{i,t} v_j}=1$ for some $1\leq t\leq \lceil \sum\limits_{j\in\jobs} x_{ij}\rceil$.
\caption{Randomized rounding} \label{alg:round}
\end{algorithm2e*}

The first step is to define $x_{ij} = \sum_{C \subseteq \jobs_i:
j\in C} \ys_{iC}$. Intuitively, $x_{ij}$ denotes the marginal probability that
job $j$ should be assigned to machine $i$, according to $\ys$.  Note that, by the
constraint that $\ys$ assigns each job once (fractionally), we have $\sum_{i\in \machines}
x_{ij} =1$ for each job $j\in \jobs$.

In the next steps, we round the fractional solution randomly so as to satisfy
these marginals, i.e., so that the probability that job $j$ is assigned to $i$ is
$x_{ij}$. In addition, the number of jobs assigned to a machine $i$ will
closely match the expectation $\sum_{j\in \jobs} x_{ij}$: our rounding will assign either $\lfloor
\sum_{j\in\jobs} x_{ij}\rfloor$ or $\lceil
\sum_{j\in\jobs} x_{ij}\rceil$  jobs to machine $i$.
This is enforced by creating  $\lceil\sum_{j\in\jobs} x_{ij}\rceil$ ``buckets''
for each machine $i$, and  then matching the jobs to these buckets.  More
formally, this is modeled by the complete bipartite graph $G=(U \cup V, E)$
constructed in Step 2 of Algorithm~\ref{alg:round}, where vertex $u_{i,t}\in U$
corresponds to the $t$-th bucket of machine $i$.


Observe that any integral matching in $G$ that matches all the ``job'' vertices
in $V$ naturally corresponds to an assignment of jobs to machines.  
Now, Step~3 prescribes a distribution on such matchings by defining
a~fractional matching $z$. The procedure is as follows: for each machine $i$,
we iterate over the jobs $j\in \jobs$ in \emph{non-increasing order} in terms of
their size, and we insert items of size $x_{ij}$ into the first bucket until
adding the next item would cause the bucket to become full; then, we split that
item between the first bucket and the second, and we proceed likewise for all jobs
until we fill up all buckets (except possibly for the last bucket).
%
%
%
%
%
Having completed this process for all machines $i$, we end up with a fractional matching $z$ in $G$ with the following properties:
\begin{itemize} \setlength{\itemsep}{0pt}
  \item Every ``job'' vertex $v_j\in V$ is fully matched in $z$, i.e., $\sum_{i, t} z_{u_{i,t} v_j} = 1$.
  \item For every ``bucket'' vertex $u_{i,t} \in U$, we have $\sum_j z_{u_{i,t} v_j} \leq 1$, with equality if $t< \lceil \sum_{j} x_{ij} \rceil$.
 \item The fractional matching preserves the marginals, i.e., $x_{ij}=\sum_{t}z_{u_{i,t}v_j}$ for all $j\in\jobs$ and $i\in\machines$.
 \item  We have the following bucket structure: if $z_{u_{i,t} v_j} > 0$ and $z_{u_{i,t'} v_{j'}} > 0$ with $t' > t$, then $p_j \geq p_{j'}$.
\end{itemize}
The last property follows because Step~3 considered the jobs in non-increasing order of their processing times; this will be important in the analysis (see the last property of Fact~\ref{fact:output_is_nice}).

Now, we want to randomly select a matching for $G$ which satisfies the
marginals of $z$ (remember that such a matching corresponds to an assignment of
all the jobs to machines). We know that the bipartite matching LP is integral
and that $z$ is a feasible solution for the bipartite matching LP of $G$;
therefore, using an algorithmic version of Carathéodory's theorem (see e.g. Theorem 6.5.11 in
\cite{GroetschelLovaszSchrijver1993}), we can decompose $z$ into a convex combination $z=\sum_t \lambda_t
z_t$ of polynomially many integral matchings, and sample the matching $z_t$ with probability
$\lambda_t$. Then, if $z^*$ is the matching we have sampled, we simply assign job $j$
to machine $i$ iff $z^*_{u_{i,t}v_j}=1$ for some $t$. Since
$x_{ij}=\sum_{t}z_{u_{i,t}v_j}$ and $\sum_{i\in\machines} x_{ij}=1$ for all jobs
$j$, $z^*$ will match all ``job'' vertices.\footnote{We remark that this is the only part of the algorithm that employs randomness; in fact, we
can derandomize the algorithm by choosing the matching $z_k$ that minimizes the cost of the resulting
assignment.}
The above steps are described in Steps 4 and 5 of Algorithm \ref{alg:round}.

The entire rounding algorithm is depicted in Figure\nobreakspace \ref {fig:algorithm}.

\begin{figure*}[t!]
\begin{minipage}[t]{\linewidth}
    \begin{tikzpicture}[scale=0.6]
      \begin{scope}[shift={(6.5,0)}]
      \draw[draw=gray] (0,0) edge[dashed] (0,-1.8);
      \draw[draw=gray] (3,0) edge[dashed] (3,-1.8);
      \draw[draw=gray] (6,0) edge[dashed] (6,-1.8);
      \draw[draw=gray] (9,0) edge[dashed] (9,-1.8);
      \node at (1.5, -1.5) {\footnotesize 1st bucket};
      \node at (4.5, -1.5) {\footnotesize 2nd bucket};
      \node at (7.5, -1.5) {\footnotesize 3rd bucket};

      \draw[fill=black!20!white] (0,0) rectangle (2,2.2);
      \node at (1, -0.5) {\small $\nicefrac{2}{3}$};
      \draw[fill=black!50!white](2,0) rectangle (3,1.8);
      \node at (2.5, -0.5) {\small$\nicefrac{1}{3}$};
      \draw[fill=black!70!white] (3,0) rectangle (5,1.4);
      \node at (4, -0.5) {\small$\nicefrac{2}{3}$};
      \draw[pattern=north east lines]  (5,0) rectangle (7,0.8);
      \node at (6, -0.5) {\small$\nicefrac{2}{3}$};
      \draw (7,0) rectangle (8,0.3);
      \node at (7.5, -0.5) {\small$\nicefrac{1}{3}$};
      
              \node at (10.5,1.5) {$\Rightarrow$};
      
        \node at (4.5, 5.5) {\small \begin{minipage}{4.5cm} Bucketing of machine $i^\star$ \end{minipage}};
      \end{scope}

      %
      \begin{scope}[xshift=0cm]
        
        \node at (0, -0.5) {\small$0$};
        \draw[fill=black!20!white] (0,0) rectangle (1,2.2);
        \draw[fill=black!70!white](0,2.2) rectangle (1,3.6);
        \node at (1, -0.5) {\small$\nicefrac{1}{3}$};
        
        \draw[fill=black!20!white] (1,0) rectangle (2,2.2);
        \draw[pattern=north east lines]  (1,2.2) rectangle (2,3);
        \draw (1,3) rectangle (2,3.3);
        \node at (2, -0.5) {\small$\nicefrac{2}{3}$};
        
        \draw[fill=black!50!white](2,0) rectangle (3,1.8);
        \draw[fill=black!70!white] (2,1.8) rectangle (3,3.2);
        \draw[pattern=north east lines]  (2,3.2) rectangle (3,4);

        \draw (-0.4, 0) edge[-] (3.4, 0);
        \node at (3, -0.5) {\small$1$};
        
        \node at (4.75,1.5) {$\Rightarrow$};

        \node at (1.5, 5.5) {\small \begin{minipage}{4.5cm} Input distribution on \\ configurations (patterns) \\of machine $\istar$ ($\yin \leadsto g$) \end{minipage}};
      \end{scope}
      
            \begin{scope}[shift={(17,0)}]

            \node at (4.5, 5.5) {\small \begin{minipage}{4.5cm} Fractional matching \end{minipage}};
            \end{scope}
            
            \begin{scope}[shift={(11,-18)}]
            
            \node at (4.5, 5.5) {\small \begin{minipage}{4.5cm} Combination of matchings \end{minipage}};
            \end{scope}
            
                        \begin{scope}[shift={(21,-2.5)}]
                        
                        \draw (-2,6)  -- (2,6) node[midway,above] {\tiny$2/3$};
                        \draw (-2,6)  -- (2,4.75) node[midway,above] {\tiny$1/3$};
                        \draw (-2,3.5)  -- (2,3.5) node[midway,above] {\tiny$2/3$};
                        \draw (-2,3.5)  -- (2,2.25) node[midway,above] {\tiny$1/3$};
                        \draw (-2,1)  -- (2,2.25) node[midway,above] {\tiny$1/3$};
                        \draw (-2,1)  -- (2,1) node[midway,above] {\tiny$1/3$};
                        
                        \filldraw (-2,1) circle (0.2);
                        \filldraw (-2,3.5) circle (0.2);
                        \filldraw (-2,6) circle (0.2);
                        
                        \filldraw[fill=white,draw=none](1.8,2.05) rectangle (2.2,2.45);
                        \filldraw[fill=white,draw=none]  (1.8,2.05) rectangle (2.2,2.45);
                        \filldraw[fill=white,draw=none] (1.8,.8) rectangle (2.2,1.2);

                        \draw[fill=black!20!white] (1.8,5.8) rectangle (2.2,6.2);
                        \draw[fill=black!50!white](1.8,4.55) rectangle (2.2,4.95);
                        \draw[fill=black!70!white] (1.8,3.3) rectangle (2.2,3.7);
                        \draw[pattern=north east lines]  (1.8,2.05) rectangle (2.2,2.45);
                        \draw (1.8,.8) rectangle (2.2,1.2);
                        
              \node at (0,-1) {$\Downarrow$};
                        
                        \end{scope}
                        
                        \begin{scope}[shift={(21,-12)}]
                        \draw (-2,6)  -- (2,4.75);
                        \draw (-2,3.5)  -- (2,2.25);
                        
                        \filldraw (-2,1) circle (0.2);
                        \filldraw (-2,3.5) circle (0.2);
                        \filldraw (-2,6) circle (0.2);
                        
                        \filldraw[fill=white,draw=none](1.8,2.05) rectangle (2.2,2.45);
                        \filldraw[fill=white,draw=none]  (1.8,2.05) rectangle (2.2,2.45);
                        \filldraw[fill=white,draw=none] (1.8,.8) rectangle (2.2,1.2);

                        \draw[fill=black!20!white] (1.8,5.8) rectangle (2.2,6.2);
                        \draw[fill=black!50!white](1.8,4.55) rectangle (2.2,4.95);
                        \draw[fill=black!70!white] (1.8,3.3) rectangle (2.2,3.7);
                        \draw[pattern=north east lines]  (1.8,2.05) rectangle (2.2,2.45);
                        \draw (1.8,.8) rectangle (2.2,1.2);
                        
                        \node at (0,7) {$\nicefrac{1}{3} \ \times$};
                        \end{scope}
                        
                        \begin{scope}[shift={(15,-12)}]
                         
                         \draw (-2,6)  -- (2,6);
                         \draw (-2,3.5)  -- (2,3.5);
                         \draw (-2,1)  -- (2,1);
                         
                         \filldraw (-2,1) circle (0.2);
                         \filldraw (-2,3.5) circle (0.2);
                         \filldraw (-2,6) circle (0.2);
                         
                         \filldraw[fill=white,draw=none](1.8,2.05) rectangle (2.2,2.45);
                         \filldraw[fill=white,draw=none]  (1.8,2.05) rectangle (2.2,2.45);
                         \filldraw[fill=white,draw=none] (1.8,.8) rectangle (2.2,1.2);

                         \draw[fill=black!20!white] (1.8,5.8) rectangle (2.2,6.2);
                         \draw[fill=black!50!white](1.8,4.55) rectangle (2.2,4.95);
                         \draw[fill=black!70!white] (1.8,3.3) rectangle (2.2,3.7);
                         \draw[pattern=north east lines]  (1.8,2.05) rectangle (2.2,2.45);
                         \draw (1.8,.8) rectangle (2.2,1.2);
                        \node at (0,7) {$\nicefrac{1}{3} \ \times$};
                        \node at (3,7) {$+$};
                        \end{scope}
                        
                        \begin{scope}[shift={(9,-12)}]
                        
                        \draw (-2,6)  -- (2,6);
                        \draw (-2,3.5)  -- (2,3.5);
                        \draw (-2,1)  -- (2,2.25);
                        
                        \filldraw (-2,1) circle (0.2);
                        \filldraw (-2,3.5) circle (0.2);
                        \filldraw (-2,6) circle (0.2);
                        
                        \filldraw[fill=white,draw=none](1.8,2.05) rectangle (2.2,2.45);
                        \filldraw[fill=white,draw=none]  (1.8,2.05) rectangle (2.2,2.45);
                        \filldraw[fill=white,draw=none] (1.8,.8) rectangle (2.2,1.2);

                        \draw[fill=black!20!white] (1.8,5.8) rectangle (2.2,6.2);
                        \draw[fill=black!50!white](1.8,4.55) rectangle (2.2,4.95);
                        \draw[fill=black!70!white] (1.8,3.3) rectangle (2.2,3.7);
                        \draw[pattern=north east lines]  (1.8,2.05) rectangle (2.2,2.45);
                        \draw (1.8,.8) rectangle (2.2,1.2);
                        \node at (0,7) {$\nicefrac{1}{3} \ \times$};
                        \node at (3,7) {$+$};
                        \node at (-4.25,3.5) {$\Leftarrow$};
                        \end{scope}
                        
      \begin{scope}[shift={(0,-10)}]
      
      \node at (0, -0.5) {\small$0$};
      \draw[fill=black!20!white] (0,0) rectangle (1,2.2);
      \draw[fill=black!70!white](0,2.2) rectangle (1,3.6);
      \draw[pattern=north east lines]  (0,3.6) rectangle (1,4.4);
      \node at (1, -0.5) {\small$\nicefrac{1}{3}$};
      
      \draw[fill=black!20!white] (1,0) rectangle (2,2.2);
      \draw[fill=black!70!white]  (1,2.2) rectangle (2,3.6);
      \draw (1,3.6) rectangle (2,3.9);
      \node at (2, -0.5) {\small$\nicefrac{2}{3}$};
      
      \draw[fill=black!50!white](2,0) rectangle (3,1.8);
      \draw[pattern=north east lines] (2,1.8) rectangle (3,2.6);
      
      \draw (-0.4, 0) edge[-] (3.4, 0);
      \node at (3, -0.5) {\small$1$};

      \node at (1.5, -2.5) {\small \begin{minipage}{4.5cm} Output distribution on\\configurations (patterns)\\ of machine $\istar$ ($\yout \leadsto f$) \end{minipage}};
      \end{scope}

       \end{tikzpicture}
    \end{minipage}%
    \caption{A sample execution of our rounding algorithm, restricted to a single machine $\istar$. Jobs are represented by a rectangle; its height is the job's processing time and its width is its fractional assignment to $\istar$.  
    Starting from an input distribution over configurations for $\istar$, we extract the fractional assignment of each job to $\istar$, we create a bipartite graph consisting of 3 copies of $\istar$ and the jobs that are fractionally assigned to it, and then we connect the jobs to the copies of $\istar$ by iterating through the jobs in non-increasing order of $p_j$. Finally, we decompose the resulting fractional matching into a convex combination of integral matchings and we sample one of them.
	The shown output distribution is a worst-case distribution in the sense of~Section\nobreakspace \ref {sec:worstcasedist}: it maximizes the variance of makespan, subject to the marginal probabilities and the bucket structure enforced by the algorithm.
}
    
    \label{fig:algorithm}

\end{figure*}
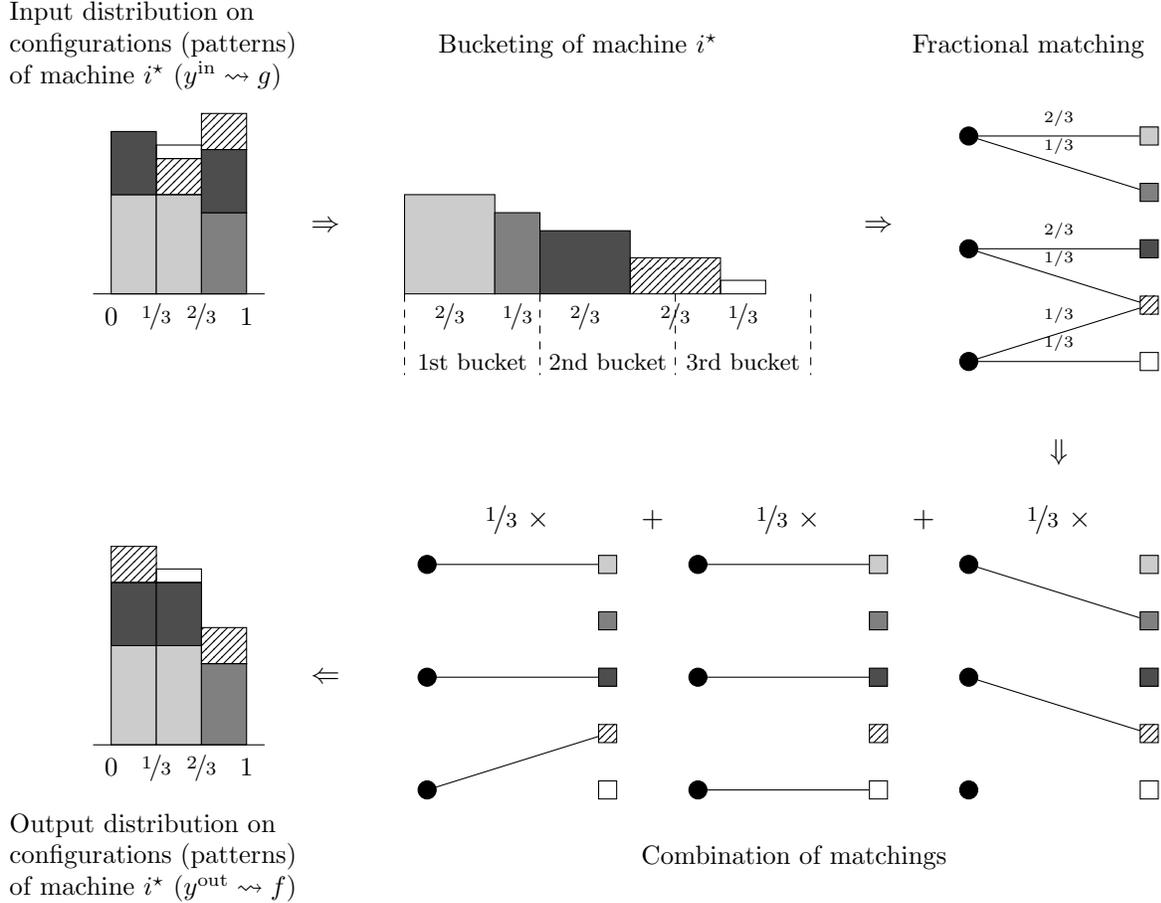

\section{Analysis}
\label{sec:analysis}

Throughout the analysis, we fix a single machine $\istar \in \machines$. We will show
that the expected cost of our algorithm on this machine is at most $\frac{1+\sqrt{2}}{2}$ times
the cost of the Configuration-LP on this machine. This clearly implies (by linearity of expectation) that the
expected cost of the produced solution (on all machines) is at most $\frac{1+\sqrt{2}}{2}$
times the cost of the LP solution (which is in turn within a factor $(1+\eps)$ of the fractional optimum).

Let $C_1, C_2, ..., C_{2^\jobs}$ be all possible configurations sorted by decreasing cost, i.e., $\cost(C_1) \ge \ldots \ge \cost(C_{2^\jobs})$.
To simplify notation, in this section we let $\jobs$ denote the set of jobs that can be processed on machine $\istar$ (i.e., $\jobs_{\istar}$).

Recall that the solution $\ys$ to the Configuration-LP gives us an \emph{input} distribution on configurations assigned to machine $\istar$,
i.e., it gives us a vector $\yin \in [0,1]^{2^\jobs}$ such that $\sum_i \yin_i = 1$. With this notation, we can write the cost of the
Configuration-LP on machine $\istar$ as \[ \sum_i \yin_i \cost(C_i). \]

In order to compare this expression with the expected cost of our algorithm on machine $\istar$, we observe that our rounding algorithm also gives a distribution on
configurations. We denote this \emph{output} distribution by $\yout \in [0,1]^{2^\jobs}$ (where $\sum_i \yout_i = 1$). Hence, the expected cost of our algorithm on machine $\istar$ is \[ \sum_i \yout_i \cost(C_i). \]

The result of this section that implies the approximation guarantee of~Theorem\nobreakspace \ref {thm:main} can now be stated as follows.
\begin{theorem} \label{thm:maintech}
  We have \[ \frac{\sum_i \yout_i \cost(C_i)}{\sum_i \yin_i \cost(C_i)} \le \frac{1+\sqrt{2}}{2}. \]
\end{theorem}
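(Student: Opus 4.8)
The plan is to first reduce Theorem~\ref{thm:maintech} to a statement about the \emph{makespan} random variable. For a configuration $C$ write $L(C) = \sum_{j \in C} p_j$; expanding $L(C)^2 = \sum_{j\in C} p_j^2 + \sum_{j\ne j'\in C} p_j p_{j'}$ and comparing with the definition of $\cost(C)$ yields the identity $\cost(C) = \tfrac12 L(C)^2 + \tfrac12 \sum_{j\in C} p_j^2$. Since Algorithm~\ref{alg:round} preserves the marginals, i.e.\ $\Pr_{\yin}[j\in C] = \Pr_{\yout}[j \in C] = x_{\istar j}$ for every job $j$ (the marginal-preservation property of the fractional matching $z$), both $\bE[\sum_{j\in C} p_j^2] = \sum_j x_{\istar j} p_j^2 =: A$ and $\bE[L(C)]=\sum_j x_{\istar j}p_j$ take the same value under $\yin$ and $\yout$. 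Hence
\[
\frac{\sum_i \yout_i \cost(C_i)}{\sum_i \yin_i \cost(C_i)} = \frac{A + \bE_{\yout}[L(C)^2]}{A + \bE_{\yin}[L(C)^2]},
\]
and it suffices to bound this ratio. Accordingly I would, on one side, upper bound $\bE_{\yout}[L(C)^2]$ over all output distributions compatible with the structural guarantees of the algorithm, and on the other side lower bound $\bE_{\yin}[L(C)^2]$ over all distributions consistent with the marginals $x_{\istar j}$; since the two distributions interact only through their shared marginals, the ratio is maximized by treating these two extremal problems jointly.

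The second and main step is to reduce both distributions to a structured, low-complexity ``canonical'' form through a sequence of transformations of the instance and of $\yin,\yout$, each of which does not decrease the ratio above. For the output I would exploit exactly the properties listed after Algorithm~\ref{alg:round}: $\yout$ decomposes over $k = \lceil \sum_j x_{\istar j}\rceil$ buckets, each configuration picks exactly one job from each full bucket (and at most one from the last), and the buckets are \emph{size-ordered} --- if $z_{u_{\istar,t} v_j}>0$ and $z_{u_{\istar,t'} v_{j'}}>0$ with $t<t'$ then $p_j \ge p_{j'}$. I would first argue that among all joint distributions over ``one job per bucket'' with the per-bucket marginals fixed, the one maximizing $\bE[L(C)^2]$ can be taken to couple the buckets monotonically (larger-size outcomes of different buckets occurring together), and then reduce the number of distinct sizes in each bucket --- ideally to a ``big'' one and a ``small'' one --- by shifting probability mass and merging jobs of nearby sizes. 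In parallel I would replace $\yin$ by the minimum-$\bE[L(C)^2]$ distribution with the given marginals and bring it, via analogous mass-shifting, into the matching canonical form. The outcome is an instance described by a handful of parameters: the number of buckets and, per bucket $t$, a big size $b_t$, a small size $s_t$ and a splitting probability $q_t$, subject to the global ordering $b_1 \ge s_1 \ge b_2 \ge s_2 \ge \cdots$.

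In the third step I would compute $A$, the best-case value of $\bE_{\yin}[L(C)^2]$ and the worst-case value of $\bE_{\yout}[L(C)^2]$ in closed form on the canonical instance and maximize the resulting expression. A further symmetrization --- for instance making all buckets identical, or collapsing to two job sizes globally --- should reduce this to a one- or two-variable calculus problem, with the bound $\frac{1+\sqrt2}{2}$ emerging as the value at its critical point and $\sqrt2$ coming from the optimality condition. Finally, because every transformation keeps $\yin$ realizable as a Configuration-LP solution and $\yout$ realizable as a possible output of Algorithm~\ref{alg:round} on a genuine instance, the canonical maximizer corresponds to an actual scheduling instance together with a schedule the algorithm can produce on it, which yields the tightness statement of Theorem~\ref{thm:main}.

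The step I expect to be the main obstacle is the reduction in the second paragraph: proving that the monotone coupling of the buckets is the extremal output distribution, that reducing to two sizes per bucket costs nothing, and --- crucially --- that these transformations remain ratio-monotone while the best-case input distribution is tracked simultaneously, since any change to the marginals or to the job sizes affects both the numerator and the denominator at once.
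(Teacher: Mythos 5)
Your reformulation at the start is correct and a clean observation that the paper does not make explicit: writing $L(C)=\sum_{j\in C}p_j$, one has $\cost(C) = \tfrac12 L(C)^2 + \tfrac12\sum_{j\in C}p_j^2$, and since the algorithm preserves marginals, $A:=\bE[\sum_{j\in C}p_j^2]$ (and also $\bE[L(C)]$) is the same under $\yin$ and $\yout$, so the ratio reduces to $(A + \bE_{\yout}[L^2])/(A+\bE_{\yin}[L^2])$. The paper only hints at this (``the output distribution maximizes the variance of makespan'') but never records the identity, so this is a genuinely tidier starting point, and it makes the structure of the problem transparent.

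However, the proposal has a real gap where the actual work lives, namely your ``second and main step.'' The paper's engine for driving both distributions to a tractable form, while guaranteeing at every step that the ratio of costs cannot decrease, is \emph{liquification} (Fact~\ref{fact:liquification_cost}): replacing an element of size $p$ by $p_1,p_2$ with $p_1+p_2=p$, in equal measures of patterns in $f$ and $g$, lowers $\cost(f)$ and $\cost(g)$ by the \emph{same} amount $-p_1p_2$, so the ratio only goes up (once it is $\geq1$). Iterating this to $\eps\to0$ yields the ``solid plus liquid'' picture with only three free parameters $(t,\gamma,\lambda)$, which is what makes Lemma~\ref{lem:bound_ratio} a one-line calculus exercise. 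Your substitute --- ``shifting probability mass and merging jobs of nearby sizes'' into a big/small pair per bucket --- is a different move, and you do not argue that it is ratio-monotone. Merging two distinct sizes into one changes $\cost(f)$ and $\cost(g)$ by amounts that depend on the other elements in each pattern, so there is no analogue of the clean additive $-p_1p_2$ cancellation; the ratio could go either way. Similarly, your canonical form with per-bucket parameters $b_t,s_t,q_t$ and a global interleaving order is much larger than three numbers, and ``a further symmetrization should reduce this to a one- or two-variable calculus problem'' is exactly the place where a concrete argument is needed but not supplied.

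A second, smaller issue: replacing $\yin$ by the minimum-$\bE[L^2]$ distribution with the same marginals is a legitimate lower bound (the LP solution restricted to machine $\istar$ need not be locally optimal, so we must bound over all compatible inputs --- the paper does this implicitly by only using compatibility), but it is not a \emph{transformation}: it does not by itself simplify the combinatorial structure of $\yin$, and you still face the problem of describing the minimizer and jointly tracking it with the transformed $\yout$ over a shared, shrinking instance. The paper sidesteps this by never separately optimizing $g$; instead Lemmas~\ref{lem:introduce_m} and~\ref{lem:introduce_t} perform \emph{coordinated} surgery on $f$ and $g$ (keeping the compatibility condition and the solid elements aligned as $f_1 = g_1$), with a careful bookkeeping in Claim~\ref{fact:cost_two_alpha} that the cost of $f$ grows by $2\alpha$ whenever that of $g$ grows by $\alpha$. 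Something of that coordinated flavor, rather than two decoupled extremal problems, seems needed. In short: your step~one is a nice alternative framing, step~three is routine once the canonical form is in hand, but step~two --- which you correctly flag as the obstacle --- is missing the liquification idea and, as written, would not go through.
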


Our strategy
for bounding this ratio
is, broadly, to work on this pair of distributions by transforming it to another pair of distributions of special form, whose ratio we will be able to bound. We transform the pair in such a way that the ratio can only increase. In other words, we prove that no pair of distributions has a worse ratio than a certain worst-case kind of pair, and we bound the ratio in that worst case.

After these transformations, our pair of distributions may no longer correspond to the original scheduling problem instance, so it will be convenient for us to work with a more abstract notion that we define now.

\subsection{Compatible Function Pairs}

Given a distribution $y \in [0,1]^{2^\jobs}$ with $\sum_i y_i = 1$, we can build a corresponding function $f$ from  $[0,1)$ to multisets of positive numbers as follows: define $f(x)$ for $x \in [0,y_1)$ to be the multiset of processing times of jobs in $C_1$,
$f(x)$ for $x \in [y_1, y_1 + y_2)$ to be the multiset of processing times of jobs in $C_2$,
and so on.\footnote{Recall that $C_1, C_2, ...$ are sorted by non-increasing cost. Thus $f$ can be thought of as a quantile function (inverse cumulative distribution function) of the distribution $y$, except in reverse order (i.e., $f(1-x)$ is a quantile function).}
If we do this for both $\yout$ -- obtaining a function $f$ -- and $\yin$ -- obtaining a function $g$ (see~Figure\nobreakspace \ref {fig:algorithm} for an illustration of $f$ and $g$), we will have produced a \emph{function pair}:

\begin{definition}
	A \emph{function pair} is a pair $(f,g)$ of stepwise-constant functions from the interval $[0,1)$ to multisets of positive numbers.
	We will call these multisets \emph{patterns} and the numbers they contain \emph{elements} (or \emph{processing times}).
\end{definition}
\paragraph{Notation.}
If $f$ is such a function, define:
\begin{itemize} 
	\item $f_1 : [0,1) \to \bR_+$ as the maximum element: $f_1(x) = \max f(x)$ (set $0$ if $f(x) = \emptyset$),
	\item $\size_f : [0,1) \to \bR_+$ as $\size_f(x) = \size(f(x))$, where
		\[ \size(f(x)) = \sum_{p \in f(x)} p, \]
	\item $f_r$ as the total size of the multiset after the removal of the maximum: $f_r(x) = \size_f(x) - f_1(x)$,
	\item $\cost(f) = \int_0^1 \cost(f(x)) \, dx$ as the fractional (expected) cost, where
		\[ \cost(f(x)) = \sum_{p \in f(x)} p \cdot \rb{ \sum_{q \in f(x), q \preceq  p} q } \]
		for an arbitrary linear order $\preceq$ on $f(x)$.\footnote{This expression does not depend on $\preceq$, since the Smith ratios are
uniform, and it is equal to the cost of a configuration giving rise to $f(x)$.}
\end{itemize}

Function pairs we work with will have special properties that follow from the algorithm. We argue about them in~\protect \MakeUppercase {F}act\nobreakspace \ref {fact:output_is_nice}. One such property comes from our algorithm preserving the marginal probabilities of jobs:
\begin{definition}
	We say that a function pair $(f,g)$ is a \emph{compatible function pair (CFP)} if the fractional number of occurences of any element is the same in $f$ and in $g$.\footnote{Formally, for each $p > 0$ we \ifsoda have the following: \else have: \fi
	$ \int_0^1 \text{multiplicity of $p$ in $f(x)$ } dx = \int_0^1 \text{multiplicity of $p$ in $g(x)$ } dx$.}
\end{definition}

\begin{fact} \label{fact:output_is_nice}
	Let $(f,g)$ be a function pair obtained from $(\yout,\yin)$ as described above. Then:
	\begin{itemize} 
		\item $(f,g)$ is a CFP.
		\item $\cost(f) = \sum_i \yout_i \cost(C_i)$ and $\cost(g) = \sum_i \yin_i \cost(C_i)$.
		\item $f$ has the following \emph{bucket structure}: for any two patterns $P$ and $Q$ in the image of $f$ and for any $i$, the $i$-th largest element of $P$ is no smaller than the $(i+1)$-th largest element of $Q$.
	\end{itemize}
\end{fact}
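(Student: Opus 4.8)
The plan is to check each of the three bullets separately; in every case this amounts to unwinding the definitions of $f$, $g$, $\cost(\cdot)$ and a CFP and then invoking the properties of the fractional matching $z$ and of the decomposition $z = \sum_t \lambda_t z_t$ that were recorded at the end of Section~\ref{sec:algorithm}.

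For the cost identities I would argue straight from the construction. By definition $f$ is constant on an interval of length $\yout_k$, where it equals the multiset of processing times of jobs in $C_k$ (and likewise $g$ with $\yin$), so $\cost(f) = \int_0^1 \cost(f(x))\,dx = \sum_k \yout_k\,\cost(f(x_k))$ for any point $x_k$ in the $k$-th interval. It then remains to observe that $\cost(f(x_k))$ — the quantity $\sum_{p} p\,\rb{\sum_{q \preceq p} q}$ for an arbitrary linear order $\preceq$ — coincides with $\cost(C_k) = \sum_{j \in C_k} p_j^2 + \sum_{j \ne j' \in C_k} p_j p_{j'}/2$: ordering $C_k$ as $j_1 \preceq \dots \preceq j_m$ gives $\sum_\ell p_{j_\ell}\rb{p_{j_1} + \dots + p_{j_\ell}} = \sum_\ell p_{j_\ell}^2 + \sum_{\ell' < \ell} p_{j_\ell} p_{j_{\ell'}}$, which is exactly $\cost(C_k)$ and in particular does not depend on $\preceq$. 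The same computation applies to $g$.

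For the CFP property, fix $p > 0$. Integrating the multiplicity of $p$ over the intervals of $g$ yields $\sum_k \yin_k \, \abs{\cb{j \in \jobs : p_j = p} \cap C_k} = \sum_{j \in \jobs : p_j = p}\ \sum_{C \ni j} \yin_C = \sum_{j \in \jobs : p_j = p} x_{\istar j}$ by the definition of $x$. Doing the same for $f$ gives $\sum_{j \in \jobs : p_j = p} \Pr[\,j \text{ is assigned to } \istar\,]$, where the probability is over the sampled matching $z^*$. Here I would use marginal preservation: since $z^* = z_t$ with probability $\lambda_t$, we have $\Pr[z^*_{u_{\istar,t}v_j} = 1] = z_{u_{\istar,t}v_j}$, and these events are disjoint over $t$ (a matching matches $v_j$ at most once), so $\Pr[j \to \istar] = \sum_t z_{u_{\istar,t}v_j} = x_{\istar j}$. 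Hence both integrals equal $\sum_{j \in \jobs : p_j = p} x_{\istar j}$, which is the CFP condition.

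For the bucket structure, note that a pattern $P$ in the image of $f$ is the multiset of jobs placed on the buckets $u_{\istar,1},\dots,u_{\istar,k}$ (with $k = \lceil \sum_j x_{\istar j}\rceil$) by some integral matching in the support of the decomposition of $z$; since each bucket $u_{\istar,t}$ with $t < k$ is fully matched in $z$, such a matching puts exactly one job in each of those and at most one in $u_{\istar,k}$. By the last property of $z$ — if $z_{u_{i,t}v_j} > 0$ and $z_{u_{i,t'}v_{j'}} > 0$ with $t' > t$ then $p_j \ge p_{j'}$ — the job placed in bucket $u_{\istar,t}$ by \emph{any} such matching is at least as large as the job placed in bucket $u_{\istar,t'}$ by \emph{any} such matching whenever $t' > t$. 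In particular the elements of $P$ are sorted in non-increasing order bucket by bucket, so the $i$-th largest element of $P$ is the job in bucket $u_{\istar,i}$ (taken to be $0$ if that bucket is empty, which can only happen for the last bucket). Applying this with $P$ at index $i$ and $Q$ at index $i+1$ gives the claim. The one point needing a little care is the empty last bucket, but it is routine since an empty last bucket forces the pattern to have exactly $k-1$ elements; I expect this bullet to be the most delicate of the three, yet none of them poses a genuine obstacle — the whole Fact is a matter of aligning the definitions with properties already established for the algorithm.
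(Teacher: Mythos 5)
Your proposal is correct and follows essentially the same route as the paper: the cost identities are unwound from the definitions, the CFP property is marginal preservation ($\Pr[j \to \istar] = \sum_t z_{u_{\istar,t}v_j} = x_{\istar j}$, summed over jobs of a given size), and the bucket structure is deduced from the fact that the $i$-th largest element of any pattern is drawn from bucket $i$ together with the monotonicity of sizes across buckets in $z$. Your write-up is somewhat more explicit than the paper's (in particular about the decomposition $z = \sum_t \lambda_t z_t$ transferring the fully-matched property of the first $k-1$ buckets to every matching in the support, and about the empty last-bucket case), but there is no substantive difference in approach.
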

\begin{proof}
	That $(f,g)$ is a CFP follows because our algorithm satisfies the marginals of the involved jobs. Namely, we know that for each job $j \in \jobs$, both distributions $\yin$ and $\yout$ have the machine $\istar$ process a~fraction $x_{\istar j} = \sum_{C \ni j} \ys_{\istar C}$ of this job (where $\ys$ is the global Configuration-LP solution). For any $p > 0$, summing this up over all jobs $j \in \jobs$ with processing time $p_j = p$ gives the compatibility condition.
	
	The equalities of costs are clear from the definition of $\cost(f)$ and $\cost(g)$.
	
	For the bucket structure of $f$, recall the algorithm: a matching is found between jobs and buckets, in which each bucket is matched with a job (except potentially for the last bucket of each machine). For any pattern $P$ in the image of $f$ and for any $i$, the $i$-th processing time in $P$ is drawn from the $i$-th bucket that was constructed by our algorithm. Moreover, all processing times in the $i$-th bucket are no smaller than those in the $(i+1)$-th bucket, because the algorithm orders jobs non-increasingly by processing times. (See~Figure\nobreakspace \ref {fig:algorithm} for an illustration of this process and of a function $f$ satisfying this bucket structure.)
\end{proof}

This was the last point in the analysis where we reasoned about how the algorithm rounds the LP solution.
From now on, we will think about elements, patterns and CFPs rather than jobs and configurations.

To prove~Theorem\nobreakspace \ref {thm:maintech}, we need to show that $\frac{\cost(f)}{\cost(g)} \le \frac{1+\sqrt{2}}{2}$. As indicated above, we will do this by proving that there is another CFP $(f',g')$ with special properties and such that $\frac{\cost(f)}{\cost(g)} \le \frac{\cost(f')}{\cost(g')}$. We will actually construct a series of such CFPs in a series of lemmas, obtaining more and more desirable properties, until we can bound the ratio. Our final objective is a CFP like the pair $(f',g')$ depicted in~Figure\nobreakspace \ref {fig:idealdistro}. 

\subsection{The Worst-Case Output} \label{sec:worstcasedist}

As a first step, we look at how costly an output distribution of our algorithm can be (while still satisfying the aforementioned bucket structure and the marginal probabilities, i.e., the compatibility condition). Intuitively, the maximum-cost $f$ is going to maximize the variance of the total processing time, which means that larger-size patterns should select larger processing times from each bucket. (See~Figure\nobreakspace \ref {fig:algorithm} for an illustration and the proof of~Lemma\nobreakspace \ref {lem:worst_case_output} for details.) From this, we extract that the largest processing time in a pattern (the function $f_1$) should be non-increasing, and this should also hold for the second-largest processing time, the third-largest, and so on. This implies the
following properties:

\newcommand{\worstcaseoutputstatement}{
	If $(f,g)$ is a CFP where $f$ has the bucket structure described in \protect \MakeUppercase {F}act\nobreakspace \ref {fact:output_is_nice}, then there exists another CFP $(f',g)$ such that
	$\frac{\cost(f)}{\cost(g)} \le \frac{\cost(f')}{\cost(g)}$
	and
	the functions $f_1'$ and $f_r'$ are non-increasing
	and
	$\size(f'(1)) \ge f_r'(0)$.
}
\begin{lemma} \label{lem:worst_case_output}
	\worstcaseoutputstatement
\end{lemma}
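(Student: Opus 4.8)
The plan is to leave $g$ untouched and replace $f$ by a rearrangement $f'$ of it that cannot decrease the cost; since $\cost(g)$ then stays fixed, the claimed inequality reduces to $\cost(f) \le \cost(f')$, while $(f',g)$ remains a CFP provided $f'$ has the same fractional element multiplicities as $f$ (which any rearrangement preserves). The first observation I would record is that for any pattern $P$,
\[
	\cost(P) = \sum_{p \in P} p \cdot \!\!\sum_{q \in P,\, q \preceq p}\!\! q \;=\; \tfrac12\Big(\size(P)^2 + \sum_{p \in P} p^2\Big),
\]
so that $\cost(f) = \tfrac12 \int_0^1 \size_f(x)^2\,dx + \tfrac12 \int_0^1 \sum_{p \in f(x)} p^2\,dx$. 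The second integral depends only on the multiset of all elements occurring in $f$, which every rearrangement of $f$ preserves; hence it suffices to produce an $f'$ with the same element profile and $\int_0^1 \size_{f'}^2 \ge \int_0^1 \size_f^2$. In words, maximizing the cost means maximizing the variance of the total size.

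Next I would make the buckets explicit: let $b_i(x)$ be the $i$-th largest element of $f(x)$ (and $0$ if $f(x)$ has fewer than $i$ elements), so $\size_f(x) = \sum_i b_i(x)$ and each $b_i$ is stepwise constant with finitely many pieces. The bucket structure of Fact~\ref{fact:output_is_nice} says precisely that $b_i(x) \ge b_{i+1}(y)$ for all $x,y$, i.e.\ $\min_x b_i(x) \ge \max_x b_{i+1}(x)$ for every $i$ (where $\min$ and $\max$ over $x$ agree with the minimum and maximum over the patterns in the image of $f$). For each $i$ let $b_i^\ast$ be the non-increasing rearrangement of $b_i$, and define $f'(x)$ to be the multiset $\{\,b_i^\ast(x) : b_i^\ast(x) > 0\,\}$. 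A rearrangement preserves the range, so $\min_x b_i^\ast = \min_x b_i \ge \max_x b_{i+1} = \max_x b_{i+1}^\ast$; this forces $b_1^\ast(x) \ge b_2^\ast(x) \ge \cdots$ pointwise, so $f_i' = b_i^\ast$, the function $f'$ is stepwise constant, it inherits the bucket structure, and it has the same element multiplicities as $f$, making $(f',g)$ a CFP.

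The cost comparison is then the Hardy--Littlewood rearrangement inequality applied bucketwise: expanding $\int_0^1 \size_{f'}^2 = \sum_{i,j}\int_0^1 b_i^\ast b_j^\ast$, the diagonal terms are unchanged ($\int (b_i^\ast)^2 = \int b_i^2$), while each off-diagonal term satisfies $\int b_i^\ast b_j^\ast \ge \int b_i b_j$ since $b_i^\ast,b_j^\ast$ are monotone in the same direction; summing gives $\int \size_{f'}^2 \ge \int \size_f^2$, hence $\cost(f') \ge \cost(f)$. Finally I would verify the structural conclusions: $f_1' = b_1^\ast$ is non-increasing by construction; $f_r' = \size_{f'} - f_1' = \sum_{i \ge 2} b_i^\ast$ is a sum of non-increasing functions, hence non-increasing; and, reading off $b_i^\ast(0) = \max_x b_i(x)$ and $b_i^\ast(1) = \min_x b_i(x)$, the telescoping bound $\sum_{i \ge 1}\min_x b_i(x) \ge \sum_{i \ge 1}\max_x b_{i+1}(x) = \sum_{i \ge 2}\max_x b_i(x)$ gives $\size(f'(1)) = \sum_{i \ge 1} b_i^\ast(1) \ge \sum_{i \ge 2} b_i^\ast(0) = f_r'(0)$.

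I expect the only genuinely substantive step to be the first one --- spotting the identity $\cost(P) = \tfrac12(\size(P)^2 + \sum_{p} p^2)$, which isolates $\int_0^1 \size_f^2$ as the sole rearrangement-sensitive part of the cost. After that, the transformation and its optimality are just the rearrangement inequality applied bucket by bucket, and the rest (checking that the bucket structure and stepwise-constant form survive, handling empty buckets, and the telescoping bound) is routine bookkeeping.
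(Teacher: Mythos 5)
Your proof is correct, but it takes a genuinely different route from the paper. The paper proves the lemma by a local swap argument: it repeatedly finds $x,y,i$ with $\size(f(x)) > \size(f(y))$ but $f_i(x) < f_i(y)$, swaps the $i$-th largest elements of those two patterns, verifies by a direct computation that each swap adds $\left(f_i(y)-f_i(x)\right)\left(R_x-R_y\right) > 0$ to the cost, and then sorts by $\size_f$; the monotonicity and $\size(f'(1)) \ge f_r'(0)$ conclusions are read off from the terminal configuration and preservation of the bucket structure. You instead isolate the rearrangement-sensitive part of the cost via the identity $\cost(P) = \tfrac12\bigl(\size(P)^2 + \sum_{p\in P} p^2\bigr)$ (which is the same expression the paper already writes in Section~2, just used here to a different end), observe that only $\int_0^1 \size_f^2$ can change under a multiplicity-preserving rearrangement, and then build $f'$ in one shot by taking the non-increasing rearrangement $b_i^\ast$ of each bucket function, invoking Hardy--Littlewood for the cross terms $\int b_i^\ast b_j^\ast \ge \int b_i b_j$. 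Your version is more conceptual and avoids the (implicit) termination argument for the paper's iterative procedure, at the cost of appealing to the rearrangement inequality rather than an elementary exchange; the paper's version is self-contained and makes visible exactly where the gain $\left(f_i(y)-f_i(x)\right)\left(R_x-R_y\right)$ comes from. Your verification of the three structural conclusions, including the pointwise ordering $b_1^\ast \ge b_2^\ast \ge \cdots$ via $\min_x b_i \ge \max_x b_{i+1}$ and the telescoping bound for $\size(f'(1)) \ge f_r'(0)$, is sound.
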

The proof is a simple swapping argument.
\begin{proof}
	For $i = 1, 2, ...$, let $f_i(x)$ always denote the $i$-th largest element of $f(x)$.
	As suggested above, we will make sure that for each $i$, the function $f_i$ is non-increasing.
	
	Namely, we repeat the following procedure: as long as there exist $x$, $y$ and $i$ such that $\size(f(x)) > \size(f(y))$ but $f_i(x) < f_i(y)$, swap the $i$-th largest elements in $f(x)$ and $f(y)$.\footnote{Formally, choose $\tau > 0$ such that $f$ is constant on $[x,x+\tau)$ and on $[y,y+\tau)$ and perform the swap in these patterns.}
	
	Once this is no longer possible, we finish by ``sorting'' $f$ so as to make $\size_f$ non-increasing.
	
	Let us verify that once this routine finishes, yielding the function $f'$, we have the desired properties:
	\begin{itemize}
		\item The function $f_1'$ is non-increasing.
		\item The same holds for the function $f_r'$, since $f_r' = f_2' + f_3' + ...$ and each $f_i'$ is non-increasing.
		\item The procedure maintains the bucket structure, which implies that $f_i'(x) \ge f_{i+1}'(y)$ for all $i$, $x$ and $y$. Thus
		\[ f'(1) = f_1'(1) + f_2'(1) + ... \ge f_2'(0) + f_3'(0) + ... = f_r'(0). \]
		\item It remains to show that $\cost(f') \ge \cost(f)$. Without loss of generality, assume there was only a single swap (as the sorting step is insignificant for the cost). For computing the cost of the involved patterns, we will think that the involved elements went last (since the order does not matter); let $R_x = \size(f(x)) - f_i(x)$ and $R_y = \size(f(y)) - f_i(y)$ be the total sizes of the elements not involved.
		\ifsoda
			Then $R_x \ge R_y$, while $\Delta \cost(f(x))$ is equal to
			\begin{align*}
				& f_i(y) \rb{ R_x + f_i(y) } - f_i(x) \rb{ R_x + f_i(x) } \\
				                   &= \rb{ f_i(y) - f_i(x) } R_x + f_i(y)^2 - f_i(x)^2, 
			\end{align*}
			and $\Delta \cost(f(y))$ is equal to
			\begin{align*}
				& f_i(x) \rb{ R_y + f_i(x) } - f_i(y) \rb{ R_y + f_i(y) } \\
				                   &= \rb{ f_i(x) - f_i(y) } R_y + f_i(x)^2 - f_i(y)^2,
			\end{align*}
			thus
			$\Delta \cost(f(x)) + \Delta \cost(f(y))$ is equal to $ \rb{ f_i(y) - f_i(x) } \rb{ R_x - R_y } > 0$. 
		\else
			Then $R_x \ge R_y$ and
			\begin{align*}
				\Delta \cost(f(x)) &= f_i(y) \rb{ R_x + f_i(y) } - f_i(x) \rb{ R_x + f_i(x) } \\
				                   &= \rb{ f_i(y) - f_i(x) } R_x + f_i(y)^2 - f_i(x)^2, \\
				\Delta \cost(f(y)) &= f_i(x) \rb{ R_y + f_i(x) } - f_i(y) \rb{ R_y + f_i(y) } \\
				                   &= \rb{ f_i(x) - f_i(y) } R_y + f_i(x)^2 - f_i(y)^2,
			\end{align*}
			thus
			\[ \Delta \cost(f(x)) + \Delta \cost(f(y)) = \rb{ f_i(y) - f_i(x) } \rb{ R_x - R_y } > 0. \]
		\fi
	\end{itemize}
\end{proof}

\subsection{Liquification}

One of the most important operations we will employ is called \emph{liquification}. It is the process of replacing an element (processing time) with many tiny elements of the same total size. These new elements will all have a size of $\eps$ and will be called \emph{liquid elements}. Elements of size larger than $\eps$ are called \emph{solid}. One should think that $\eps$ is arbitrarily small, much smaller than any $p_j$; we will usually work in the limit $\eps \to 0$.

The intuition behind applying this process to our pair is that in the ideal worst-case setting which we are moving towards, there are only elements of two sizes: large and infinitesimally small. We will keep a certain subset of the elements intact (to play the role of large elements), and liquify the rest in~Lemma\nobreakspace \ref {lem:introduce_m}.

Our main claim of this section is that replacing an element with smaller ones of the same total size (in both $f$ and $g$) can only increase the ratio of costs. Thus we are free to liquify elements in our analysis as long as we make sure to liquify the same amount of every element in both $f$ and $g$ ($f$ and $g$ remain compatible).

\newcommand{\liquificationcoststatement}{
	Let $(f,g)$ be a CFP and $p, p_1, p_2 > 0$ with $p = p_1 + p_2$. Suppose $(f',g')$ is a CFP obtained from $(f,g)$ by replacing $p$ by $p_1$ and $p_2$ in subsets of patterns in $f$ and $g$ of equal measures.\footnote{Formally, suppose $I_f, I_g \subseteq [0,1)$ are finite unions of disjoint intervals of equal total length such that all patterns $f(x)$ and $g(y)$ for $x \in I_f$, $y \in I_g$ contain $p$; in all these patterns, remove $p$ and add $p_1, p_2$.} Then $\frac{\cost(f)}{\cost(g)} \le \frac{\cost(f')}{\cost(g')}$.
}
\begin{fact} \label{fact:liquification_cost}
	\liquificationcoststatement
\end{fact}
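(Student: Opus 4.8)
The plan is to reduce the claim to an elementary inequality about real numbers, after showing that liquification changes $\cost(f)$ and $\cost(g)$ by the \emph{same additive amount}.

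First I would prove the single-split identity: if $P$ is a pattern containing the element $p$, and $P'$ is obtained from $P$ by deleting one copy of $p$ and inserting $p_1,p_2$ with $p = p_1 + p_2$, then $\cost(P') = \cost(P) - p_1 p_2$, regardless of the other elements of $P$. Since $\cost(\cdot)$ does not depend on the linear order $\preceq$, I may place the split elements (resp.\ $p$) last; writing $S = \size(P) - p$ for the total size of the remaining elements, this gives $\cost(P) = \cost(P \setminus \{p\}) + p\,(S+p)$ and $\cost(P') = \cost(P \setminus \{p\}) + p_1(S+p_1) + p_2(S+p_1+p_2)$. Subtracting, and using $p^2 = (p_1+p_2)^2 = p_1^2 + p_2^2 + 2p_1 p_2$, leaves exactly $\cost(P') - \cost(P) = (p_1 + p_2 - p)S - p_1 p_2 = -p_1 p_2$. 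In particular the change is independent of the rest of the pattern.

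Integrating this over the subsets on which the operation is performed, and letting $\mu = |I_f| = |I_g|$ be their common measure, yields $\cost(f') = \cost(f) - \mu\, p_1 p_2$ and $\cost(g') = \cost(g) - \mu\, p_1 p_2$. If $\mu = 0$ nothing changes and the statement is trivial; otherwise $\cost(g') > 0$, because every pattern $g(x)$ with $x \in I_g$ contains $p$ and therefore has cost at least $p^2 > p_1 p_2$, so $\cost(g) \ge \mu\, p^2 > \mu\, p_1 p_2$.

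It then remains to observe that
\[
	\frac{\cost(f')}{\cost(g')} - \frac{\cost(f)}{\cost(g)} = \frac{\mu\, p_1 p_2\,\bigl(\cost(f) - \cost(g)\bigr)}{\cost(g)\,\cost(g')},
\]
which is nonnegative exactly when $\cost(f) \ge \cost(g)$. This sign condition is the only real obstacle and the one point worth flagging: liquification helps precisely when the current ratio is already at least $1$. It entails no loss of generality, since if $\cost(f) < \cost(g)$ then $\frac{\cost(f)}{\cost(g)} < 1 \le \frac{1+\sqrt 2}{2}$ and Theorem~\ref{thm:maintech} already holds for $(f,g)$ without any further transformation; moreover, in the chain of transformations used to prove Theorem~\ref{thm:maintech} every step (including liquification itself, which subtracts the same quantity from both costs) only increases the ratio, so the inequality $\cost(f) \ge \cost(g)$, once assumed, is preserved throughout.
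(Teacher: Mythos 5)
Your proof is correct and takes essentially the same route as the paper's: both reduce to the per-pattern cost change $-p_1 p_2$ (independent of the rest of the pattern), observe that $\cost(f)$ and $\cost(g)$ drop by the same total amount, and conclude that the ratio cannot decrease once it is already at least $1$, with the case $\cost(f) < \cost(g)$ dismissed because Theorem~\ref{thm:maintech} is then trivial. You are slightly more explicit about this caveat and about $\cost(g') > 0$, which the paper leaves implicit in its parenthetical ``(otherwise we are done),'' but the core argument is identical.
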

\begin{proof}
	Consider a pattern $P$ in which $p$ was replaced. We calculate the change in cost $\Delta := \cost(P \setminus \{p\} \cup \{p_1, p_2\}) - \cost(P)$. As the order does not matter, the cost can be analyzed with $p$ (or $p_1, p_2$) being first, so
	\ifsoda
		  \begin{align*}
		    \Delta &= 
		    (p_1^2 + p_2 (p_1 + p_2)) - p^2 \\ &= (p_1^2 + p_1p_2 + p_2^2) - (p_1 + p_2)^2 \\&= - p_1p_2 \\ &\le 0
		  \end{align*}
	  \else
		  \begin{align*}
		    \Delta = 
		    (p_1^2 + p_2 (p_1 + p_2)) - p^2 = (p_1^2 + p_1p_2 + p_2^2) - (p_1 + p_2)^2 = - p_1p_2 \le 0
		  \end{align*}
	  \fi
	  and it does not depend on the other elements in $P$.
	  Thus, if we make the replacement in a fraction $\tau$ of patterns, then we have
	\[
	\frac{\Cost{f'}}{\Cost{g'}}=\frac{\Cost{f} + \tau \Delta}{\Cost{g} + \tau \Delta} \geq \frac{\Cost{f}}{\Cost{g}}
	\]
	since $\frac{\cost(f)}{\cost(g)} \ge 1$ to begin with (otherwise we are done) and $\Delta \le 0$.
\end{proof}
By corollary, we can also replace an element of size $p$ with $p/\eps$ liquid elements (of size $\eps$ each).

\subsection{The Main Transformation}

In this section we describe the central transformation in our analysis. It uses liquification and rearranges elements in $f$ and $g$ so as to obtain two properties: that $f_r$ is constant and that $f_1 = g_1$.
(The process is explained in~Figure\nobreakspace \ref {fig:define_m}, and a resulting CFP is shown in the upper part of~Figure\nobreakspace \ref {fig:idealdistro}.)
This greatly simplifies the setting and brings us quite close to our ideal CFP (depicted in the lower part of~Figure\nobreakspace \ref {fig:idealdistro}).

\newcommand{\introducemstatement}{
	If $(f,g)$ is a CFP with $f_1$ and $f_r$ non-increasing and $\size(f(1)) \ge f_r(0)$, then there exists another CFP $(f',g')$ with $\frac{\cost(f)}{\cost(g)} \le \frac{\cost(f')}{\cost(g')}$ such that:
	\begin{enumerate}[(a)] 
		\item $f_r'$ is constant.
		\item $f_1' = g_1'$ and it is non-increasing.
		\item There exists $m \in [0,1]$ such that:
			\begin{itemize}
				\item for $x \in [0,m)$, $f'(x)$ has liquid elements and exactly one solid element,
				\item for $x \in [m,1)$, $f'(x)$ has only liquid elements.
			\end{itemize}
	\end{enumerate}
}
\begin{lemma} \label{lem:introduce_m}
	\introducemstatement
\end{lemma}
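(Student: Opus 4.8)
The plan is to reach $(f',g')$ through a short sequence of moves: liquify the ``small'' elements so that $f$ and $g$ become one solid element plus liquid; pick a threshold $m$; liquify the solid element of $f$ on the tail $[m,1)$; and finally pour the resulting liquid so that the non-maximal part of $f$ is constant. Every liquification is performed compatibly in $f$ and $g$, so by \protect \MakeUppercase {F}act\nobreakspace \ref {fact:liquification_cost} it can only raise the cost ratio; the final ``pouring'' step acts on $f$ alone (we only permute $g$), so for it we only need that $\cost(f)$ does not decrease, and the whole point of choosing $m$ carefully is to guarantee exactly this.

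First I would clean up the pair. Using \protect \MakeUppercase {F}act\nobreakspace \ref {fact:liquification_cost}, liquify in $f$ every non-maximal element, replacing, at each $x$, all of $f(x)$ except $f_1(x)$ by liquid of total size $f_r(x)$, and perform the corresponding liquifications in $g$. Because $(f,g)$ is a CFP, each value occurs equally often in $f$ and in $g$; choosing appropriately which occurrences to liquify in $g$ (and, if some value is spread over far more patterns of $f$ than of $g$, liquifying that value down in $f$ as well, by the same amount in $g$) one can arrange, after re-sorting both functions, that $f$ and $g$ each have exactly one solid element per pattern and $f_1 = g_1$ is non-increasing. The hypotheses that $f_1,f_r$ are non-increasing and $\size(f(1)) \ge f_r(0)$ survive this.

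Next, the choice of $m$. Put $R = \int_0^1 f_r$, $P = \int_0^1 f_1$, $S = R+P$, and for $m \in [0,1)$ set $c(m) = R + \int_m^1 f_1$, so $c$ decreases from $S$ (at $m=0$) to $R$ (at $m=1$). Having fixed $m$, liquify $f_1$ on $[m,1)$ (and the matching amount in $g$); call the result $\tilde f$. Now $\tilde f$ has, on $[0,m)$, one solid $f_1(x)$ plus liquid $f_r(x)$ and, on $[m,1)$, only liquid of total $\size_f(x)$, and the total quantity of liquid in $\tilde f$ equals $R + \int_m^1 f_1 = c(m)$. Finally, redistribute the liquid so that every pattern carries liquid of size exactly $c(m)$: this $f'$ has $f'_r \equiv c(m)$ (property (a)), one solid on $[0,m)$ and only liquid on $[m,1)$ (property (c)), and $f'_1 = f_1 \cdot \one_{[0,m)}$, which is non-increasing and equals $g'_1$, where $g'$ is the (rearranged) output of the $g$-side operations — giving (b). Here one uses that, since before the tail-liquification $g$ was one solid per pattern with the same solid-distribution as $f_1$, liquifying the values taken by $f_1$ on $[m,1)$ leaves $g$ with solid-distribution equal to that of $f_1|_{[0,m)}$, i.e.\ of $f'_1$. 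Throughout, $(f',g')$ stays a CFP, and the liquifications already handled the ratio; only the final redistribution (which changes $\cost(f)$ but not $\cost(g')$) remains to be justified.

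It therefore remains to verify $\cost(f') \ge \cost(\tilde f)$ for a good choice of $m$ — this is the heart of the proof. Using that a pattern with one solid $s$ and liquid of total size $\ell$ costs $s^2 + s\ell + \tfrac12\ell^2$, one computes
\[
  \psi(m) \;:=\; \cost(f') - \cost(\tilde f) \;=\; \tfrac12\!\int_0^m f_1^2 + c(m)\!\int_0^m f_1 + \tfrac12 c(m)^2 - \tfrac12\!\int_0^1 \size_f^2 .
\]
Since $c'(m) = -f_1(m)$, this gives $\psi'(m) = f_1(m)\bigl(\tfrac12 f_1(m) - \int_0^m f_1\bigr)$, which is $\ge 0$ for small $m$ and $\le 0$ afterwards; hence $\psi$ is maximized at the point $m^\star$ where $\int_0^{m^\star} f_1 = \tfrac12 f_1(m^\star)$ (so $m^\star \le \tfrac12$). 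At $m^\star$, writing $a = f_1(m^\star)$ and using $c(m^\star) = S - \tfrac12 a$, the expression collapses to $\psi(m^\star) = \tfrac12\bigl(\int_0^{m^\star} f_1^2 - \tfrac14 a^2 - \operatorname{Var}(\size_f)\bigr)$; one then shows this is non-negative. The constraint $\size(f(1)) \ge f_r(0)$ says $f_1(1) \ge f_r(0)-f_r(1)$, which bounds the oscillation of $f_r$, hence (with monotonicity of $f_1,f_r$) of $\size_f$, in terms of $a \ge f_1(1)$; combined with a Cauchy--Schwarz lower bound on $\int_0^{m^\star} f_1^2$ this yields $\psi(m^\star) \ge 0$. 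For contrast $\psi(0) = -\tfrac12\operatorname{Var}(\size_f) \le 0$ and $\psi(1) = -\operatorname{Cov}(f_1,f_r) - \tfrac12\operatorname{Var}(f_r) \le 0$, so the hypothesis is exactly what makes the interior maximum non-negative. The main obstacle is precisely this last estimate — choosing $m$ and bounding $\psi(m^\star)$ — which is the only place $\size(f(1)) \ge f_r(0)$ is used; a lesser nuisance is the alignment $f_1 = g_1$ in the first step, which requires the compatibility condition together with a Hall-type argument.
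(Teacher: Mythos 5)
Your overall plan (liquify the non-maximal elements, pick a threshold $m$, liquify $f_1$ on the tail, and pour liquid to level $f_r$) is the same as the paper's. Where you diverge is in the choice of $m$ and, crucially, in the justification that pouring does not decrease $\cost(f)$ — and that justification has a real gap.

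The paper fixes $m$ so that $c(m)=f_r(0)$ (equivalently, $\int_0^m (f_r(0)-f_r(x))\,dx = \int_m^1(\size_f(x)-f_r(0))\,dx$). With this choice, and \emph{because of} the hypothesis $\size(f(1))\ge f_r(0)$, after liquification every tail pattern $y\in[m,1)$ has all-liquid total $\size_f(y)\ge\size_f(1)\ge f_r(0)$ (too much liquid) while every head pattern $x\in[0,m)$ has $f_r(x)\le f_r(0)$ (too little). Since $\size_f$ is non-increasing, every move of a liquid piece is from a pattern of smaller total size to a pattern of larger total size, so each individual move increases cost by approximately $\eps(\size_f(x)-\size_f(y))\ge 0$. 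This local, swap-by-swap argument is the whole content; there is no global variance computation to do.

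You instead define $\psi(m)=\cost(f')-\cost(\tilde f)$, show $\psi'(m)=f_1(m)\bigl(\tfrac12 f_1(m)-\int_0^m f_1\bigr)$, and take $m^\star$ to maximize $\psi$. This is fine as far as it goes, and the desired inequality $\psi(m^\star)\ge0$ is indeed \emph{true} (it is implied by $\psi(m_{\mathrm{paper}})\ge0$, which the local argument gives). But you do not prove it; you only sketch that it should follow from Cauchy–Schwarz and ``a bound on the oscillation of $\size_f$ in terms of $a$.'' That last step fails as stated: the hypothesis $\size(f(1))\ge f_r(0)$ bounds only the oscillation $f_r(0)-f_r(1)\le f_1(1)$; the oscillation of $f_1$ (hence of $\size_f$) can be arbitrarily large relative to $a=f_1(m^\star)$. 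Concretely, take $f_r\equiv 0$, $f_1=100$ on $[0,0.01)$ and $f_1=1$ on $[0.01,1)$: then $a=1$, but $\mathrm{Var}(\size_f)\approx 97\gg a^2/4$, so any Popoviciu-type bound on $\mathrm{Var}(\size_f)$ by $a^2/4$ is far from true (the inequality $\psi(m^\star)\ge0$ still holds, but only by a near-exact cancellation between $\tfrac12\int_0^{m^\star}f_1^2$ and $\tfrac12\mathrm{Var}(\size_f)$). So the global calculus route needs a substantially sharper estimate than the one you describe, and the cleanest way to close it is precisely the paper's local argument for the particular $m$ with $c(m)=f_r(0)$.

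A smaller but also real gap: you treat the cleanup of $g$ (one solid per pattern, $g_1=f_1$) as a ``Hall-type'' rearrangement and say ``we only permute $g$.'' Making $g$ have one solid per pattern is \emph{not} a permutation of patterns; it requires moving a solid element out of a pattern that has two solids into one that has none, together with a transfer of liquid, and this changes $\cost(g)$. The paper carries out exactly this move and checks case-by-case that $\cost(g)$ can only decrease (when $\size(g(y))\ge p$ the cost is preserved; when $\size(g(y))<p$ it strictly decreases by $\tau(\size(g(x))-p)(p-\size(g(y)))$). Without that cost accounting, the claim that the ratio does not fall is unsupported.
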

\newcommand{\introducemproofsketch}{
	Our procedure to obtain such a CFP has three stages:
	\begin{enumerate} 
		\item We liquify all elements except for the largest element of every pattern $f(x)$ for $x \in [0,m)$, for a certain
threshold $m$; at this point, $f_r(x)$ becomes essentially the size of liquid elements in $f(x)$ for $x \in [0,1)$ (since no pattern
contains two solid elements).
		\item We move liquid elements in $f$ from right to left, i.e., from $f(y)$ for $y \in [m,1)$ to $f(x)$ for $x \in [0,m)$
(which only increases the cost of $f$), so as to make $f_r$ constant.
		\item We rearrange elements in $g$ so as to satisfy the condition $f_1' = g_1'$.
	\end{enumerate}
}
\begin{figure*}[t]
\begin{minipage}[t]{\linewidth}
    \begin{tikzpicture}[scale=0.3]


\filldraw[draw=none,fill=gray] (0,0) -- (0,9) -- (20,4) -- (20,0);

\pattern[pattern=north east lines,draw=none] (15,0) -- (15,2.25) -- (20,1) -- (20,0); 
\filldraw[draw=none,fill=gray] (20,1) -- (15,2.25) -- (15,3.25) -- (20,3.25); 
\filldraw[draw=none, fill=lightgray] (0,0) -- (0,6) -- (15,3) -- (15,0);


\draw (-1,0) -- (0,0) node[anchor=north,yshift=-2] {$0$} -- (15,0) node[anchor=north,yshift=-2] {$m$} -- (21,0) node[anchor=north,yshift=-2]
{$1$};

\draw (0,0) -- (0,6) -- (5,5) node[anchor=south, yshift=-2] {\tiny{$f_1(x)$}} -- (20,2) -- (20,0); 
\draw[dashed] (0,9) -- (5,8)node[anchor=south, yshift=1, xshift=2] {\tiny{$f_1(x)+f_r(0)$}} -- (15,6); 

\draw (0,6) -- (0,9)-- (5,7.75) node[anchor=north, yshift=-2, xshift=5] {\tiny{$\size(f(x))$}} -- (20,4) -- (20,0); 
\draw[dashed] (0,6) -- (5,4.75) node[anchor=north, yshift=-2, xshift=-8] {\tiny{$\size(f(x))-f_r(0)$}} --(20,1); 

\draw[dashed] (15,0) -- (15,6);

\draw [decorate,decoration={brace,amplitude=5},xshift=0,yshift=0pt]
(0,6) -- (0,9)node [black,midway,xshift=-15] {\tiny $f_r(0)$};

\node at (25,5) {\Huge$\Rightarrow$};
\begin{scope}[shift={(30,0)}]
 


\filldraw[draw=none,fill=gray] (0,0) -- (0,9) -- (15,6) -- (15,5.25) -- (20,4) -- (20,1) -- (15,2.25) -- (15,0);

\pattern[pattern=north east lines,draw=none] (0,9) -- (15,6) -- (15,5.25); 
\filldraw[draw=none,fill=white] (0,9) -- (15,6) -- (15,10) -- (0,10); 
\filldraw[draw=none,fill=gray] (0,9) -- (15,5.25) -- (0,5.25);
\filldraw[draw=none, fill=lightgray] (0,0) -- (0,6) -- (15,3) -- (15,0);

\draw (-1,0) -- (0,0) node[anchor=north,yshift=-2] {$0$} -- (15,0) node[anchor=north,yshift=-2] {$m$} -- (21,0) node[anchor=north,yshift=-2]
{$1$};

\draw[dashed] (0,9) -- (15,5.25);
\draw (0,0) -- (0,9) -- (15,6) -- (15,5.25) -- (20,4);
\draw (20,0) -- (20,4);
\draw (0,0) -- (0,6) -- (5,5)-- (20,2) -- (20,0); 

\draw[dashed] (0,6) -- (15,2.25);
\draw (15,2.25) -- (20,1); 

\draw (15,0) -- (15,2.25);
\draw[dashed] (15,2.25) -- (15,5.25);


\end{scope}

\end{tikzpicture}
    \end{minipage}%
          \caption{The main step in the proof of~Lemma\nobreakspace \ref {lem:introduce_m}. 
          The left picture shows $f$ after the liquification of all elements except one largest element ($f_1$) for $x \in [0,m)$. The right picture shows $f$ after the movement of liquid elements between the striped regions. In both pictures, the light-gray areas contain single large elements, and the dark-gray areas are liquid elements. Note that there are two pairs of parallel lines in the pictures; the vertical distance between each pair is $f_r(0)$. The threshold $m$ is defined so that the striped regions have equal areas (thus $f_r'$ is made constant by the movement of liquid elements) and so that moving the liquid elements can only increase the cost. The height of the upper striped area is $f_1(x) + f_r(0) - \size(f(x)) = f_r(0) - f_r(x)$ for $x \in [0,m)$ and the height of the lower striped area is $\size(f(x)) - f_r(0)$ for $x \in [m,1)$.
	All functions are stepwise-constant, but drawn here using straight downward lines for simplicity; also, the rightmost dark-gray part will ``fall down'' (forming a $(1-m) \times f_r(0)$ rectangle).
}
    
    \label{fig:define_m}

\end{figure*}
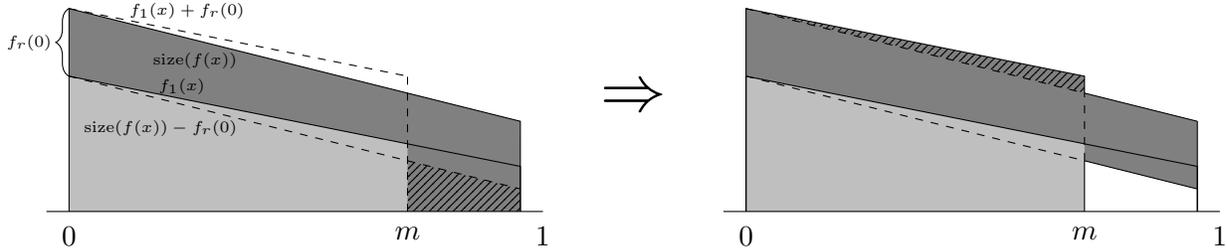

\begin{proof} We begin with an overview of the proof. See also~Figure\nobreakspace \ref {fig:define_m}.
	
	\introducemproofsketch
	
	Now let us proceed to the details.
	First, we explain how to choose the threshold $m$. It is defined so as to ensure that after the liquification, patterns in $f$ have liquid elements of total size $f_r(0)$ on average. Thus we can make all $f_r(x)$ equal to $f_r(0)$. More importantly, we can do this by only moving the liquid elements ``higher'', so that the cost of $f$ only goes up. (See~Figure\nobreakspace \ref {fig:define_m} for an illustration of this move.)
	
	More precisely, $m \in [0,1]$ is chosen so that
	\begin{equation} \label{eq:define_m}
		\int_0^m f_r(0) - f_r(x) \, dx = \int_m^1 \size(f(x)) - f_r(0) \, dx
	\end{equation}
	(which implies $f_r(0) = \int_0^m f_r(x) \, dx + \int_m^1 \size(f(x)) \, dx$: the right-hand expression is the size of elements which will be liquified next). Such an $m$ is guaranteed to exist by nonnegativity of the functions under both integrals in~\eqref{eq:define_m}, which follows from our assumptions on $f$ (we have $\size(f(x)) \ge \size(f(1)) \ge f_r(0)$, because $\size_f = f_1 + f_r$ is non-increasing).
	
	Now we can perform the sequence of steps:
	
	\begin{enumerate}
	\item The liquification: for each element $p > 0$ which appears in a pattern $f(x) \setminus \{f_1(x)\}$ for $x \in [0,m)$ or in a pattern $f(x)$ for $x \in [m,1)$, we liquify all its such occurences, as well as their counterparts in $g$.\footnote{Formally, let $I_f = \{ x \in [0,m) : p \in f(x) \setminus \{f_1(x)\} \} \cup \{ x \in [m,1) : p \in f(x) \}$ and $I_g = \{ y \in [0,1): p \in g(y) \}$; by compatibility of $f$ and $g$, the measure of $I_g$ is no less than that of $I_f$. We liquify $p$ in $I_f$ and in a subset of $I_g$ of the right measure. (If there were patterns where $p$ appeared multiple times, we repeat.)} We have a bound on the ratio of costs by \protect \MakeUppercase {F}act\nobreakspace \ref {fact:liquification_cost}, and $f_1$ remains non-increasing.
	
	\item Rearranging liquid elements in $f$: while there exists $x \in [0,m)$ with $f_r(x) < f_r(0)$, find $y \in [m,1)$ with $f_r(y) > f_r(0)$ and move a liquid element from $f(y)$ to $f(x)$.\footnote{Formally, let $\tau > 0$ be such that $f_r(x') < f_r(0)$ for $x' \in [x,x+\tau)$ and $x + \tau \le m$ and $f_r(y') > f_r(0)$ for $y' \in [y,y+\tau)$. Move the liquid elements between these patterns $f(y')$ and $f(x')$.} (Since we want to make $f_r$ constant and equal to $f_r(0)$, we move elements from where $f_r$ is too large to where it is too small. Note that since we liquified all elements in $f(x)$ for $x \in [m,1)$, now $\size_f$ is almost equal to $f_r$ on $[m,1)$.) Once this process finishes, $f_r$ will be constant by the definition of $m$.\footnote{Since we operate in the limit $\eps \to 0$, we ignore issues such as $f_r$ being $\pm \eps$ off.} (See the right side of~Figure\nobreakspace \ref {fig:define_m}.) Note that $\size_f$ was non-increasing at the beginning of this step, and so we are always moving elements from
patterns of smaller total size to patterns of larger total size -- this can only increase the cost of $f$ and thus the ratio of costs increases. This step preserves $f_1$.

	\item Rearranging elements in $g$: at this point, since $f$ and $g$ are compatible, $g$ only has solid jobs that appear as $f_1(x)$ for $x \in [0,m)$, but they might be arranged differently in $g$ than in $f$. We want them to have the same arrangement (i.e., $f_1' = g_1'$) so that we can compare $f$ to $g$ more easily. As a first step, we make sure that every pattern in $g$ has at most one solid element. To this end, while there exists $x \in [0,1)$ such that $g(x)$ contains two or more solid elements, let $p > 0$ be one of them, and find $y \in [0,1)$ such that $g(y)$ contains only liquid elements\footnote{Such a $y$ exists because $f$ and $g$ are compatible: the total measure of patterns with solid elements in $f$ is $m$ and these patterns contain only one solid element each, so $g$ must have patterns with no solid elements.}.
	Now there are two cases: if $\size(g(y)) \ge p$, then we move $p$ to $g(y)$ and move liquid elements of the same total size back to $g(x)$. This preserves $\cost(g)$ (think that these elements went first in the linear orders on both $g(x)$ and $g(y)$). On the other hand, if $\size(g(y)) < p$, then we move $p$ to $g(y)$ and move all liquid elements from $g(y)$ to $g(x)$. This even decreases $\cost(g)$.\footnote{Formally, select $\tau > 0$ so that $g$ is constant on $[x,x+\tau)$ and on $[y,y+\tau)$; for $x' \in [x,x+\tau)$, replace $g(x')$ with $g(x') \setminus \{p\} \cup g(y)$, and for $y' \in [y,y+\tau)$, replace $g(y')$ with $\{p\}$. The cost of $g$ then decreases by $\tau (\size(g(x)) - p) (p - \size(g(y))) > 0$.}
	
	At this point, $f$ and $g$ have the same solid elements, each of which appears as the only solid element in patterns containing it in both $f$ and $g$. Thus we can now sort $g$ so that for each solid element $p > 0$, it appears in the same positions in $f$ and in $g$. This operation preserves $\cost(g)$, and thus the entire third step does not decrease the ratio of costs.
	\end{enumerate}
\end{proof}

\subsection{The Final Form}

In our last transformation, we will guarantee that in $g$, each large element is the only member of a~pattern which contains it.
(Intuitively, we do this because such CFPs are the ones which maximize the ratio of costs.)
Namely, we prove Lemma\nobreakspace \ref {lem:introduce_t}, a strengthened version of~Lemma\nobreakspace \ref {lem:introduce_m};
note that condition \textit{(c')} below is stronger than condition \textit{(c)} of~Lemma\nobreakspace \ref {lem:introduce_m} (there, $g'$ could have had patterns with both liquid and solid elements), and that condition \textit{(d)} is new.
Figure\nobreakspace \ref {fig:idealdistro} shows the difference between CFPs postulated by Lemmas\nobreakspace \ref {lem:introduce_m} and\nobreakspace  \ref {lem:introduce_t}.

\newcommand{\introducetstatement}{
	Let $(f,g)$ be as postulated in~Lemma\nobreakspace \ref {lem:introduce_m}. Then there exists another CFP $(f',g')$ such that:
	\begin{enumerate}[(a)] 
		\item $f_r'$ is constant.
		\item $f_1' = g_1'$.
		\item[(c')] There exists $t \in [0,1]$ such that:
			\begin{itemize}
				\item for $x \in [0,t)$, $f'(x)$ has liquid elements and exactly one solid element,
				\item for $x \in [t,1)$, $f'(x)$ has only liquid elements,
				\item for $x \in [0,t)$, $g'(x)$ has exactly one solid element (and no liquid ones),
				\item for $x \in [t,1)$, $g'(x)$ has only liquid elements.
			\end{itemize}
		\item[(d)] The function $\size_{g'}$ is constant on $[t,1)$ (i.e., the liquid part of $g$ is uniform).
	\end{enumerate}
	Moreover, \[ \frac{\cost(f')}{\cost(g')} \ge \min \left( 2, \frac{\cost(f)}{\cost(g)} \right). \]
}
\begin{lemma} \label{lem:introduce_t}
	\introducetstatement
\end{lemma}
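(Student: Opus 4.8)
The plan is to reduce, via an initial application of Lemma~\ref{lem:introduce_m}, to the case where $(f,g)$ already satisfies conditions \textit{(a)}--\textit{(c)}: $f_r$ is a constant $M$, $f_1=g_1$ is non-increasing, its solid part is supported on an interval $[0,m)$ (write $b(x)=f_1(x)$ for the unique solid element of $f(x)$ when $x\in[0,m)$), and — by the proof of Lemma~\ref{lem:introduce_m} — $g$ has at most one solid per pattern, placed in the same positions as the solids of $f$. This step does not decrease the ratio. What remains is to remove the liquid that may still accompany solids in $g$ and to make the liquid part of $g$ uniform (i.e.\ to upgrade \textit{(c)} to \textit{(c')} and add \textit{(d)}), while only losing the slack allowed by the $\min(2,\cdot)$.

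The central device is a threshold $t\in[0,m]$ chosen together with a ``water level'' $\mu$ in a self-consistent way: we will keep as solids exactly those $b(x)$ with $x\in[0,t)$, liquify the remaining solids $b(x)$ for $x\in[t,m)$ in \emph{both} $f$ and $g$ (the positions agree, so the measures match, and by Fact~\ref{fact:liquification_cost} this only increases the ratio), and then spread the resulting total amount of liquid $r:=M+\int_t^m b$ at the constant level $r$ in $f'$ and at the constant level $\mu:=r/(1-t)$ over $[t,1)$ in $g'$. The threshold is the one for which $b(t^-)\ge\mu\ge b(t^+)$, i.e.\ the surviving solids are precisely those at least as large as the liquid level; such a $t$ exists by monotonicity, since as $t$ decreases from $m$ both $\int_t^m b$ (hence $\mu$) and $b(t^-)$ increase, and one simply takes $t=m$ when $b(m^-)\ge M/(1-m)$.

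With $t$ fixed, the construction is: (1) liquify $b|_{[t,m)}$ in $f$ and $g$; (2) in $f$, move liquid so that $f_r$ becomes the constant $r$ — the patterns on $[0,t)$ and on $[m,1)$ need to gain liquid, and the patterns on $[t,m)$ with large $b(x)$ carry a surplus; (3) rearrange the liquid of $g$ into uniform level $\mu$ on $[t,1)$ with each surviving solid alone on $[0,t)$. Because every surviving solid is $\ge\mu$, the configuration in (3) is the cost-minimal placement of $g$'s liquid given its solid structure (a water-filling / convexity argument), so move (3) only lowers $\cost(g)$ and hence only raises the ratio. Conditions \textit{(a)}--\textit{(d)} then hold by construction: $f_r'\equiv r$ gives \textit{(a)}; $f_1'=g_1'$ equals $b$ on $[0,t)$ and $\eps$ on $[t,1)$, giving \textit{(b)}; $f'$ on $[0,t)$ is one solid plus liquid and pure liquid on $[t,1)$, while $g'$ on $[0,t)$ is a single solid with no liquid and pure liquid on $[t,1)$, giving \textit{(c')}; and $\size_{g'}\equiv\mu$ on $[t,1)$ gives \textit{(d)}.

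The only step that can decrease the ratio is the liquid rebalancing (2), and controlling it is the crux. Since (2) does not touch $g$, it suffices to show $\cost(f')\ge\min\!\rb{2\cost(g'),\,\cost(f_{(1)})}$, where $f_{(1)}$ denotes $f$ after step (1) and $g'$ is $g$ after step (3). Using the closed forms $\cost(f_{(1)})=\int_0^t b^2+\tfrac12\int_t^m b^2+M\!\int_0^m b+\tfrac{M^2}{2}$ and $\cost(f')=\int_0^t b^2+r\!\int_0^t b+\tfrac{r^2}{2}$, the difference telescopes to
\[
\cost(f')-\cost(f_{(1)})=\rb{\textstyle\int_t^m b}\rb{\textstyle\int_0^t b}+\tfrac12\rb{\textstyle\int_t^m b}^2-\tfrac12\textstyle\int_t^m b^2 ,
\]
and the inequalities $b\ge\mu$ on $[0,t)$ and $b\le\mu$ on $[t,m)$ — forced by the choice of $t$ — are exactly what make this expression nonnegative whenever the ratio is below $2$; the remaining case is disposed of directly, so that $\tfrac{\cost(f')}{\cost(g')}\ge\min\!\rb{2,\tfrac{\cost(f)}{\cost(g)}}$ in all cases. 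I expect the main obstacle to be getting the threshold $t$ exactly right so that this cost accounting closes, and separating the ``ratio $<2$'' and ``ratio $\ge 2$'' regimes cleanly enough that the $\min(2,\cdot)$ bound drops out; everything else (the liquification bound, the water-filling optimality of $g'$, and the verification of \textit{(a)}--\textit{(d)}) is routine given the earlier facts.
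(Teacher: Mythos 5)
Your plan diverges from the paper's at the crucial point, and the step you flag as the ``crux'' (the liquid rebalancing in $f$, your step~(2)) in fact cannot be closed with your choice of threshold. The paper never liquifies the solids on $[t,m)$. It instead \emph{redistributes solid mass}: in small increments it shrinks the solid $g_1(y)$ for $y\in[t,m)$ while growing the solid $g_1(x)$ for $x\in[0,t)$ (the solids on $[t,m)$ eventually vanish and those on $[0,t)$ eventually fill their whole pattern), moving liquid in $g$ the opposite way to keep $\size_g$ invariant, and mirroring the solid size changes in $f$. Compatibility, $f_1=g_1$, and constancy of $f_r$ are preserved throughout, and --- this is the point of the paper's chosen threshold $\int_0^t g_r = \int_t^m g_1$ --- each increment satisfies $\Delta\cost(f) = 2\Delta\cost(g)$ with both $\ge 0$, which is exactly what produces the $\min(2,\cdot)$ slack. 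Your operation (liquifying the small solids outright, then sloshing liquid around in $f$) is genuinely different, and the clean per-step $2{:}1$ coupling disappears.

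Concretely, the bound you want fails. Work in the limit where $b=f_1$ is (a tiny perturbation of) the constant $1$ on $[0,m)$ and $M=f_r\equiv 1-m$, and take $g$ with its solid alone on $[0,m)$ and uniform liquid of level $1$ on $[m,1)$ (a valid output of Lemma~\ref{lem:introduce_m}). Your water-level rule pins down $t=1-\sqrt{1-m}$. Writing $s=\sqrt{1-m}$, one gets $\cost(f)=1-s^4/2$, $\cost(g)=1-s^2/2$, $\cost(f')=1-s^2/2$, $\cost(g')=1-s/2$, so the ratio moves from $(2-s^4)/(2-s^2)$ to $(2-s^2)/(2-s)$. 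For $m=1/2$ (i.e.\ $s=1/\sqrt2$) this is $7/6\approx 1.167$ before and $\approx 1.160$ after, both below $2$, so $\cost(f')/\cost(g') < \min\rb{2,\ \cost(f)/\cost(g)}$ and the claimed conclusion is false for your construction. The telescoped difference you compute, $D=\bigl(\int_0^t b\bigr)\bigl(\int_t^m b\bigr)+\tfrac12\bigl(\int_t^m b\bigr)^2-\tfrac12\int_t^m b^2$, is indeed negative here (for constant $b$ it equals $(m-t)(m+t-1)b^2/2$, and $m+t<1$ for $m\lesssim 0.618$), and the loss is not absorbed by the ``ratio $<2$'' vs.\ ``ratio $\ge2$'' split. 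A secondary issue: the monotonicity you invoke to pin down $t$ does not hold --- $\mu(t)=\bigl(M+\int_t^m b\bigr)/(1-t)$ has both numerator and denominator increasing as $t$ decreases, and one can check $\mu'(t)$ changes sign exactly at the crossover, so $\mu$ is not monotone in $t$; existence of the crossover does follow by continuity, but not by the argument you give. The redeemable parts of your outline (liquification never hurts, water-filling minimizes $\cost(g)$, conditions \textit{(a)}--\textit{(d)} are met by construction) are fine, but you need to replace the liquify-then-rebalance mechanism with the paper's solid-mass transfer and its threshold $\int_0^t g_r=\int_t^m g_1$ to make the cost accounting close.
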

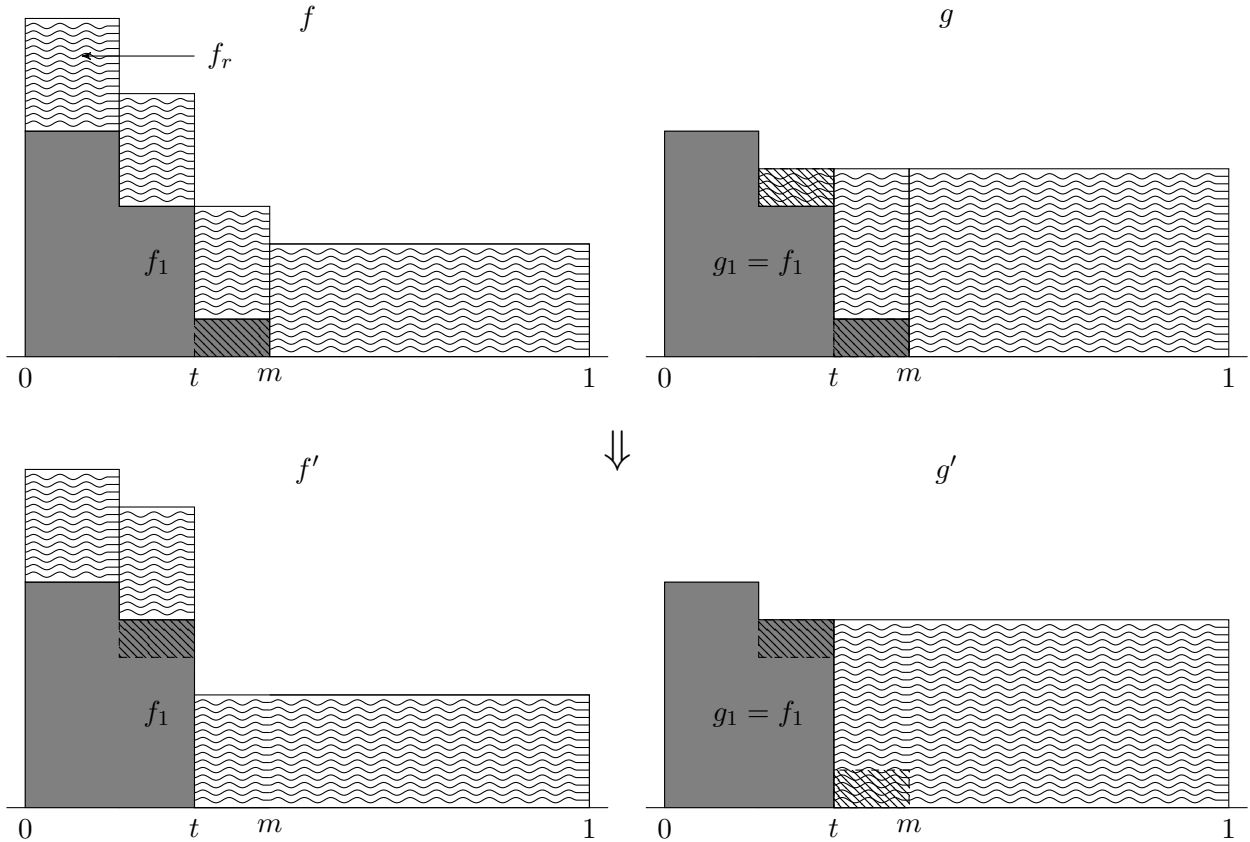
\begin{figure*}[t!]
\begin{minipage}[t]{\linewidth}
    \begin{tikzpicture}[scale=0.5]
\node at (18.25,-1.5) {\LARGE$\Downarrow$};

\begin{scope}
  \filldraw[fill=gray,draw=gray] (2.5,1) rectangle  (5,7);
 \filldraw[fill=gray,draw=gray]  (5,1) rectangle (7,5);
 \filldraw[fill=gray,draw=gray]  (7,2) rectangle (9,1);
 \filldraw[draw=none,fill=none,pattern=north west lines,dashed] (7,2) rectangle (9,1);

\filldraw[fill=white,draw=black] (2.5,7) rectangle (5,10);
\draw[snake=snake,segment amplitude=1] (2.5,7.2) -- (5,7.2);\draw[snake=snake,segment amplitude=1] (2.5,8.2) -- (5,8.2);
\draw[snake=snake,segment amplitude=1] (2.5,7.4) -- (5,7.4);\draw[snake=snake,segment amplitude=1] (2.5,8.4) -- (5,8.4);
\draw[snake=snake,segment amplitude=1] (2.5,7.6) -- (5,7.6);\draw[snake=snake,segment amplitude=1] (2.5,8.6) -- (5,8.6);
\draw[snake=snake,segment amplitude=1] (2.5,7.8) -- (5,7.8);\draw[snake=snake,segment amplitude=1] (2.5,8.8) -- (5,8.8);
\draw[snake=snake,segment amplitude=1] (2.5,8) -- (5,8);\draw[snake=snake,segment amplitude=1] (2.5,9.2) -- (5,9.2);
\draw[snake=snake,segment amplitude=1] (2.5,9.4) -- (5,9.4);\draw[snake=snake,segment amplitude=1] (2.5,9.6) -- (5,9.6);
\draw[snake=snake,segment amplitude=1] (2.5,9.8) -- (5,9.8);\draw[snake=snake,segment amplitude=1] (2.5,9) -- (5,9);

 \filldraw[white, draw=black] (5,5) rectangle (7,8);
\draw[snake=snake,segment amplitude=1] (5,5.2) -- (7,5.2);\draw[snake=snake,segment amplitude=1] (5,6.2) -- (7,6.2);
\draw[snake=snake,segment amplitude=1] (5,5.4) -- (7,5.4);\draw[snake=snake,segment amplitude=1] (5,6.4) -- (7,6.4);
\draw[snake=snake,segment amplitude=1] (5,5.6) -- (7,5.6);\draw[snake=snake,segment amplitude=1] (5,6.6) -- (7,6.6);
\draw[snake=snake,segment amplitude=1] (5,5.8) -- (7,5.8);\draw[snake=snake,segment amplitude=1] (5,6.8) -- (7,6.8);
\draw[snake=snake,segment amplitude=1] (5,6) -- (7,6);
\draw[snake=snake,segment amplitude=1] (5,7) -- (7,7);\draw[snake=snake,segment amplitude=1] (5,7.2) -- (7,7.2);
\draw[snake=snake,segment amplitude=1] (5,7.4) -- (7,7.4);\draw[snake=snake,segment amplitude=1] (5,7.6) -- (7,7.6);
\draw[snake=snake,segment amplitude=1] (5,7.8) -- (7,7.8);

\begin{scope}[shift={(0,-2)}]
 \filldraw[white, draw=black] (7,7) rectangle (9,4);
\draw[snake=snake,segment amplitude=1] (7,4.2) -- (9,4.2);\draw[snake=snake,segment amplitude=1] (7,5.2) -- (9,5.2);
\draw[snake=snake,segment amplitude=1] (7,4.4) -- (9,4.4);\draw[snake=snake,segment amplitude=1] (7,5.4) -- (9,5.4);
\draw[snake=snake,segment amplitude=1] (7,4.6) -- (9,4.6);\draw[snake=snake,segment amplitude=1] (7,5.6) -- (9,5.6);
\draw[snake=snake,segment amplitude=1] (7,4.8) -- (9,4.8);\draw[snake=snake,segment amplitude=1] (7,5.8) -- (9,5.8);
\draw[snake=snake,segment amplitude=1] (7,5) -- (9,5);
\draw[snake=snake,segment amplitude=1] (7,6) -- (9,6);\draw[snake=snake,segment amplitude=1] (7,6.2) -- (9,6.2);
\draw[snake=snake,segment amplitude=1] (7,6.4) -- (9,6.4);\draw[snake=snake,segment amplitude=1] (7,6.6) -- (9,6.6);
\draw[snake=snake,segment amplitude=1] (7,6.8) -- (9,6.8);
\end{scope}

\filldraw[white, draw=black] (9,4) rectangle (17.5,1);
\draw[snake=snake,segment amplitude=1] (9,1.2) -- (17.5,1.2);\draw[snake=snake,segment amplitude=1] (9,2.2) -- (17.5,2.2);
\draw[snake=snake,segment amplitude=1] (9,1.4) -- (17.5,1.4);\draw[snake=snake,segment amplitude=1] (9,2.4) -- (17.5,2.4);
\draw[snake=snake,segment amplitude=1] (9,1.6) -- (17.5,1.6);\draw[snake=snake,segment amplitude=1] (9,2.6) -- (17.5,2.6);
\draw[snake=snake,segment amplitude=1] (9,1.8) -- (17.5,1.8);\draw[snake=snake,segment amplitude=1] (9,2.8) -- (17.5,2.8);
\draw[snake=snake,segment amplitude=1] (9,2) -- (17.5,2);
\draw[snake=snake,segment amplitude=1] (9,3) -- (17.5,3);\draw[snake=snake,segment amplitude=1] (9,3.2) -- (17.5,3.2);
\draw[snake=snake,segment amplitude=1] (9,3.4) -- (17.5,3.4);\draw[snake=snake,segment amplitude=1] (9,3.6) -- (17.5,3.6);
\draw[snake=snake,segment amplitude=1] (9,3.8) -- (17.5,3.8);

\draw (2,1) -- (18,1);
\draw (2.5,1) node[anchor=north,yshift=-.5] {$0$} -- (2.5,7) -- (5,7) -- (5,5) -- (7,5)  -- (7,2) -- (9,2) -- (9,1)
node[anchor=north,yshift=-.5] {$m$};
\draw (9,4) -- (17.5,4) -- (17.5,1) node[anchor=north,yshift=-.5] {$1$};
\node at (10,10) {$f$};
\node at (6,3.5) {$f_1$};
\node[anchor=north,yshift=-.5] at (7,1) {$t$};
\draw[<-] (4,9) -- (7,9) node[anchor=west,xshift=1] {$f_r$};
\draw[dashed] (7,1) -- (7,5);
\end{scope}

\begin{scope}[shift={(0,-12)}]
  \filldraw[fill=gray,draw=gray] (2.5,1) rectangle  (5,7);
 \filldraw[fill=gray,draw=gray]  (5,1) rectangle (7,6);
\filldraw[dashed,fill=none,pattern=north west lines] (5,5) rectangle (7,6);

\filldraw[white, draw=black] (7,4) rectangle (17.5,1);
\draw[snake=snake,segment amplitude=1] (9,1.2) -- (17.5,1.2);\draw[snake=snake,segment amplitude=1] (9,2.2) -- (17.5,2.2);
\draw[snake=snake,segment amplitude=1] (9,1.4) -- (17.5,1.4);\draw[snake=snake,segment amplitude=1] (9,2.4) -- (17.5,2.4);
\draw[snake=snake,segment amplitude=1] (9,1.6) -- (17.5,1.6);\draw[snake=snake,segment amplitude=1] (9,2.6) -- (17.5,2.6);
\draw[snake=snake,segment amplitude=1] (9,1.8) -- (17.5,1.8);\draw[snake=snake,segment amplitude=1] (9,2.8) -- (17.5,2.8);
\draw[snake=snake,segment amplitude=1] (9,2) -- (17.5,2);
\draw[snake=snake,segment amplitude=1] (9,3) -- (17.5,3);\draw[snake=snake,segment amplitude=1] (9,3.2) -- (17.5,3.2);
\draw[snake=snake,segment amplitude=1] (9,3.4) -- (17.5,3.4);\draw[snake=snake,segment amplitude=1] (9,3.6) -- (17.5,3.6);
\draw[snake=snake,segment amplitude=1] (9,3.8) -- (17.5,3.8);

\filldraw[fill=white,draw=black] (2.5,7) rectangle (5,10);
\draw[snake=snake,segment amplitude=1] (2.5,7.2) -- (5,7.2);\draw[snake=snake,segment amplitude=1] (2.5,8.2) -- (5,8.2);
\draw[snake=snake,segment amplitude=1] (2.5,7.4) -- (5,7.4);\draw[snake=snake,segment amplitude=1] (2.5,8.4) -- (5,8.4);
\draw[snake=snake,segment amplitude=1] (2.5,7.6) -- (5,7.6);\draw[snake=snake,segment amplitude=1] (2.5,8.6) -- (5,8.6);
\draw[snake=snake,segment amplitude=1] (2.5,7.8) -- (5,7.8);\draw[snake=snake,segment amplitude=1] (2.5,8.8) -- (5,8.8);
\draw[snake=snake,segment amplitude=1] (2.5,8) -- (5,8);\draw[snake=snake,segment amplitude=1] (2.5,9.2) -- (5,9.2);
\draw[snake=snake,segment amplitude=1] (2.5,9.4) -- (5,9.4);\draw[snake=snake,segment amplitude=1] (2.5,9.6) -- (5,9.6);
\draw[snake=snake,segment amplitude=1] (2.5,9.8) -- (5,9.8);\draw[snake=snake,segment amplitude=1] (2.5,9) -- (5,9);

\begin{scope}[shift={(0,1)}]
 \filldraw[white, draw=black] (5,5) rectangle (7,8);
\draw[snake=snake,segment amplitude=1] (5,5.2) -- (7,5.2);\draw[snake=snake,segment amplitude=1] (5,6.2) -- (7,6.2);
\draw[snake=snake,segment amplitude=1] (5,5.4) -- (7,5.4);\draw[snake=snake,segment amplitude=1] (5,6.4) -- (7,6.4);
\draw[snake=snake,segment amplitude=1] (5,5.6) -- (7,5.6);\draw[snake=snake,segment amplitude=1] (5,6.6) -- (7,6.6);
\draw[snake=snake,segment amplitude=1] (5,5.8) -- (7,5.8);\draw[snake=snake,segment amplitude=1] (5,6.8) -- (7,6.8);
\draw[snake=snake,segment amplitude=1] (5,6) -- (7,6);
\draw[snake=snake,segment amplitude=1] (5,7) -- (7,7);\draw[snake=snake,segment amplitude=1] (5,7.2) -- (7,7.2);
\draw[snake=snake,segment amplitude=1] (5,7.4) -- (7,7.4);\draw[snake=snake,segment amplitude=1] (5,7.6) -- (7,7.6);
\draw[snake=snake,segment amplitude=1] (5,7.8) -- (7,7.8);
\end{scope}

\begin{scope}[shift={(0,-3)}]
\draw[snake=snake,segment amplitude=1] (7,4.2) -- (9,4.2);\draw[snake=snake,segment amplitude=1] (7,5.2) -- (9,5.2);
\draw[snake=snake,segment amplitude=1] (7,4.4) -- (9,4.4);\draw[snake=snake,segment amplitude=1] (7,5.4) -- (9,5.4);
\draw[snake=snake,segment amplitude=1] (7,4.6) -- (9,4.6);\draw[snake=snake,segment amplitude=1] (7,5.6) -- (9,5.6);
\draw[snake=snake,segment amplitude=1] (7,4.8) -- (9,4.8);\draw[snake=snake,segment amplitude=1] (7,5.8) -- (9,5.8);
\draw[snake=snake,segment amplitude=1] (7,5) -- (9,5);
\draw[snake=snake,segment amplitude=1] (7,6) -- (9,6);\draw[snake=snake,segment amplitude=1] (7,6.2) -- (9,6.2);
\draw[snake=snake,segment amplitude=1] (7,6.4) -- (9,6.4);\draw[snake=snake,segment amplitude=1] (7,6.6) -- (9,6.6);
\draw[snake=snake,segment amplitude=1] (7,6.8) -- (9,6.8);
\end{scope}

\draw (2,1) -- (18,1);
\draw (2.5,1) node[anchor=north,yshift=-.5] {$0$} -- (2.5,7) -- (5,7) -- (5,6) -- (7,6)  -- (7,1) -- (9,1)
node[anchor=north,yshift=-.5] {$m$};
\draw (9,4) -- (17.5,4) -- (17.5,1) node[anchor=north,yshift=-.5] {$1$};
\node at (10,10) {$f'$};
\node at (6,3.5) {$f_1$};
\node[anchor=north,yshift=-.5] at (7,1) {$t$};
\draw[dashed] (7,1) -- (7,5);
\end{scope}

\begin{scope}[shift={(17,9)}]
 \node[anchor=north,yshift=-.5] at (7,-8) {$t$};
 \filldraw[fill=gray,draw=gray] (2.5,-8) rectangle  (5,-2);
 \filldraw[fill=gray,draw=gray]  (5,-8) rectangle (7,-4);
 \filldraw[fill=gray,draw=gray]  (7,-7) rectangle (9,-8);
 
 \draw (7,-3) rectangle (9,-7);
 \draw (7,-3) rectangle (5,-4);
 \draw (9,-3) rectangle (17.5,-8);

\draw (2,-8) -- (18,-8);
\draw (2.5,-8) node[anchor=north,yshift=-.5] {$0$} -- (2.5,-2) -- (5,-2) -- (5,-4) -- (7,-4) -- (7,-7) -- (9,-7) -- (9,-8)
 node[anchor=north,yshift=-.5] {$m$};

\draw[snake=snake,segment amplitude=1] (9,-6.2) -- (17.5,-6.2);
\draw[snake=snake,segment amplitude=1] (9,-6.4) -- (17.5,-6.4);
\draw[snake=snake,segment amplitude=1] (9,-6.6) -- (17.5,-6.6);
\draw[snake=snake,segment amplitude=1] (9,-6.8) -- (17.5,-6.8);
\draw[snake=snake,segment amplitude=1] (9,-7) -- (17.5,-7);
\draw[snake=snake,segment amplitude=1] (9,-7.2) -- (17.5,-7.2);
\draw[snake=snake,segment amplitude=1] (9,-7.4) -- (17.5,-7.4);
\draw[snake=snake,segment amplitude=1] (9,-7.6) -- (17.5,-7.6);
\draw[snake=snake,segment amplitude=1] (9,-7.8) -- (17.5,-7.8);

\draw[snake=snake,segment amplitude=1] (9,-3.2) -- (17.5,-3.2);\draw[snake=snake,segment amplitude=1] (9,-3.4) -- (17.5,-3.4);
\draw[snake=snake,segment amplitude=1] (9,-3.8) -- (17.5,-3.8);\draw[snake=snake,segment amplitude=1] (9,-3.6) -- (17.5,-3.6);
\draw[snake=snake,segment amplitude=1] (9,-4) -- (17.5,-4);
\draw[snake=snake,segment amplitude=1] (9,-4.2) -- (17.5,-4.2);\draw[snake=snake,segment amplitude=1] (9,-4.4) -- (17.5,-4.4);
\draw[snake=snake,segment amplitude=1] (9,-4.8) -- (17.5,-4.8);\draw[snake=snake,segment amplitude=1] (9,-4.6) -- (17.5,-4.6);
\draw[snake=snake,segment amplitude=1] (9,-5) -- (17.5,-5);
\draw[snake=snake,segment amplitude=1] (9,-5.2) -- (17.5,-5.2);\draw[snake=snake,segment amplitude=1] (9,-5.4) -- (17.5,-5.4);
\draw[snake=snake,segment amplitude=1] (9,-5.8) -- (17.5,-5.8);\draw[snake=snake,segment amplitude=1] (9,-5.6) -- (17.5,-5.6);
\draw[snake=snake,segment amplitude=1] (9,-6) -- (17.5,-6);

\draw[snake=snake,segment amplitude=1] (7,-4.8) -- (9,-4.8);\draw[snake=snake,segment amplitude=1] (7,-4.6) -- (9,-4.6);
\draw[snake=snake,segment amplitude=1] (7,-4.4) -- (9,-4.4);\draw[snake=snake,segment amplitude=1] (7,-4.2) -- (9,-4.2);
\draw[snake=snake,segment amplitude=1] (7,-4) -- (9,-4);
\draw[snake=snake,segment amplitude=1] (7,-3.8) -- (9,-3.8);\draw[snake=snake,segment amplitude=1] (7,-3.6) -- (9,-3.6);
\draw[snake=snake,segment amplitude=1] (7,-3.4) -- (9,-3.4);\draw[snake=snake,segment amplitude=1] (7,-3.2) -- (9,-3.2);
\draw[snake=snake,segment amplitude=1] (7,-5.8) -- (9,-5.8);\draw[snake=snake,segment amplitude=1] (7,-5.6) -- (9,-5.6);
\draw[snake=snake,segment amplitude=1] (7,-5.4) -- (9,-5.4);\draw[snake=snake,segment amplitude=1] (7,-5.2) -- (9,-5.2);
\draw[snake=snake,segment amplitude=1] (7,-6) -- (9,-6);\draw[snake=snake,segment amplitude=1] (7,-5) -- (9,-5);
\draw[snake=snake,segment amplitude=1] (7,-6.8) -- (9,-6.8);\draw[snake=snake,segment amplitude=1] (7,-6.6) -- (9,-6.6);
\draw[snake=snake,segment amplitude=1] (7,-6.4) -- (9,-6.4);\draw[snake=snake,segment amplitude=1] (7,-6.2) -- (9,-6.2);

\draw[snake=snake,segment amplitude=1] (5,-3.8) -- (7,-3.8);\draw[snake=snake,segment amplitude=1] (5,-3.6) -- (7,-3.6);
\draw[snake=snake,segment amplitude=1] (5,-3.4) -- (7,-3.4);\draw[snake=snake,segment amplitude=1] (5,-3.2) -- (7,-3.2);

\draw[dashed] (7,-8) -- (7,-5);

\filldraw[pattern=north west lines,dashed]  (7,-8) rectangle (9,-7);
\filldraw[pattern=north west lines,dashed]  (5,-4) rectangle (7,-3);

\node at (10,1) {$g$};
\node at (5,-5.5) {$g_1 = f_1$};
\node[anchor=north,yshift=-.5] at (17.5,-8) {$1$};
\end{scope}

\begin{scope}[shift={(17,-3)}]
 \node[anchor=north,yshift=-.5] at (7,-8) {$t$};
 \filldraw[fill=gray,draw=gray] (2.5,-8) rectangle  (5,-2);
 \filldraw[fill=gray,draw=gray]  (5,-8) rectangle (7,-3);
 
 \draw (7,-3) rectangle (17.5,-8);

\draw (2,-8) -- (18,-8);
\draw (2.5,-8) node[anchor=north,yshift=-.5] {$0$} -- (2.5,-2) -- (5,-2) -- (5,-3) -- (7,-3) -- (7,-8) -- (9,-8)
 node[anchor=north,yshift=-.5] {$m$};

\draw[snake=snake,segment amplitude=1] (9,-6.2) -- (17.5,-6.2);
\draw[snake=snake,segment amplitude=1] (9,-6.4) -- (17.5,-6.4);
\draw[snake=snake,segment amplitude=1] (9,-6.6) -- (17.5,-6.6);
\draw[snake=snake,segment amplitude=1] (9,-6.8) -- (17.5,-6.8);
\draw[snake=snake,segment amplitude=1] (9,-7) -- (17.5,-7);
\draw[snake=snake,segment amplitude=1] (9,-7.2) -- (17.5,-7.2);
\draw[snake=snake,segment amplitude=1] (9,-7.4) -- (17.5,-7.4);
\draw[snake=snake,segment amplitude=1] (9,-7.6) -- (17.5,-7.6);
\draw[snake=snake,segment amplitude=1] (9,-7.8) -- (17.5,-7.8);

\draw[snake=snake,segment amplitude=1] (9,-3.2) -- (17.5,-3.2);\draw[snake=snake,segment amplitude=1] (9,-3.4) -- (17.5,-3.4);
\draw[snake=snake,segment amplitude=1] (9,-3.8) -- (17.5,-3.8);\draw[snake=snake,segment amplitude=1] (9,-3.6) -- (17.5,-3.6);
\draw[snake=snake,segment amplitude=1] (9,-4) -- (17.5,-4);
\draw[snake=snake,segment amplitude=1] (9,-4.2) -- (17.5,-4.2);\draw[snake=snake,segment amplitude=1] (9,-4.4) -- (17.5,-4.4);
\draw[snake=snake,segment amplitude=1] (9,-4.8) -- (17.5,-4.8);\draw[snake=snake,segment amplitude=1] (9,-4.6) -- (17.5,-4.6);
\draw[snake=snake,segment amplitude=1] (9,-5) -- (17.5,-5);
\draw[snake=snake,segment amplitude=1] (9,-5.2) -- (17.5,-5.2);\draw[snake=snake,segment amplitude=1] (9,-5.4) -- (17.5,-5.4);
\draw[snake=snake,segment amplitude=1] (9,-5.8) -- (17.5,-5.8);\draw[snake=snake,segment amplitude=1] (9,-5.6) -- (17.5,-5.6);
\draw[snake=snake,segment amplitude=1] (9,-6) -- (17.5,-6);

\draw[snake=snake,segment amplitude=1] (7,-4.8) -- (9,-4.8);\draw[snake=snake,segment amplitude=1] (7,-4.6) -- (9,-4.6);
\draw[snake=snake,segment amplitude=1] (7,-4.4) -- (9,-4.4);\draw[snake=snake,segment amplitude=1] (7,-4.2) -- (9,-4.2);
\draw[snake=snake,segment amplitude=1] (7,-4) -- (9,-4);
\draw[snake=snake,segment amplitude=1] (7,-3.8) -- (9,-3.8);\draw[snake=snake,segment amplitude=1] (7,-3.6) -- (9,-3.6);
\draw[snake=snake,segment amplitude=1] (7,-3.4) -- (9,-3.4);\draw[snake=snake,segment amplitude=1] (7,-3.2) -- (9,-3.2);
\draw[snake=snake,segment amplitude=1] (7,-5.8) -- (9,-5.8);\draw[snake=snake,segment amplitude=1] (7,-5.6) -- (9,-5.6);
\draw[snake=snake,segment amplitude=1] (7,-5.4) -- (9,-5.4);\draw[snake=snake,segment amplitude=1] (7,-5.2) -- (9,-5.2);
\draw[snake=snake,segment amplitude=1] (7,-6) -- (9,-6);\draw[snake=snake,segment amplitude=1] (7,-5) -- (9,-5);
\draw[snake=snake,segment amplitude=1] (7,-6.8) -- (9,-6.8);\draw[snake=snake,segment amplitude=1] (7,-6.6) -- (9,-6.6);
\draw[snake=snake,segment amplitude=1] (7,-6.4) -- (9,-6.4);\draw[snake=snake,segment amplitude=1] (7,-6.2) -- (9,-6.2);
\draw[snake=snake,segment amplitude=1] (7,-7.8) -- (9,-7.8);\draw[snake=snake,segment amplitude=1] (7,-7.6) -- (9,-7.6);
\draw[snake=snake,segment amplitude=1] (7,-7.4) -- (9,-7.4);\draw[snake=snake,segment amplitude=1] (7,-7.2) -- (9,-7.2);
\draw[snake=snake,segment amplitude=1] (7,-7) -- (9,-7);

\filldraw[pattern=north west lines,dashed]  (7,-8) rectangle (9,-7);
\filldraw[pattern=north west lines,dashed]  (5,-4) rectangle (7,-3);


\node at (10,1) {$g'$};
\node at (5,-5.5) {$g_1 = f_1$};
\node[anchor=north,yshift=-.5] at (17.5,-8) {$1$};
\end{scope}

       \end{tikzpicture}
    \end{minipage}%
          \caption{An example of two CFPs: $(f,g)$ is produced by Lemma\nobreakspace \ref {lem:introduce_m}, whereas $(f',g')$ is produced by Lemma\nobreakspace \ref {lem:introduce_t}. In this picture, the height of the plot corresponds to $\Size{f(x)}$ for each $x$, while the shaded and wavy parts correspond to the contributions of $f_1$ and $f_r$ to $\size_f$; similarly for $g$. The wavy parts are liquid.
          In~Lemma\nobreakspace \ref {lem:introduce_t} we want to make $f_1 = g_1$ equal to $\size_g$ on an interval $[0,t)$, so we increase sizes of solid
elements in $g$ on that interval, while decreasing those on the interval $[t,m)$. The striped regions in the pictures of $g$ correspond to
these changes. (We repeat the same changes in $f$, and we also move liquid elements in $g$ to keep $\size_g$ unchanged.)
          The threshold $t \in [0,m]$ is chosen so that $g_1$ becomes equal to $\size_g$ on $[0,t)$ while the solid elements on $[t,m)$ are eradicated (i.e., so that the areas of the striped regions in $g$ are equal).}
    
    \label{fig:idealdistro}

\end{figure*}

\begin{proof}
	Let us begin by giving a proof outline;
	see also~Figure\nobreakspace \ref {fig:idealdistro} for an illustration.
	
\begin{itemize}
	\item Our primary objective is to make $g_1$ equal to $\size_g$ on $[0,t)$ (where $t$ is to be determined). To this end, we increase the sizes of the solid elements in $g(x)$ for $x \in [0,t)$ and at the same time decrease the total sizes of liquid elements for these $g(x)$ (which keeps $\size_g$ unchanged). To offset this change, we decrease the sizes of solid elements in $g(y)$ for $y \in [t,m)$ (and also increase the total sizes of liquid elements there). We also modify the sizes of solid elements in $f$ so as to keep $f$ and $g$ compatible and preserve the properties \textit{(a)-(b)}.
	\item The threshold $t$ is defined so that after we finish this process, the solid elements in $g(x)$ for $x \in [0,t)$ will have filled
out the entire patterns $g(x)$, and the solid elements in $g(y)$ for $y \in [t,m)$ will have disappeared.
	\item Our main technical claim is that this process does not invalidate the ratio of costs.
	\item Finally, we can easily ensure condition \textit{(d)} by levelling $g$ on $[t,1)$, which only decreases its cost.
\end{itemize}
Now we proceed to the details.
First we define the threshold $t \in [0,m]$ as a solution to the equation
\[ \int_0^t g_r(x) \, dx = \int_t^m g_1(x) \, dx, \]
which exists because the functions under both integrals are nonnegative. The left-hand side will be the total increase in sizes of solid elements in $g(x)$ for $x \in [0,t)$ and the right-hand side will be the total decrease in sizes of solid elements in $g(y)$ for $y \in [t,m)$ (these elements will disappear completely).

Now we will carry out the process that we have announced in the outline. Namely, while there exists $x \in [0,t)$ with $g_r(x) > 0$, do the following:
\begin{itemize}
	\item find $y \in [t,m)$ with $g_1(y) > \eps$ (i.e., $g(y)$ where the solid element has not been eradicated yet),
	\item increase the size of the solid element in $g(x)$ by $\eps$,
	\item do the same in $f$,
	\item decrease the size of the solid element in $g(y)$ by $\eps$,
	\item do the same in $f$,
	\item move one liquid element (of size $\eps$) from $g(x)$ to $g(y)$.
\end{itemize}

Formally, as usual, we find $\tau > 0$ such that $f$ and $g$ are constant on $[x,x+\tau)$ and on $[y,y+\tau)$ and we do this in all of these patterns.

Note that the following invariants are satisfied after each iteration:
\begin{itemize}
	\item $f_1 = g_1$,
	\item $f_r$ does not change,
	\item $\size_g$ does not change,
	\item $\int_0^m g_1(x) \, dx$ does not change,
	\item $g_1$ can only increase on $[0,t)$ and it can only decrease on $[t,m)$,
	\item $f_1(x) \ge f_1(y)$ for all $x \in [0,t)$ and $y \in [t,m)$.\footnote{This is because $f_1 = g_1$ was initially non-increasing and since then it has increased on $[0,t)$ and decreased on $[t,m)$.}
\end{itemize}

By the definition of $t$, when this process ends, the patterns $g(x)$ for $x \in [0,t)$ contain only a single solid element, while $g(x)$ for $x \in [t,m)$ (thus also for $x \in [t,1)$) contain no solid elements. Since $f_1 = g_1$, the patterns $f(x)$ also have only liquid elements for $x \in [t,1)$. Thus properties \textit{(a), (b)} and \textit{(c')} are satisfied. We reason about the ratio of costs in the following two technical claims:

\begin{claim} \label{fact:cost_two_alpha}
	In a single iteration, $\cost(f)$ increases by $2 \alpha$ and $\cost(g)$ increases by $\alpha$, for some $\alpha \ge 0$.
\end{claim}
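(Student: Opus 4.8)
The plan is to expand $\cost(f)$ and $\cost(g)$ to first order in $\eps$ and read off the two increments. Throughout I would work in the limit $\eps \to 0$, discarding $O(\eps^2)$ terms, and use the shape of the patterns inherited from Lemma~\ref{lem:introduce_m}: since $f_r$ is constant --- call its value $L$ --- every pattern $f(x)$ and $g(x)$ touched by the iteration (here $x \in [0,t)$, $y \in [t,m)$) consists of a single solid element plus liquid elements; in $f$ the liquid mass equals $L$ in each such pattern, while in $g$ it may be an arbitrary value. Writing $a = f_1(x) = g_1(x)$ and $b = f_1(y) = g_1(y)$, the invariant $f_1(x) \ge f_1(y)$ maintained in the proof gives $a \ge b$, and this inequality is the only reason $\alpha$ will come out nonnegative.

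First I would record the cost of a generic pattern consisting of a solid element $p$ together with liquid elements of total size $\ell$ (that is, $\ell/\eps$ elements of size $\eps$): placing the solid element last in the linear order, its cost is $\tfrac{\ell^2}{2} + p^2 + p\ell + O(\eps)$, where $\tfrac{\ell^2}{2} + O(\eps)$ is the contribution of the liquid elements among themselves (an elementary geometric sum). Next I would track the iteration pattern by pattern. In $f$ nothing moves between patterns or between the solid and liquid parts: the solid element of $f(x)$ grows to $a + \eps$ and that of $f(y)$ shrinks to $b - \eps$, with liquid mass staying $L$ in both; hence $\Delta\cost(f(x)) = \eps(2a + L) + O(\eps^2)$ and $\Delta\cost(f(y)) = -\eps(2b + L) + O(\eps^2)$, and the two $\eps L$ terms cancel against each other. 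In $g$ the liquid mass genuinely shifts: $g(x)$ gains $\eps$ of solid but loses one liquid element, while $g(y)$ loses $\eps$ of solid but gains one liquid element; expanding, the changes in the liquid--liquid term and in the cross term between the solid element and the liquid mass partially cancel within each pattern, leaving $\Delta\cost(g(x)) = \eps a + O(\eps^2)$ and $\Delta\cost(g(y)) = -\eps b + O(\eps^2)$, independently of the liquid masses.

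Finally I would assemble the pieces. Performing the operation simultaneously on a measure $\tau$ of patterns (the $\tau$ of the formal description) and letting $\eps \to 0$ yields $\Delta\cost(f) = 2\tau\eps(a - b)$ and $\Delta\cost(g) = \tau\eps(a - b)$; setting $\alpha := \tau\eps(a - b)$ gives exactly $\Delta\cost(f) = 2\alpha$ and $\Delta\cost(g) = \alpha$, with $\alpha \ge 0$ because $a \ge b$. The only real care needed is in the bookkeeping of the liquid elements --- keeping track that $\size_g$ stays fixed while $\size_f$ does not, and that the various $\eps L$ and $\eps \ell$ terms cancel as claimed --- but once the per-pattern expansions are written down, the claim is immediate.
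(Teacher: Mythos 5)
Your proposal is correct and follows essentially the same route as the paper: expand the pattern costs under the change of the solid element by $\pm\eps$ (and the offsetting liquid transfer in $g$), use the constancy of $f_r$ to see that $\size(f(x))-\size(f(y)) = f_1(x)-f_1(y)$, and read off that $\Delta\cost(f)$ is exactly twice $\Delta\cost(g)$, with nonnegativity coming from the maintained invariant $f_1(x) \ge f_1(y)$. The one minor difference is cosmetic: the paper avoids all $O(\eps^2)$ bookkeeping by choosing the linear order so that the modified elements come last (in $f$) or first (in $g$), which makes the per-iteration increments exactly $2\eps(f_1(x)-f_1(y)+\eps)$ and $\eps(f_1(x)-f_1(y)+\eps)$, whereas you expand fully and drop $O(\eps^2)$; since all arguments here are in the $\eps\to 0$ limit this is acceptable, though the paper's version has the small advantage that the $2\!:\!1$ ratio holds exactly rather than just to leading order.
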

\begin{proof}
	The patterns have changed so that:
	\begin{itemize}
		\item $f(x)$ had $f_1(x)$ increased by $\eps$,
		\item $f(y)$ had $f_1(y)$ decreased by $\eps$,
		\item $g(x)$ had $g_1(x)$ increased by $\eps$ and one liquid element removed,
		\item $g(y)$ had $g_1(y)$ decreased by $\eps$ and one liquid element added.
	\end{itemize}
	\ifsoda
		Since the order of elements does not matter, in computing $\cost(f)$ we think that the solid element goes last in the linear order; then, $\Delta \cost(f(x))$ is equal to
	
		\begin{equation*} \begin{split}
			& (f_1(x) + \eps) (\size(f(x)) + \eps) - f_1(x) \size(f(x))\\ =& \ \eps \rb{ \size(f(x)) + f_1(x) +\eps},
			\end{split} \end{equation*}
		$\Delta \cost(f(y))$ is equal to
			\begin{equation*} \begin{split}
	 		 & (f_1(y) - \eps) (\size(f(y)) - \eps) - f_1(y) \size(f(y))\\ =& \ \eps \rb{ - \size(f(y)) - f_1(y) +\eps},
		\end{split} \end{equation*}
		and therefore $\Delta \cost(f(x)) + \Delta \cost(f(y))$ is equal to
		\begin{equation*} \begin{split}
			 & \eps ( \size(f(x)) - \size(f(y)) + f_1(x) - f_1(y) + 2 \eps )\\
			=& \ 2 \eps \rb{ f_1(x) - f_1(y) + \eps },
		\end{split} \end{equation*}
	\else
		Since the order of elements does not matter, in computing $\cost(f)$ we think that the solid element goes last in the linear order:
	
		\begin{equation*} \begin{split}
			\Delta \cost(f(x)) &= (f_1(x) + \eps) (\size(f(x)) + \eps) - f_1(x) \size(f(x)) = \eps \rb{ \size(f(x)) + f_1(x) + \eps}, \\
			\Delta \cost(f(y)) &= (f_1(y) - \eps) (\size(f(y)) - \eps) - f_1(y) \size(f(y)) = \eps \rb{ - \size(f(y)) - f_1(y) + \eps},
		\end{split} \end{equation*}
		\begin{equation*} \begin{split}
			\Delta \cost(f(x)) + \Delta \cost(f(y)) &= \eps \rb{ \size(f(x)) - \size(f(y)) + f_1(x) - f_1(y) + 2 \eps } \\
			&= 2 \eps \rb{ f_1(x) - f_1(y) + \eps },
		\end{split} \end{equation*}
	\fi
	where in the last line we used that $\size(f(x)) - \size(f(y)) = f_1(x) + f_r(x) - f_1(y) - f_r(y) = f_1(x) - f_1(y)$ since $f_r$ is constant.
	
	In computing $\cost(g)$, we think that the solid element and the one liquid element which was added or removed go first (and other elements are unaffected since $\size_g$ is preserved):
	\ifsoda
		\begin{align*}
			\Delta \cost(g(x)) &= (g_1(x) + \eps)^2 - (g_1(x)^2 + \eps (g_1(x) + \eps)) \\ &= \eps g_1(x), \\
			\Delta \cost(g(y)) &= ((g_1(y) - \eps)^2 + \eps g_1(y)) - g_1(y)^2.
		\end{align*}
	\else
		\begin{align*}
			\Delta \cost(g(x)) &= (g_1(x) + \eps)^2 - (g_1(x)^2 + \eps (g_1(x) + \eps)) = \eps g_1(x), \\
			\Delta \cost(g(y)) &= ((g_1(y) - \eps)^2 + \eps g_1(y)) - g_1(y)^2.
		\end{align*}
	\fi
	Adding up we have:
	\ifsoda
		\begin{align*}
			\Delta \cost(g(x)) + \Delta \cost(g(y)) &= \eps \rb{ g_1(x) - g_1(y) + \eps } \\&= \eps \rb{ f_1(x) - f_1(y) + \eps },
		\end{align*}
	\else
		\begin{align*}
			\Delta \cost(g(x)) + \Delta \cost(g(y)) = \eps \rb{ g_1(x) - g_1(y) + \eps } = \eps \rb{ f_1(x) - f_1(y) + \eps },
		\end{align*}
	\fi
	where  we used that $f_1 = g_1$. Thus we have that
	\ifsoda
		\begin{align*} 
		&\Delta\cost(f(x)) + \Delta \cost(f(y)) \\ &= 2 \sb {\Delta \cost(g(x)) + \Delta \cost(g(y))}
		\end{align*}
	\else
		\[ \Delta \cost(f(x)) + \Delta \cost(f(y)) = 2 \sb {\Delta \cost(g(x)) + \Delta \cost(g(y))} \]
	\fi
 
	and we get the statement by setting
	\ifsoda
		\begin{align*} 
			\alpha&= \tau \rb{ \Delta \cost(g(x)) + \Delta \cost(g(y))} \\&= \tau \eps \rb{ f_1(x) - f_1(y) + \eps } \ge 0 	 
		\end{align*}
	\else
		\[ \alpha = \tau \rb{ \Delta \cost(g(x)) + \Delta \cost(g(y))} = \tau \eps \rb{ f_1(x) - f_1(y) + \eps } \ge 0 \]
	\fi
	(recall that $\tau$ is the fraction of patterns where we increase $g_1$; nonnegativity follows by the last invariant above).
\end{proof}

\begin{claim}
	Let $(f',g')$ be the CFP obtained at this point and $(f,g)$ be the original CFP. Then
	\[ \frac{\cost(f')}{\cost(g')} \ge \min \left( 2, \frac{\cost(f)}{\cost(g)} \right). \]	
\end{claim}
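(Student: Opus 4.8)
The plan is to track how much cost the procedure of~Claim~\ref{fact:cost_two_alpha} adds to $f$ and to $g$ in total. Let $S \ge 0$ be the sum of the quantities $\alpha$ over all iterations of the \texttt{while} loop (in the limit $\eps \to 0$ this is a convergent sum, since each iteration removes a positive amount of liquid mass from a finite budget and the loop halts once no $x \in [0,t)$ has $g_r(x) > 0$). Summing the per-iteration increments furnished by~Claim~\ref{fact:cost_two_alpha} gives $\cost(f') = \cost(f) + 2S$ and $\cost(g') = \cost(g) + S$. So the entire statement reduces to a one-line inequality about the mediant of two fractions.

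Concretely, write $a = \cost(f)$ and $b = \cost(g)$. We may assume $b > 0$ (otherwise there is nothing to prove) and $S > 0$ (if $S = 0$ then $(f',g')$ has the same cost ratio as $(f,g)$, and we are done). Then
\[ \frac{\cost(f')}{\cost(g')} = \frac{a + 2S}{b + S} = \frac{b}{b+S} \cdot \frac{a}{b} + \frac{S}{b+S} \cdot 2, \]
a convex combination of $\frac{a}{b}$ and $2$, hence a value lying between them. In particular $\frac{a+2S}{b+S} \ge \min\rb{ \frac{a}{b}, 2 } = \min\rb{ 2, \frac{\cost(f)}{\cost(g)} }$, which is exactly the claim. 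If one prefers to avoid invoking the mediant fact: when $a \ge 2b$ we get $\frac{a+2S}{b+S} \ge \frac{2b+2S}{b+S} = 2$; when $a < 2b$, cross-multiplying yields $b(a+2S) - a(b+S) = S(2b-a) \ge 0$, i.e.\ $\frac{a+2S}{b+S} \ge \frac{a}{b}$; in both cases the bound holds.

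There is no real obstacle here: the only point needing (minor) care is justifying that $S$ is well-defined and finite in the $\eps \to 0$ limit, so that the telescoping of~Claim~\ref{fact:cost_two_alpha} over all iterations is legitimate; everything else is the elementary mediant computation above. As a closing remark, the subsequent levelling of $g$ on $[t,1)$ used to obtain condition \textit{(d)} only decreases $\cost(g')$ (as noted in the outline) and hence only increases the ratio, so the bound $\frac{\cost(f')}{\cost(g')} \ge \min\rb{2, \frac{\cost(f)}{\cost(g)}}$ is preserved by that final modification as well.
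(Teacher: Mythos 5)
Your argument is the same as the paper's: both set a nonnegative $\beta$ (your $S$) equal to the total cost increment from Claim~\ref{fact:cost_two_alpha}, so that $\cost(f') = \cost(f) + 2\beta$ and $\cost(g') = \cost(g) + \beta$, and then conclude by the observation that $\frac{a+2\beta}{b+\beta}$ lies between $\frac{a}{b}$ and $2$ (the paper phrases it as two cases, you as a convex combination, but it is the same one-line inequality). Correct and essentially identical.
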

\begin{proof}
	By~\protect \MakeUppercase {C}laim\nobreakspace \ref {fact:cost_two_alpha}, we have $\cost(f') = \cost(f) + 2 \beta$ and $\cost(g') = \cost(g) + \beta$ for some $\beta \ge 0$ (which is the sum of $\alpha$'s from~\protect \MakeUppercase {C}laim\nobreakspace \ref {fact:cost_two_alpha}). Now there are two cases:
	\begin{itemize}
		\item if $\frac{\cost(f)}{\cost(g)} \le 2$, then $\frac{\cost(f) + 2 \beta}{\cost(g) + \beta} \ge \frac{\cost(f)}{\cost(g)}$,
		\item if $\frac{\cost(f)}{\cost(g)} \ge 2$, then $\frac{\cost(f) + 2 \beta}{\cost(g) + \beta} \ge 2$ (even though the ratio decreases, it stays above $2$).
	\end{itemize}
\end{proof}

Finally, as the last step, we equalize the total sizes of liquid elements in $g(x)$ for $x \in [t,1)$ (by moving liquid elements from larger patterns to smaller patterns, until all are equal), thus satisfying property \textit{(d)}. Clearly, this can only decrease the cost of $g$ (by minimizing the variance of $\size_g$ on the interval $[t,1)$), so the ratio increases and Lemma\nobreakspace \ref {lem:introduce_t} follows.
\end{proof}

Note that Lemma\nobreakspace \ref {lem:introduce_t} does not guarantee that the ratio of costs increases; we only claim that it either increases, or it is now more than $2$. However, we will shortly show in~Lemma\nobreakspace \ref {lem:bound_ratio} that the ratio is actually much below $2$, so the latter is in fact impossible.

Now that we have our ideal CFP, we can finally bound its cost ratio.

\newcommand{\boundratiostatement}{
	Given a CFP $(f,g)$ as postulated in~Lemma\nobreakspace \ref {lem:introduce_t} (see the lower part of~Figure\nobreakspace \ref {fig:idealdistro}), we have \[ \frac{\cost(f)}{\cost(g)} \le \frac{1 + \sqrt{2}}{2}. \]
}
\begin{lemma} \label{lem:bound_ratio}
	\boundratiostatement
\end{lemma}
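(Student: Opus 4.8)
The plan is to make the handful of parameters describing the ideal CFP explicit, write $\cost(f)$ and $\cost(g)$ in closed form, and then bound the ratio by two nested single-variable optimizations.

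First I would fix coordinates. By property~\textit{(a)} of Lemma~\ref{lem:introduce_t}, $f_r \equiv M$ for some constant $M$ (if $M = 0$ everything is degenerate and the ratio equals $1$, so assume $M > 0$). By~\textit{(c')} there is a threshold $t$ such that on $[0,t)$ each pattern $f(x)$ consists of one solid element of size $h(x) := f_1(x) = g_1(x)$ together with liquid elements, whose total size is $f_r(x) = M$, while $g(x)$ is just the single solid element $h(x)$; and on $[t,1)$ both $f(x)$ and $g(x)$ are purely liquid, with liquid total $f_r(x) = M$ in $f$ and, by~\textit{(d)}, a constant total $L$ in $g$. Since all liquid elements have size $\eps$, compatibility of $(f,g)$ forces the total amount of liquid to agree in $f$ and $g$; comparing the integrals over $[0,1)$ gives $M = (1-t)L$, hence $L = M/(1-t)$ and in particular $t < 1$. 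Compatibility for solid elements is automatic, since these occur precisely as $h(x)$ for $x \in [0,t)$ in both $f$ and $g$.

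Next I would compute costs in the limit $\eps \to 0$. A purely liquid pattern of total size $s$ has cost $\tfrac{s^2}{2}$, and placing a single solid element $h$ on top of liquid of total size $M$ contributes an extra $hM + h^2$ (put the solid element last in the order, which is legitimate since Smith ratios are uniform). With $A := \int_0^t h$ and $B := \int_0^t h^2$ this gives
\[ \cost(f) = \tfrac{M^2}{2} + MA + B, \qquad \cost(g) = B + \tfrac{M^2}{2(1-t)}. \]
Both are homogeneous of degree two in $(M,h)$, so we rescale to $M = 1$; it then suffices to bound
\[ R \;:=\; \frac{\tfrac12 + A + B}{\,B + \tfrac{1}{2(1-t)}\,} \]
over all $t \in [0,1)$ and all $(A,B)$ arising from some $h \ge 0$ on $[0,t)$. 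The only constraint is $0 \le A^2 \le tB$ by Cauchy--Schwarz (equality iff $h$ is constant), and it does no harm to maximize $R$ over this possibly larger family; in particular the proof does not even need $h$ to be non-increasing.

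The last step is the optimization, which I expect to be the only genuinely computational part. Since $\partial R/\partial A > 0$, the maximum has $A = \sqrt{tB}$, so we may take $h$ constant. Writing $A = u\sqrt t$, $B = u^2$, so $R = \frac{\tfrac12 + u\sqrt t + u^2}{u^2 + \tfrac{1}{2(1-t)}}$, the stationarity condition in $u$ simplifies to $(1-t)u^2 - \sqrt t\,u - \tfrac12 = 0$, with unique positive root $u = \frac{\sqrt t + \sqrt{2-t}}{2(1-t)}$; at any stationary point $R$ equals the ratio of derivatives $\frac{\sqrt t + 2u}{2u} = 1 + \frac{\sqrt t}{2u}$, which after rationalizing equals $1 + \frac{\sqrt{t(2-t)} - t}{2}$ (this is a maximum over $u$, since $R \to 1$ both as $u \to 0^+$ and as $u \to \infty$, while the critical value exceeds $1$). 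Finally I would maximize $\sqrt{t(2-t)} - t$ over $t \in [0,1)$: the derivative vanishes when $(1-t)^2 = t(2-t)$, i.e.\ $2t^2 - 4t + 1 = 0$, whose root in $[0,1)$ is $t = 1 - \tfrac{1}{\sqrt 2}$; there $t(2-t) = \tfrac12$, so the value is $\sqrt 2 - 1$ and $R = 1 + \frac{\sqrt 2 - 1}{2} = \frac{1 + \sqrt 2}{2}$. Since $R \to 1$ at both endpoints $t \to 0$ and $t \to 1$, this interior value is the maximum, which proves the lemma. The only delicate points are really the $\eps \to 0$ bookkeeping for liquification and the correct use of compatibility to pin down $L = M/(1-t)$; once the closed forms above are in hand, everything is routine calculus. (As a bonus, the extremal configuration $t = 1 - 1/\sqrt 2$ with $h$ constant is a genuine scheduling configuration, which is the source of the tightness in Theorem~\ref{thm:main}.)
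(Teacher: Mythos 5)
Your proposal is correct and mirrors the paper's overall strategy---reduce to a CFP with a single solid size, write the ratio as an explicit rational function of three parameters, and optimize---but the two reduction steps and the final optimization are carried out by genuinely different (though equally routine) means. For the reduction, the paper decomposes $f$ and $g$ into convex combinations of CFPs $f^i$, $g^i$ each having a single solid size, and uses the mediant inequality for ratios; you instead observe that the costs depend on the solid profile $h$ only through the two moments $A=\int_0^t h$ and $B=\int_0^t h^2$, and then invoke Cauchy--Schwarz ($A^2\le tB$) together with the monotonicity of the ratio in $A$ to show the maximum is attained by constant $h$. Your observation is arguably cleaner and makes it transparent that the constraint $h$ non-increasing is never used. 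For the final bound, the paper clears denominators and shows a quadratic in $\lambda$ is nonnegative by checking its discriminant (and then iterates in $t$); you instead locate the critical point by first-order conditions in $u=\sqrt{B}$, evaluate the ratio there via $N/D=N'/D'$, and then maximize the resulting single-variable expression $1+\tfrac12(\sqrt{t(2-t)}-t)$ over $t$. Your route has the nice side benefit of directly producing the extremal parameters $t^\star=1-1/\sqrt 2$ etc.\ that the paper later uses in its tightness construction; the paper's discriminant route is slightly shorter if one only wants the inequality. Both are correct, and the small calculations you left implicit (the stationarity condition, the rationalization, and the endpoint behaviour $R\to 1$) all check out.
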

The proof proceeds in two simple steps: first, we argue that we can assume without loss of generality that there is only a single large element (i.e., $f_1 = g_1$ is constant on $[0,t)$). Then, for such pairs of functions, the ratio is simply a real function of three variables whose maximum is easy to compute.
\begin{proof}
As a first step, we assume without loss of generality that there is only a single large element (i.e., $f_1 = g_1$ is constant on $[0,t)$). This is due to the fact that both $f$ and $g$ can be written as a weighted average of functions with a single large element. Formally, let $\ell_1, \ell_2, ...$ be the step lengths of $f_1$ on $[0,t)$, so that $\sum_i \ell_i = t$ and $f_1$ is constant on $[0,\ell_1)$, on $[\ell_1,\ell_1+\ell_2)$ and so on. Define $f^i$ to be $f$ with the whole $f_1$ on $[0,t)$ replaced by the $i$-th step of $f_1$, i.e.,
\[ f^i(x) = \begin{cases} f(\ell_1 + ... + \ell_{i-1}) & \text{for } x \in [0,t), \\ f(x) & \text{for } x \in [t,1). \end{cases} \]
Define $g^i$ similarly.
\ifsoda
	Then, $\cost(f)$ is equal to
	\begin{align*}
	 	& \sum_i \ell_i \cost(f(\ell_1 + ... + \ell_{i-1})) + (1-t) \cdot \cost(f(1)) \\
	 	         =& \sum_i \frac{\ell_i}{t} \sb{ t \cdot \cost(f(\ell_1 + ... + \ell_{i-1})) + (1-t) \cdot \cost(f(1)) } \\
	 	         =& \sum_i \frac{\ell_i}{t} \cost(f^i)
	\end{align*}
\else
	Then
	\begin{align*}
		\cost(f) &= \sum_i \ell_i \cost(f(\ell_1 + ... + \ell_{i-1})) + (1-t) \cdot \cost(f(1)) \\
		         &= \sum_i \frac{\ell_i}{t} \sb{ t \cdot \cost(f(\ell_1 + ... + \ell_{i-1})) + (1-t) \cdot \cost(f(1)) } \\
		         &= \sum_i \frac{\ell_i}{t} \cost(f^i)
	\end{align*}
\fi
and
similarly
$\cost(g) = \sum_i \frac{\ell_i}{t} \cost(g^i)$.
Thus, if we have $\frac{\cost(f^i)}{\cost(g^i)} \le \frac{1 + \sqrt{2}}{2}$ for each $i$, then
\ifsoda
	\begin{align*}
	\frac{\cost(f)}{\cost(g)} &= \frac{\sum_i \frac{\ell_i}{t} \cost(f^i)}{\sum_i \frac{\ell_i}{t} \cost(g^i)} \\
	&\le \frac{\sum_i
	\frac{\ell_i}{t} \frac{1 + \sqrt{2}}{2} \cost(g^i)}{\sum_i \frac{\ell_i}{t} \cost(g^i)} = \frac{1+\sqrt{2}}{2}. 
	\end{align*}
\else
	\[ \frac{\cost(f)}{\cost(g)} = \frac{\sum_i \frac{\ell_i}{t} \cost(f^i)}{\sum_i \frac{\ell_i}{t} \cost(g^i)} \le \frac{\sum_i \frac{\ell_i}{t} \frac{1 + \sqrt{2}}{2} \cost(g^i)}{\sum_i \frac{\ell_i}{t} \cost(g^i)} = \frac{1+\sqrt{2}}{2}. \]
\fi
So we assume that $f_1$ is constant on $[0,t)$ (i.e., the shaded areas in~Figure\nobreakspace \ref {fig:idealdistro} are rectangles). Let $\gamma = f_1(0)$ be the large element and $\lambda$ be the total mass of liquid elements (the same in $f$ as in $g$), i.e., $\lambda = f(1) = (1-t) g(1)$. In the limit $\eps \to 0$ we have
\ifsoda
	\begin{align*}
	\frac{\cost(f)}{\cost(g)} 
	&= \frac{ t \rb{ \gamma^2 + \int_0^\lambda \rb{ \gamma + x } dx } + (1-t) \int_0^\lambda x dx }{ t
	\gamma^2 + (1-t) \int_0^{g(1)} x \, dx } \\
	&= \frac{ t \rb{ \gamma^2 + \gamma \lambda + \frac{\lambda^2}{2} } + (1-t) \frac{\lambda^2}{2} } {t
	\gamma^2 + (1-t) \frac{\rb{\frac{\lambda}{1-t}}^2}{2}}\\
	&= \frac{t \gamma^2 + t \gamma \lambda + \frac{\lambda^2}{2}}{t \gamma^2 +
	\frac{\lambda^2}{2(1-t)}} 
	\end{align*}
\else
	\[ \frac{\cost(f)}{\cost(g)} = \frac{ t \rb{ \gamma^2 + \int_0^\lambda \rb{ \gamma + x } dx } + (1-t) \int_0^\lambda x dx }{ t \gamma^2 + (1-t) \int_0^{g(1)} x \, dx } = \frac{ t \rb{ \gamma^2 + \gamma \lambda + \frac{\lambda^2}{2} } + (1-t) \frac{\lambda^2}{2} } {t \gamma^2 + (1-t) \frac{\rb{\frac{\lambda}{1-t}}^2}{2}} = \frac{t \gamma^2 + t \gamma \lambda + \frac{\lambda^2}{2}}{t \gamma^2 + \frac{\lambda^2}{2(1-t)}}
\]
\fi
and we need to prove that this expression is at most $\frac{1+\sqrt{2}}{2}$ for all $t \in [0,1)$, $\gamma \ge 0$ and $\lambda \ge 0$. So we
want to show
\[ t \gamma^2 + t \gamma \lambda + \frac{\lambda^2}{2} \le \frac{1+\sqrt{2}}{2} \rb{ t \gamma^2 + \frac{\lambda^2}{2(1-t)} }, \]
that is,
\[ \lambda^2 \rb{ \frac{1+\sqrt{2}}{4(1-t)} - \frac 12 } - \lambda \cdot t \gamma + \frac{\sqrt{2}-1}{2} t \gamma^2 \ge 0. \]
Note that $\frac{1+\sqrt{2}}{4(1-t)} - \frac 12 > 0$ for $t \in [0,1)$, so this is a quadratic polynomial in $\lambda$ whose minimum value (over $\lambda \in \bR$) is
\ifsoda
	\begin{align*}
	&\frac{\sqrt{2} - 1}{2} t \gamma^2 - \frac{t^2 \gamma^2}{4 \rb{ \frac{1+\sqrt{2}}{4(1-t)} - \frac 12 }} \\
	=& t \gamma^2 \rb{
	\frac{\sqrt{2} - 1}{2} - \frac{t}{ \frac{1+\sqrt{2}}{1-t} - 2 } }  
	\end{align*}
\else
	\[ \frac{\sqrt{2} - 1}{2} t \gamma^2 - \frac{t^2 \gamma^2}{4 \rb{ \frac{1+\sqrt{2}}{4(1-t)} - \frac 12 }} = t \gamma^2 \rb{ \frac{\sqrt{2} - 1}{2} - \frac{t}{ \frac{1+\sqrt{2}}{1-t} - 2 } } \]
\fi
and we should prove that this is nonnegative. If $t = 0$ or $\gamma = 0$, then this is clearly true; otherwise we multiply both sides of the inequality by $\frac{1-t}{t \gamma^2} \rb{ \frac{1+\sqrt{2}}{1-t} - 2 }$ (a positive number) and after some calculations we are left with showing
\[ t^2 + \rb{\sqrt{2} - 2} t + \frac{3 - 2 \sqrt{2}}{2} \ge 0 \]
but this is again a quadratic polynomial, whose minimum is $0$.
\end{proof}

\begin{proof}[Proof of Theorem\nobreakspace \ref {thm:maintech}]
Now it is straightforward to see that
\ifsoda
	\begin{align*}
	\frac{1+\sqrt{2}}{2} &\ge \frac{\cost(f')}{\cost(g')} \ge \min \left( 2, \frac{\cost(f)}{\cost(g)} \right) \\
	&=\min \left( 2, \frac{\sum_i \yout_i \cost(C_i)}{\sum_i \yin_i \cost(C_i)} \right)  
	\end{align*}
\else
	\[ \frac{1+\sqrt{2}}{2} \ge \frac{\cost(f')}{\cost(g')} \ge \min \left( 2, \frac{\cost(f)}{\cost(g)} \right) = \min \left( 2, \frac{\sum_i \yout_i \cost(C_i)}{\sum_i \yin_i \cost(C_i)} \right) \]
\fi
where $(f,g)$ is produced from $(\yout,\yin)$ as in~\protect \MakeUppercase {F}act\nobreakspace \ref {fact:output_is_nice} and $(f',g')$ is produced from $(f,g)$ by applying Lemmas\nobreakspace \ref {lem:worst_case_output},  \ref {lem:introduce_m} and\nobreakspace  \ref{lem:introduce_t}; the first inequality is by~Lemma\nobreakspace \ref {lem:bound_ratio}. It follows that either $\frac{1+\sqrt{2}}{2} \ge 2$ (false) or $\frac{1+\sqrt{2}}{2} \ge \frac{\sum_i \yout_i \cost(C_i)}{\sum_i \yin_i \cost(C_i)}$.
\end{proof}

On the other hand, to conclude the proof of Theorem\nobreakspace \ref {thm:main}, we have the following lemma regarding the tightness of our analysis of Algorithm~\ref{alg:round}:
\newcommand{\tightstatement}{
	For any $\delta>0$, there is an instance $I$ of $R | | \sum p_j C_j$ whose optimal value is $c$, 
	and an optimal Configuration-LP solution whose objective value is also $c$, such that the rounded solution returned by
Algorithm
	\ref{alg:round} has
	cost at least $(\frac{1 + \sqrt{2}}{2}-\delta)c$.
}
\begin{lemma}\label{lem:tight}
	\tightstatement
\end{lemma}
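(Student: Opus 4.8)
The plan is to build, for each $\delta>0$, an explicit family of instances on which Algorithm~\ref{alg:round} is forced — independently of the convex decomposition chosen in Step~4 — to produce exactly the ideal worst-case CFP $f'$ analyzed in Lemma~\ref{lem:introduce_t}, against an input distribution that is exactly the ideal $g'$. I would fix $\gamma:=1$ and a rational $t=K/n$ with $0<K<n$ close to $t^\star:=1-1/\sqrt 2$ (the maximizer identified in the proof of Lemma~\ref{lem:bound_ratio}), set $\rho:=\frac{2t}{\sqrt 2-1+2t}$ (the ``liquid size'' of $g'$ for this $t$; at $t=t^\star$ it equals $2-\sqrt 2$) and $\lambda:=(1-t)\rho$. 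For a small parameter $\eps>0$, the instance $I_{t,\eps}$ will have $n$ identical machines, $K$ ``big'' jobs of size $\gamma$, and $N:=(n-K)\rho/\eps$ ``tiny'' jobs of size $\eps$, with every job processable on every machine (divisibility issues contribute only $o(1)$ terms). The input Configuration-LP solution $\ys$ I would use puts, on every machine, the singleton $\{b_j\}$ for each big job with value $1/n$, and — after fixing an arbitrary partition of the tiny jobs into $n-K$ groups of size $\rho/\eps$ — each group as a configuration with value $1/n$. This is feasible, its per-machine cost tends to $t\gamma^2+(1-t)\rho^2/2$, and its objective tends to $c:=K\gamma^2+\tfrac{(n-K)\rho^2}{2}$.

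Next I would verify that $c$ is simultaneously the integral optimum and the Configuration-LP optimum, so that $I_{t,\eps}$ has integrality gap $1$; this is what makes the example a lower bound against the rounding and not merely against a weak relaxation. The main tool is the identity $\cost(C)=\tfrac12\big(\size(C)^2+\sum_{p\in C}p^2\big)$, which rewrites the cost of any assignment (resp.\ LP solution) as $\tfrac12\sum_i\size(C_i)^2+\tfrac12 K\gamma^2+o(1)$ (resp.\ $\tfrac12\sum_{i,C}\ys_{iC}\size(C)^2+\tfrac12 K\gamma^2+o(1)$). Minimizing $\sum_i\size(C_i)^2$ over integral assignments, a short convexity argument (using $\rho<\gamma$: two big jobs never share a machine and tiny jobs never share with a big job) gives the minimizer ``each big job alone on its own machine, tiny jobs spread evenly over the other $n-K$ machines'', with value $K\gamma^2+\tfrac{\Sigma^2}{n-K}$ where $\Sigma:=(n-K)\rho$; hence the integral optimum is $c$. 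For the LP, in an optimal solution the big-job configurations are pure singletons and hence contribute exactly $K\gamma^2$ to $\sum_{i,C}\ys_{iC}\size(C)^2$, while the tiny-job configurations contribute at least $\tfrac{\Sigma^2}{n-K}$ by Cauchy--Schwarz, using that the total measure not occupied by big jobs is exactly $n-K$; so the LP optimum is at least $c$, hence equal to $c$.

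The heart of the argument is tracing Algorithm~\ref{alg:round} on $I_{t,\eps}$. On each machine the marginals $x_{ij}$ sum to $\approx N/n$, so there are $B\to\infty$ buckets. The first bucket receives the $K$ big jobs (total value $K/n=t<1$) and is then filled to capacity by exactly $n-K$ tiny jobs, and every later bucket holds only tiny jobs. Consequently, in every integral matching appearing in any convex decomposition of $z$, each bucket past the first is matched to a tiny job, while the first bucket is matched to a big job with probability exactly $t$ and otherwise to a tiny job. Hence the output distribution on each machine is (a discretization of) $f'$: with probability $t$ a configuration with one big job and tiny jobs of total size $\to\lambda$, and with probability $1-t$ a configuration of tiny jobs of total size $\to\lambda$. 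The rounded solution is then deterministic up to $o(1)$ and has cost $\to n\big(t(\gamma^2+\gamma\lambda+\tfrac{\lambda^2}{2})+(1-t)\tfrac{\lambda^2}{2}\big)$; dividing by $c$ gives exactly $\frac{t\gamma^2+t\gamma\lambda+\lambda^2/2}{t\gamma^2+\lambda^2/(2(1-t))}$, the expression bounded in the proof of Lemma~\ref{lem:bound_ratio}. By our choice of $\rho$, $\lambda$ is the critical value maximizing this ratio for the given $t$; the ratio equals $\frac{1+\sqrt 2}{2}$ at $t=t^\star$ and is continuous in $t$. So, given $\delta$, I would first pick a rational $t$ with $|t-t^\star|$ small enough that the ratio exceeds $\frac{1+\sqrt 2}{2}-\delta/2$, then pick $\eps$ small enough that the rounded cost is within a factor $1-\delta/(2c)$ of its limit; the resulting $I_{t,\eps}$ and $\ys$ witness the statement.

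The step I expect to be the main obstacle is establishing that the Configuration-LP optimum is exactly $c$, rather than merely at most $c$ — a suboptimal LP would trivialize the claim. The identity $\cost(C)=\tfrac12(\size(C)^2+\sum_{p}p^2)$ is what makes this tractable, reducing the question to minimizing a sum of squared machine loads subject to the ``lumpiness'' of the big jobs; after that, the only genuine content is the routine check that bundling big jobs, or mixing big and tiny jobs in one configuration, never pays off when $\rho<\gamma$. Everything else — the $\eps\to 0$ limit and the rational approximation of $t^\star$ — is bookkeeping.
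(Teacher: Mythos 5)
Your proposal matches the paper's own proof in all essentials: both construct an instance with roughly $t^\star n$ big jobs and a mass of tiny jobs tuned so that the integral and Configuration-LP optima coincide, trace Algorithm~\ref{alg:round} to see that the first bucket on each machine holds all big jobs with total marginal $t$ and every later bucket holds only tiny jobs, and then take $t \to t^\star$, $\eps \to 0$ to recover the ratio $h(t,\gamma,\lambda) \to \frac{1+\sqrt{2}}{2}$. Two small remarks: the identity $\cost(C) = \tfrac12\bigl(\size(C)^2 + \sum_{p\in C}p^2\bigr)$ you use to reduce LP-optimality to minimizing squared loads is a genuinely cleaner way to organize the optimality check than the paper's terse one-line appeal; and your claim that $\rho(t)=\frac{2t}{\sqrt 2-1+2t}$ is the critical $\lambda/(1-t)$ for every $t$ is not exactly right (the true maximizer is $\rho = \frac{-t+\sqrt{2t-t^2}}{1-t}$), but this is harmless since the two agree at $t^\star$ and you rely only on continuity.
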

\begin{proof}
The intuitive explanation is that the bound in~Lemma\nobreakspace \ref {lem:bound_ratio} is tight and thus there exists a CFP in the final form (as postulated by Lemma\nobreakspace \ref {lem:introduce_t}) with ratio exactly $\frac{1+\sqrt{2}}{2}$. Furthermore, this CFP indeed almost corresponds to an instance of $R | | \sum p_j C_j$ (except for the fact that the parameters which maximize the ratio are irrational). We make this intuition formal in the following.

Let 
\[h(t,\gamma,\lambda)=\frac{t \gamma^2 + t \gamma \lambda + \frac{\lambda^2}{2}}{t \gamma^2 + \frac{\lambda^2}{2(1-t)}}\]
be the function specifying the ratio of CFPs in the final form.
In Lemma\nobreakspace \ref {lem:bound_ratio}, we proved that $h(t,\gamma,\lambda)\leq \frac{1}{2}+\frac{1}{\sqrt{2}}$ for all $t,\gamma,\lambda\in[0,1]$. To
begin, let us fix the following maximizer $(t^\star,\lambda^\star,\gamma^\star)$ of $h$: $t^\star=1-\frac{1}{\sqrt{2}}$,
$\gamma^\star=\frac{1}{2}$ and $\lambda^\star=\frac{\sqrt{2}-1}{2}$; we have
$h(t^\star,\lambda^\star,\gamma^\star)=\frac{1}{\sqrt{2}}+\frac{1}{2}$.

Let us choose a small rational $\eta$. Next, let us fix rationals $\tilde{t}\in[t^\star-\eta,t^\star]$,
$\tilde{\lambda}\in[\lambda^\star-\eta,\lambda^\star]$ and $\tilde{\gamma}\in[\gamma^\star,\gamma^\star+\eta]$
such that $h(\tilde{t},\tilde{\gamma},\tilde{\lambda})\ge \frac{1+\sqrt{2}}{2}-\eta$.
Then, there exist positive integers $k$, $T$ and $\Lambda$ such that $T=\tilde{t}k$ and $\Lambda=k\tilde{\lambda}$.
Finally, select a small rational $\epsilon\le \eta$ such that
$\eps=\frac{\tilde{\lambda}}{k_1(1-\tilde{t})}$, for some integer $k_1>0$, and 
 $\eps=\frac{\tilde{\lambda}}{k_2}$, for some integer $k_2>0$. 
 
 Next, we create an instance 
$I_\eps$ of $R | | \sum p_j C_j$ which consists of $k$ machines, a set $\mathcal{T}$ of $T$ jobs of size $\tilde{\gamma}$ each, and a set
$\mathcal{L}$ of $\Lambda/\eps$ jobs of size $\eps$ each; any job can be assigned to any machine. 
An optimal solution to this instance will assign the jobs from $\mathcal{T}$ alone on $T$ machines, and distribute the jobs from
$\mathcal{L}$ evenly on the rest of the machines (i.e., these machines will all receive $\frac{\tilde{\lambda}}{\eps}$ jobs of size
$\eps$ each). The fact that this is an optimal solution follows in a straightforward manner from the following two observations:
\begin{itemize}
 \item A solution which assigns a job from $\mathcal{L}$ on the same machine as a job from $\mathcal{T}$ is sub-optimal: indeed, the
average makespan is less than $\tilde{\gamma}$ (in fact, it is exactly $\tilde{t}\tilde{\gamma}+\tilde{\lambda}$, which is at most
$\tilde{\gamma}$, due to the fact that $t^\star\gamma^\star+\lambda^\star < \gamma^\star$ and due to the intervals which we choose
$\tilde{t}$, $\tilde{\gamma}$ and $\tilde{\lambda}$ from), which implies that we can always reassign such a job to a machine with smaller
makespan, thus decreasing the solution cost.
\item Similarly, in any optimal solution, jobs from $\mathcal{T}$ are not assigned on the same machine.
\end{itemize}

Now, consider the Configuration-LP solution $y_\eps$ which assigns to every machine a configuration which consists of a single job from
$\mathcal{T}$ (i.e., of cost $\tilde{\gamma}^2$) with probability $\tilde{t}$, and a configuration which consists of
$\frac{\tilde{\lambda}}{(1-\tilde{t})\eps}$ jobs  from $\mathcal{L}$ 
(i.e., of cost 
$\sum_{1\leq i\leq\frac{\tilde{\lambda}}{(1-\tilde{t})\eps}}\sum_{1\leq j<i}\eps^2=
\frac{\tilde{\lambda}^2}{2(1-\tilde{t})^2}+\frac{\tilde{\lambda}}{2(1-\tilde{t})}\eps$)
with probability $1-\tilde{t}$; clearly, the
cost of this LP solution is equal to that of any optimal integral solution 
(in fact, the LP solution is a convex combination of all integral optimal solutions). Furthermore, this LP solution is optimal (one
can see this by applying the reasoning we used for the integral optimum to all the configurations in the support of a fractional solution).

Algorithm \ref{alg:round} will assign to any machine a configuration which consists of a single job from
$\mathcal{T}$ and $\tilde{\lambda}/\eps$ jobs from $\mathcal{L}$ 
(i.e., of cost 
$\tilde{\gamma}(\tilde{\gamma}+\tilde{\lambda})+
\sum_{1\leq i\leq \frac{\tilde{\lambda}}{\eps}}\sum_{1\leq j<i} \eps^2= 
\tilde{\gamma}(\tilde{\gamma}+\tilde{\lambda})+\frac{\tilde{\lambda}^2}{2}+\frac{\tilde{\lambda}}{2}\eps$) 
with probability $\tilde{t}$, and a configuration which consists of
$\tilde{\lambda}/\eps$ jobs from $\mathcal{L}$ 
(i.e., of cost $\sum_{1\leq i\leq \frac{\tilde{\lambda}}{\eps}}\sum_{1\leq j<i}
\eps^2=\frac{\tilde{\lambda}^2}{2}+\frac{\tilde{\lambda}}{2}\eps$)
with probability $1-\tilde{t}$. To see this, first observe that every machine has a total fractional assignment of jobs from
$\mathcal{T}$ equal to $\tilde{t}$, and a total fractional assignment of jobs from $\mathcal{L}$ equal to $\frac{\tilde{\lambda}}{\eps}$.
Therefore, the first bucket created by Algorithm \ref{alg:round} for any machine will contain a $\tilde{t}$-fraction of $\mathcal{T}$-jobs
and an $(1-\tilde{t})$-fraction of $\mathcal{L}$-jobs, and the rest of the buckets will be filled up with $\mathcal{L}$-jobs (since
$\frac{\tilde{\lambda}}{\eps}$ is an integer, the last bucket will be filled up to a $\tilde{t}$-fraction). This implies that, in a
worst-case output distribution, with probability $\tilde{t}$ any machine receives a $\mathcal{T}$-job and $\mathcal{L}$-jobs of total size
$\tilde{\lambda}$, and with probability $(1-\tilde{t})$ it receives $\mathcal{L}$-jobs of total size $\tilde{\lambda}$.

Now, the ratio of the expected cost of the returned solution
to the LP cost, for any machine, is then
\ifsoda
	\begin{align*}
	\frac{\tilde{t} \tilde{\gamma}^2 + \tilde{t} \tilde{\gamma} \tilde{\lambda}+ \frac{\tilde{\lambda}^2}{2}+
	\frac{\tilde{\lambda}}{2}\eps}{\tilde{t} \tilde{\gamma}^2 +
	\frac{\tilde{\lambda}^2}{2(1-\tilde{t})}+\frac{\tilde{\lambda}}{2}\eps}
	&\ge
	\frac{\tilde{t} \tilde{\gamma}^2 + \tilde{t} \tilde{\gamma}\tilde{\lambda}+ \frac{\tilde{\lambda}^2}{2}}{
	\tilde{t} \tilde{\gamma}^2 + \frac{\tilde{\lambda}^2}{2(1-\tilde{t})}+\tilde{\lambda}\frac{\eps}{2}}
	\\&=
	h(\tilde{t},\tilde{\gamma},\tilde{\lambda})\frac{\tilde{t} \tilde{\gamma}^2 +
	\frac{\tilde{\lambda}^2}{2(1-\tilde{t})}}{\tilde{t} \tilde{\gamma}^2 +
	\frac{\tilde{\lambda}^2}{2(1-\tilde{t})}+\frac{\tilde{\lambda}}{2}\eps}
	\end{align*}
\else
	\begin{align*}
	\frac{\tilde{t} \tilde{\gamma}^2 + \tilde{t} \tilde{\gamma} \tilde{\lambda}+ \frac{\tilde{\lambda}^2}{2}+
	\frac{\tilde{\lambda}}{2}\eps}{\tilde{t} \tilde{\gamma}^2 +
	\frac{\tilde{\lambda}^2}{2(1-\tilde{t})}+\frac{\tilde{\lambda}}{2}\eps}
	\ge
	\frac{\tilde{t} \tilde{\gamma}^2 + \tilde{t} \tilde{\gamma}\tilde{\lambda}+ \frac{\tilde{\lambda}^2}{2}}{
	\tilde{t} \tilde{\gamma}^2 + \frac{\tilde{\lambda}^2}{2(1-\tilde{t})}+\tilde{\lambda}\frac{\eps}{2}}
	=
	h(\tilde{t},\tilde{\gamma},\tilde{\lambda})\frac{\tilde{t} \tilde{\gamma}^2 +
	\frac{\tilde{\lambda}^2}{2(1-\tilde{t})}}{\tilde{t} \tilde{\gamma}^2 +
	\frac{\tilde{\lambda}^2}{2(1-\tilde{t})}+\frac{\tilde{\lambda}}{2}\eps}
	\end{align*}
\fi
which is at least $\frac{1+\sqrt{2}}{2}-\delta$ if we pick $\eps$ and $\eta$ small enough; since the cost of the LP solution is equal
to
that of any optimal integral solution, the claim follows.
\end{proof}

It is interesting to note that, given the instance and LP solution from the proof of Lemma~\ref{lem:tight}, any random assignment produced by Algorithm~\ref{alg:round} will
assign the same amount of small jobs to all the machines, while it will assign a large job to a $\tilde{t}$-fraction of the machines.
Therefore derandomizing Algorithm \ref{alg:round} by picking the {\em best possible} matching (instead of picking one at random) will
not improve upon the $\frac{1+\sqrt{2}}{2}$ ratio.

Theorem\nobreakspace \ref {thm:maintech} and\nobreakspace Lemma\nobreakspace \ref {lem:tight} together imply Theorem\nobreakspace \ref{thm:main}.


\section{Integrality Gap Lower Bound}\label{app:conflp-integrality-gap}
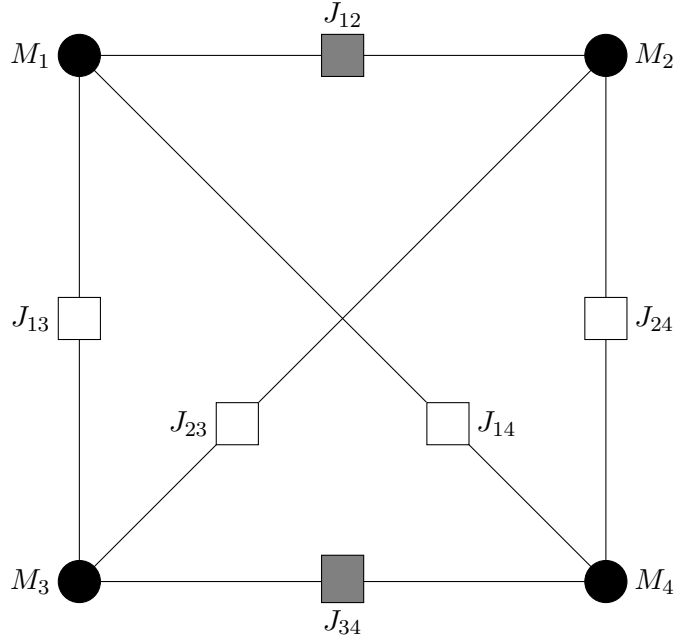
\begin{figure*}[t!]
\begin{minipage}[t]{\linewidth}
    \begin{tikzpicture}[scale=0.7]
     \node at (-7,0) {};
\draw (0,0) -- (10,0);
\draw (0,10) -- (10,10);
\draw (0,0) -- (0,10);
\draw (10,0) -- (10,10);
\draw (0,0) -- (10,10);
\draw (10,0) -- (0,10);
     
\filldraw  (0,0) circle (.4);
\filldraw  (0,10) circle (.4);
\filldraw  (10,0) circle (.4);
\filldraw  (10,10) circle (.4);

\filldraw[draw=black,fill=gray] (4.6,-.4) rectangle (5.4,.5);
\filldraw[draw=black,fill=gray] (4.6,9.6) rectangle (5.4,10.4);
\filldraw[draw=black,fill=white] (-.4,4.6) rectangle (.4,5.4);
\filldraw[draw=black,fill=white] (9.6,4.6) rectangle (10.4,5.4);
\filldraw[draw=black,fill=white] (2.6,2.6) rectangle (3.4,3.4);
\filldraw[draw=black,fill=white] (6.6,2.6) rectangle (7.4,3.4);

\node[anchor=east,xshift=-7] at (0,0) {$M_3$};
\node[anchor=east,xshift=-7] at (0,10) {$M_1$};
\node[anchor=west,xshift=7] at (10,0) {$M_4$};
\node[anchor=west,xshift=7] at (10,10) {$M_2$};
\node[anchor=south,yshift=7] at (5,10) {${J_{12}}$};
\node[anchor=north,yshift=-7] at (5,0) {${J_{34}}$};
\node[anchor=east,xshift=-7] at (0,5) {${J_{13}}$};
\node[anchor=west,xshift=7] at (10,5) {${J_{24}}$};
\node[anchor=east,xshift=-7] at (3,3) {${J_{23}}$};
\node[anchor=west,xshift=7] at (7,3) {${J_{14}}$};
     
     \end{tikzpicture}
    \end{minipage}%
          \caption{The $\frac{13}{12}$-integrality gap instance. In this picture, black circles correspond to machines, gray boxes
correspond to jobs of size 3, and white boxes correspond to jobs of size 1. An edge between a circle and a box means that the corresponding
job can be assigned to the corresponding machine.}
    
    \label{fig:igap}

\end{figure*}

First of all, observe that Theorem \ref{thm:main}, apart from establishing the existence of a 1.21-approximation algorithm for
$R | | \sum_j p_j C_j$, also implies an upper bound on the integrality gap of its Configuration-LP. Hence, we accompany our
main result with a lower bound on the integrality gap of the Configuration-LP for $R | | \sum_j p_j C_j$:
\ifsoda
	\begin{theorem}[Theorem\nobreakspace \ref *{thm:gap}]
		\integralitygapconflp
	\end{theorem}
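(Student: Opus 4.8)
The plan is to analyze the explicit instance depicted in Figure~\ref{fig:igap}. It has four machines $M_1,M_2,M_3,M_4$ and six jobs, one for each unordered pair: job $J_{ab}$ may be processed only on $M_a$ or on $M_b$. The two jobs $J_{12},J_{34}$ have size $3$, and the four jobs $J_{13},J_{14},J_{23},J_{24}$ have size $1$. I will show that the Configuration-LP admits a feasible solution of value at most $24$ while every integral schedule costs at least $26$; since $26/24 = 13/12 > 1.08$, this proves the theorem.

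For the upper bound on the LP value, I would write down the following fractional solution: on every machine $i$, use with probability $\tfrac12$ the configuration consisting of $i$'s unique incident size-$3$ job, and with probability $\tfrac12$ the configuration consisting of $i$'s two incident size-$1$ jobs. A direct check shows that each job is then fractionally assigned to total extent exactly $1$ and each machine is used to extent $1$, so the solution is feasible. Using $\cost(\{3\})=9$ and $\cost(\{1,1\})=3$ (two unit jobs finishing at times $1$ and $2$), each machine contributes $\tfrac12(9+3)=6$, so the LP optimum is at most $24$.

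For the lower bound on the integral optimum, note first that $J_{12}$ can go only on $M_1,M_2$ and $J_{34}$ only on $M_3,M_4$, so in any integral schedule exactly two machines carry a size-$3$ job; call these the \emph{heavy} machines and the other two the \emph{light} machines. Since the automorphism group of the instance acts transitively on the four ways of choosing the heavy pair, I may assume without loss of generality that $M_1,M_3$ are heavy and $M_2,M_4$ are light. Letting $n_i$ be the number of size-$1$ jobs placed on $M_i$, the total cost is $g(n_1)+g(n_3)+h(n_2)+h(n_4)$, where $g(0),g(1),g(2)=9,13,18$ (a size-$3$ job together with some unit jobs) and $h(0),h(1),h(2),h(3),h(4)=0,1,3,6,10$ (only unit jobs). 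One has the constraints $n_1+n_2+n_3+n_4=4$ and $n_1+n_3\ge 1$, the latter because $J_{13}$ is a unit job and both of its endpoints $M_1,M_3$ are heavy. Both $g$ and $h$ are convex with increasing marginals ($4,5,\dots$ and $1,2,3,4$ respectively), so the minimum is obtained by putting exactly one unit job on a heavy machine (cost $+4$) and spreading the other three over the two light machines in a $2/1$ split (cost $+4$): the total is $9+9+4+4=26$, and this is achieved (e.g. $J_{13}\to M_1$, $J_{14}\to M_4$, $J_{23},J_{24}\to M_2$). Hence every integral schedule costs at least $26$, and the integrality gap is at least $26/24=13/12>1.08$.

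The only points requiring care are the convexity/marginal-cost argument in the last paragraph — one must verify that no distribution of the four unit jobs beats the $18+4+4$ bound — and the reduction to a single case via the symmetry of $K_4$; everything else (the values of $\cost$ on the small configurations and the feasibility check of the fractional solution) is routine arithmetic.
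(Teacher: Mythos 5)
Your proof is correct and follows essentially the same approach as the paper: the same four-machine instance, the same half-and-half fractional LP solution of cost $24$, and the same structure for the integral lower bound (fix where the two size-$3$ jobs go, observe that one unit job is forced onto a heavy machine, and minimize the unit-job cost on the light machines to get $26$). Your version is just a slightly more formal rewrite of the paper's counting argument, using the explicit cost functions $g,h$ and convexity in place of the paper's direct accounting of $9+9+4$ plus $1+1+2$.
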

\else
	\begin{artificialtheorem}[Theorem\nobreakspace \ref *{thm:gap}.]
		\integralitygapconflp
	\end{artificialtheorem}
\fi

\begin{proof}
Consider the following instance on $4$ machines $M_1, M_2, M_3, M_4$: for every pair $\{M_i,M_j\}$ of machines there is one job $J_{ij}$
which can be processed only on these two machines. Jobs $J_{12}$ and $J_{34}$ are \emph{large}: they have weight and size $3$, while the
other four jobs are \emph{small} and have weight and size $1$. (See Figure\nobreakspace \ref {fig:igap} for an illustration.)

First we show that any integral schedule has cost at least $26$. Without loss of generality, the large job $J_{12}$ is assigned to machine
$M_1$ and the other large job $J_{34}$ is assigned to machine $M_3$. The small job $J_{13}$ must also be assigned to one of them, say to
$M_1$. This costs $9+9+4=22$. The remaining three small jobs $J_{12}$, $J_{14}$ and $J_{24}$ cannot all be assigned to distinct machines
with zero makespan (since only $M_2$ and $M_4$ are such), so they will incur at least $1+1+2=4$ units of cost.

On the other hand, the Configuration-LP has a solution of cost $24$. Namely, it can assign to each machine $M_i$ two configurations, each
with fractional value $\frac 12$: the singleton large job that can be processed on that machine, or the two small jobs that can be processed
on that machine. Then each job is processed with fractional value $\frac 12$ by each of the two machines that can process it. The cost is $4
\cdot (\frac 12 \cdot 9 + \frac 12 \cdot (1 + 2)) = 24$. Thus the integrality gap is at least $\frac{26}{24} = \frac{13}{12} > 1.08$.
\end{proof}


\balance
\bibliographystyle{alphaurl}
\bibliography{references}

\begin{thebibliography}{KMPS09}

\bibitem[BCS74]{BrunoCoffman74}
J.~Bruno, E.G. Coffman, Jr., and R.~Sethi.
\newblock Scheduling independent tasks to reduce mean finishing time.
\newblock {\em Comm. ACM}, 17:382--387, 1974.

\bibitem[BSS16]{BansalSS16}
Nikhil Bansal, Aravind Srinivasan, and Ola Svensson.
\newblock Lift-and-round to improve weighted completion time on unrelated
  machines.
\newblock In {\em Proceedings of the 48th Annual {ACM} {SIGACT} Symposium on
  Theory of Computing, {STOC} 2016, Cambridge, MA, USA, June 18-21, 2016},
  pages 156--167, 2016.
\newblock URL: \url{http://arxiv.org/abs/1511.07826}, \href
  {http://dx.doi.org/10.1145/2897518.2897572}
  {\path{doi:10.1145/2897518.2897572}}.

\bibitem[Chu99]{Chudak99}
F.~A. Chudak.
\newblock A min-sum 3/2-approximation algorithm for scheduling unrelated
  parallel machines.
\newblock {\em Journal of Scheduling}, 2(2):73--77, 1999.

\bibitem[GLLK79]{Graham79}
Ronald~L Graham, Eugene~L Lawler, Jan~Karel Lenstra, and AHG~Rinnooy Kan.
\newblock Optimization and approximation in deterministic sequencing and
  scheduling: a survey.
\newblock {\em Annals of discrete mathematics}, 5:287--326, 1979.

\bibitem[GLS93]{GroetschelLovaszSchrijver1993}
Martin Gr{\"o}tschel, L{\'a}szlo Lov{\'a}sz, and Alexander Schrijver.
\newblock {\em {Geometric Algorithms and Combinatorial Optimization}}, volume~2
  of {\em Algorithms and Combinatorics}.
\newblock Springer, 1993.

\bibitem[Hor73]{Horn73}
W.A. Horn.
\newblock Minimizing average flow time with parallel machines.
\newblock {\em Oper. Res.}, 21:846--847, 1973.

\bibitem[HSW01]{HoogeveenSW01}
Han Hoogeveen, Petra Schuurman, and Gerhard~J. Woeginger.
\newblock Non-approximability results for scheduling problems with minsum
  criteria.
\newblock {\em {INFORMS} Journal on Computing}, 13(2):157--168, 2001.
\newblock URL: \url{http://dx.doi.org/10.1287/ijoc.13.2.157.10520}, \href
  {http://dx.doi.org/10.1287/ijoc.13.2.157.10520}
  {\path{doi:10.1287/ijoc.13.2.157.10520}}.

\bibitem[KK86]{DBLP:journals/siamcomp/KawaguchiK86}
Tsuyoshi Kawaguchi and Seiki Kyan.
\newblock Worst case bound of an {LRF} schedule for the mean weighted flow-time
  problem.
\newblock {\em {SIAM} J. Comput.}, 15(4):1119--1129, 1986.
\newblock URL: \url{http://dx.doi.org/10.1137/0215081}, \href
  {http://dx.doi.org/10.1137/0215081} {\path{doi:10.1137/0215081}}.

\bibitem[KMPS09]{KumarMPS09}
V.~S.~Anil Kumar, Madhav~V. Marathe, Srinivasan Parthasarathy, and Aravind
  Srinivasan.
\newblock A unified approach to scheduling on unrelated parallel machines.
\newblock {\em J. ACM}, 56(5):28:1--28:31, 2009.

\bibitem[Sku01]{Skutella01}
Martin Skutella.
\newblock Convex quadratic and semidefinite programming relaxations in
  scheduling.
\newblock {\em J. {ACM}}, 48(2):206--242, 2001.
\newblock URL: \url{http://doi.acm.org/10.1145/375827.375840}, \href
  {http://dx.doi.org/10.1145/375827.375840} {\path{doi:10.1145/375827.375840}}.

\bibitem[Smi56]{Smith56}
Wayne~E. Smith.
\newblock Various optimizers for single-stage production.
\newblock {\em Naval Research Logistics Quarterly}, 3(1-2):59--66, 1956.
\newblock URL: \url{http://dx.doi.org/10.1002/nav.3800030106}, \href
  {http://dx.doi.org/10.1002/nav.3800030106}
  {\path{doi:10.1002/nav.3800030106}}.

\bibitem[SS99]{SethuramanS99}
Jay Sethuraman and Mark~S. Squillante.
\newblock Optimal scheduling of multiclass parallel machines.
\newblock In {\em Proceedings of the Tenth Annual {ACM-SIAM} Symposium on
  Discrete Algorithms, 17-19 January 1999, Baltimore, Maryland.}, pages
  963--964, 1999.
\newblock URL: \url{http://dl.acm.org/citation.cfm?id=314500.314948}.

\bibitem[SS02]{schulz2002scheduling}
Andreas~S Schulz and Martin Skutella.
\newblock Scheduling unrelated machines by randomized rounding.
\newblock {\em SIAM Journal on Discrete Mathematics}, 15(4):450--469, 2002.

\bibitem[ST93]{ShmoysT93}
David~B. Shmoys and {\'{E}}va Tardos.
\newblock An approximation algorithm for the generalized assignment problem.
\newblock {\em Math. Program.}, 62:461--474, 1993.
\newblock URL: \url{http://dx.doi.org/10.1007/BF01585178}, \href
  {http://dx.doi.org/10.1007/BF01585178} {\path{doi:10.1007/BF01585178}}.

\bibitem[SW99a]{schuurman1999polynomial}
Petra Schuurman and Gerhard~J Woeginger.
\newblock Polynomial time approximation algorithms for machine scheduling: Ten
  open problems.
\newblock {\em Journal of Scheduling}, 2(5):203--213, 1999.

\bibitem[SW99b]{SkutellaW99}
Martin Skutella and Gerhard~J. Woeginger.
\newblock A {PTAS} for minimizing the weighted sum of job completion times on
  parallel machines.
\newblock In {\em Symposium on Theory of Computing, {STOC}}, pages 400--407,
  1999.

\bibitem[SW13]{SviridenkoW13}
Maxim Sviridenko and Andreas Wiese.
\newblock Approximating the {Configuration-LP} for minimizing weighted sum of
  completion times on unrelated machines.
\newblock In {\em Integer Programming and Combinatorial Optimization - 16th
  International Conference, {IPCO} 2013, Valpara{\'{\i}}so, Chile, March 18-20,
  2013. Proceedings}, pages 387--398, 2013.
\newblock URL: \url{http://dx.doi.org/10.1007/978-3-642-36694-9_33}, \href
  {http://dx.doi.org/10.1007/978-3-642-36694-9_33}
  {\path{doi:10.1007/978-3-642-36694-9_33}}.

\end{thebibliography}


\end{document}